% main4: Jan 6th 2017 by HI
% main5: Jan 8th 2017 by HI  --> longer version for arxiv
%
%\documentclass[preprint]{elsarticle}
%\documentclass[nameyear,twocolumn]{autart}
\documentclass[twocolumn]{autart}

\usepackage{amssymb,amsmath,graphicx,epstopdf,color,picins,amsfonts,cite,harvard}
%\usepackage{natbib}
%\usepackage[style=authoryear]{}
%\usepackage{csquotes}% Recommended

%\usepackage[style=authoryear-ibid,backend=biber]{biblatex}
%\addbibresource{main2.bib}
%\usepackage{enumerate,subfigure}
%\usepackage{subfigure}
%\usepackage{graphicx,subfigure}
%\usepackage[thinlines,thicklines]{easybmat} 
%\usepackage[dvips]{graphicx}
%\usepackage{graphicx,subfigure}
%\usepackage{mathtools}
%\DeclarePairedDelimiter\abs{\lvert}{\rvert}
%\usepackage{hyphenat}
%\usepackage{epstopdf}
%\usepackage{transparent}
%\usepackage{amsthm}
%\usepackage[colorinlistoftodos]{todonotes}
%\usepackage{algorithm}
%\usepackage{algpseudocode}
%\usepackage[]{natbib}
%\bibliographystyle{agsm}
\newtheorem{theorem}{Theorem}[section]
\newtheorem{remark}[theorem]{Remark}
\newtheorem{corollary}[theorem]{Corollary}

\newtheorem{lemma}[theorem]{Lemma}
\newtheorem{proposition}[theorem]{Proposition}
\newtheorem{definition}[theorem]{Definition}

\newenvironment{proof}{\textbf{Proof. }}{\hfill{$\square$}}
\newcommand{\abs}[1]{\lvert#1\rvert}
\begin{document}
	
%\sloppy	
\begin{frontmatter} 
\title{%\LARGE \bf 
Resilient Consensus of Second-Order Agent Networks: \\[.5mm]
Asynchronous Update Rules with Delays}
		
%\thanks{This work was supported in part by the JST-CREST Program.}
\thanks[footnoteinfo]{This work was supported in part 
by the Japan Science and Technology Agency under the EMS-CREST program
and by JSPS under Grant-in-Aid for Scientific Research Grant No.~15H04020.}

\author[mit]{Seyed Mehran Dibaji}\ead{dibaji@mit.edu}~and~
\author[tit]{Hideaki Ishii}\ead{ishii@c.titech.ac.jp}

\address[mit]{Department of Mechanical Engineering, Massachusetts Institute of Technology, Cambridge, MA 02139, USA}
	
\address[tit]{%
Department of Computer Science,Tokyo Institute of Technology, Yokohama 226-8502, Japan}

\begin{abstract}
We study the problem of resilient consensus of 
sampled-data multi-agent networks with double-integrator dynamics.
The term resilient points to algorithms considering the presence of attacks by faulty/malicious agents in
the network. Each normal agent updates its state based on a
predetermined control law using its neighbors' information 
which may be delayed
while misbehaving agents make updates arbitrarily and might 
threaten the consensus within the network. Assuming that the 
maximum number of malicious agents in the system is known, 
we focus on algorithms where each normal agent ignores large and small position values among its
neighbors to avoid being influenced by malicious agents. The malicious agents are assumed to be omniscient in that they know the updating times and delays and can collude with each other.
% whether or not they have links in the network. 
%We address this problem in both synchronous 
We deal with both synchronous 
and {\color{black} partially} asynchronous cases 
with delayed information and derive topological conditions in terms of graph robustness. 
\vspace*{-2mm}
\end{abstract}
\begin{keyword} 
Multi-agent Systems; Cyber-security; Consensus Problems
\end{keyword} 
\end{frontmatter}

\section{Introduction}

In recent years, much attention has been devoted to the study of networked control systems with an emphasis on cyber security.
Due to communications through shared networks, there are many vulnerabilities
for potential attacks, which can result in irreparable damages. Conventional control approaches are often not applicable for resiliency against such unpredictable 
but probable misbehaviors in networks (e.g., \cite{security_csm:15}). 
One of the most essential problems in networked 
multi-agent systems is consensus where 
agents interact locally to achieve the global goal of reaching a common 
value. Having a wide variety of applications in UAV formations, 
sensor networks, power systems, and so on, consensus problems have been studied extensively 
\cite{mesbahi,renbook2}. 
\textit{Resilient consensus} points to the case where some agents 
in the network anonymously try to mislead the others or are subject 
to failures. Such malicious agents do not comply with the predefined 
interaction rule and might even prevent the normal agents from 
reaching consensus.
This type of problems has a rich history in distributed 
algorithms in the area of computer science (see, e.g., \cite{lynch}) 
where the agents' values are often discrete and finite. 
It is interesting that 
%it sometimes is critical to employ algorithms with randomization; 
randomization sometimes play a crucial role;
see also \cite{MotRag:95,TemIsh,dibajiishiitempoACC2016}.

In such problems, the non-faulty agents cooperate by interacting locally with each other 
to achieve agreement. There are different techniques to mitigate the effects of attacks.
%in multi-agent systems. 
In some solutions, each agent has a bank of observers to identify the faulty agents within the network using their past information. Such solutions are formulated as a kind of fault detection and isolation problems \cite{pasq,shames,sundaram}. 
However, identifying the malicious agents can be challenging and 
requires much information processing at the agents. 
%especially if distributed implementation is required. 
In particular, these techniques usually necessitate each agent to know 
the topology of the entire network. This global information typically is not desirable in distributed algorithms. %It is shown that to 
To overrule the effects of $f$ malicious agents, 
the network has to be at least $(2f + 1)$-connected.
% \cite{pasq,sundaram}. 

There is another class of algorithms for resilient consensus 
where each normal agent disregards the most 
deviated agents in the updates.
In this case, 
%the normal agents are not equipped with observers and 
%do not even have any memory to store the
%neighbors' past behaviors. Instead, 
they simply neglect the information received from suspicious 
agents or those with unsafe values whether or not they 
are truly misbehaving. This class of algorithms has been 
extensively used in computer 
science \cite{Azadmanesh2002,Azevedo,zohir,lynch,plunkett,Vaidya} 
as well as control 
\cite{dibajiishiiSCL2015,dibajiishiiNecSys2015,LeBlancPaper,zhang1};
see also \cite{Feng,khanafer} for related problems.
They are often called Mean Subsequence Reduced (MSR) algorithms,
which was coined in \cite{Azadmanesh1993}.
Until recently, this strategy had been studied mostly in the case where the agent networks form complete graphs. The authors of \cite{zhang1} have given a thorough study for the non-complete case and have shown that the traditional connectivity measure is not adequate for MSR-type algorithms to achieve resilient consensus. They then introduced a new %connectivity 
notion called graph robustness. We note that most of these works have dealt with
single-integrator and synchronous agent networks. 

In this paper, we consider agents having second-order dynamics, which is a common model for
autonomous mobile robots and vehicles. 
%Secure methods for robot networks is emerging as a crucial concern \cite{Cryptobotics}. 
Such applications in fact provide motivations different from those in computer science as we will see.
%; moreover, robots have been a motivation there as well, but with simpler first-order models in, e.g., \cite{zohir}.
%Here, each agent makes updates based on the current/past positions and velocities of itself and its neighbors, and applies the control through its acceleration. 
%In our previous papers \cite{dibajiishiiACC2015,dibajiishiiCDC2015,dibajiishiiSCL2015},
In our previous paper \cite{dibajiishiiSCL2015},
an MSR-type algorithm has been applied to sampled-data second-order agent networks. We have considered the problem of resilient consensus when each agent is affected by at most $f$ malicious agents among its neighbors. Such a model is called $f$-local malicious.
%\cite{zhang1,Litsas}. 
We have established a sufficient condition
on the underlying graph structure to reach consensus.
%for arbitrary initial values and scenarios
%of malicious agents by the MSR-type algorithm.
%The condition 
It is stated in terms of graph robustness
and is consistent with the result in
\cite{zhang1} for the first-order agent case.

Here, the focus of our study is on the so-called $f$-total model,
where the total number of faulty agents is at most $f$,
which has been dealt with in, e.g., \cite{Azadmanesh2002,zohir,Azadmanesh1993,LeBlancPaper,zhang1,lynch,Vaidya}.
We derive a necessary and sufficient condition to achieve
resilient consensus by an MSR-like algorithm. Again,
we show that graph robustness in the network is the relevant
notion. However, the $f$-total model assumes fewer malicious
agents in the system, and hence, the condition will
be shown to be less restrictive than that for the $f$-local case.
The works \cite{zohir,Azadmanesh1993,lynch,Vaidya} have 
studied this model for the first-order agents case, but based on 
the Byzantine malicious agents, which are allowed to send 
different values to their neighbors. 
%There, the condition for resilient consensus is that 
%the interactions of the normal agents form a robust graph.
%We note that 
Such attacks may be impossible, e.g., if the measurements 
are made by on-board sensors in mobile robots. 

Under the $f$-total model,
we solve the resilient consensus problem using MSR-type algorithms
for two different updating rules: Synchronous and partially asynchronous\footnote{%
The term \textit{partially} asynchronous refers to the case where agents share some
level of synchrony by having the same sampling times; however, they make updates at different times
based on delayed information \cite{Bertsekas}. 
This is in contrast to the \textit{fully} asynchronous case where 
agents must be facilitated with their own clocks; such settings are
studied in, e.g., \cite{JiahuHirche}.
}.
In the synchronous case, all agents simultaneously make updates
at each time step using the current information of their neighbors.
By contrast, in the asynchronous case, normal agents may decide to
update only occasionally and moreover, the neighbors' data may
be delayed. This is clearly a more vulnerable situation,
allowing the adversaries to take advantage by quickly moving around.
We consider the worst-case scenarios where the malicious agents
are aware of the updating times and even the delays in the information
of normal agents. The normal agents on the other hand
are unaware of the updating times of their neighbors and hence
cannot predict the plans of adversaries. 
For both cases,
we develop graph robustness conditions for the overall network
topologies. It will be shown that the synchronous updating rules
require less connectivity than the asynchronous counterpart;
see also \cite{dibajiishiiNecSys2015} regarding corresponding 
results for first-order agent systems. 

The main features of this work are three-fold:
(i)~We deal with second-order agents, which
are more suitable for modeling networks of vehicles, but
exhibit more complicated dynamics
in comparison to the single-order case.
(ii)~For the malicious agents, we consider the $f$-total model, which is
less stringent than the $f$-local case, but the analysis is more involved.
(iii)~In the asynchronous case with delayed information,
we introduce a new update scheme,
which is more natural in view of the current research in the area
of multi-agent systems than those 
based on the so-called rounds, 
commonly employed in computer science as we discuss later. 
%employed in computer science as we discuss later. 
%\cite{Azadmanesh2002,Azadmanesh1993,LeBlancPaper,Vaidya}.

The paper is organized as follows.  
Section~\ref{sect:problemsetting} presents preliminaries for 
introducing the problem setting.
Section~\ref{sect:SynchSystems} focuses on resilient consensus based on 
synchronous update rules. 
% with no delayed data. 
%under the $f$-total malicious model. 
Section~\ref{sect: asynchsystems} is devoted to the problem 
%in which the agents make updates 
of partial asynchrony with delayed information.
%For both cases, conditions to reach resilient consensus
%are derived in terms of the network topology. 
We illustrate the results
through a numerical example in Section~\ref{sect: simulations}. 
Finally, Section~\ref{sect: conclusion} concludes the paper. 
The material of this paper %Sections~\ref{sect:SynchSystems} and \ref{sect: asynchsystems} 
appears in \cite{dibajiishiiACC2015,dibajiishiiCDC2015} in 
preliminary forms; here, we present improved results with 
full proofs and more discussions.

\section{Problem Setup}\label{sect:problemsetting}

\subsection{Graph Theory Notions}\label{graphnot}
%The interaction among the agents are modeled by a graph. 
We recall some concepts on graphs \cite{mesbahi}.
A directed graph (or digraph) with $n$ nodes $(n> 1)$ is defined 
as $\mathcal{G}=(\mathcal{V},\mathcal{E})$ with the node set 
$\mathcal{V}=\{1,\ldots,n\}$ and the edge set 
$\mathcal{E}\subseteq \mathcal{V}\times\mathcal{V}$.
%, and the adjacency matrix $A\in \mathbb{R}^{n\times n}$.
The edge $(j,i)\in \mathcal{E}$ means that node $i$ has access 
to the information of node $j$. 
%and is called an incoming link to node $i$. 
%The edges in the form $(i,i)$ might exist in the network and they are called self-loops.  
{\color{black}If $\mathcal{E}=\{(i,j):\,i,j\in \mathcal{V},~i \neq j\}$, the graph is said to be complete.}
For node $i$, the set of its neighbors, 
denoted by $\mathcal{N}_i=\{j:(j,i)\in \mathcal{E}\}$,
consists of all nodes having directed edges toward $i$.
The degree of node $i$ is the number of its neighbors and 
is denoted by ${d}_i=\abs{\mathcal{N}_i}$.  
The adjacency matrix $A=[a_{ij}]$ is given by $a_{ij}\in[\gamma,1)$ 
if $(j,i)\in \mathcal{E}$ and 
otherwise $a_{ij}=0$, where $\gamma > 0$ is a fixed lower bound. We assume that $\sum_{j=1,j\neq i}^{n} a_{ij} \leq 1$.
%and $a_{ii}>0~\forall i$. 
%The condition $\sum_{j=1,j\neq i}^{n} a_{ij} \leqslant 1$ is not a limitation since if $\sum_{j=1,j\neq i}^{n} a_{ij}=w > 1$, they can easily be divided by $w$.
Let $L=[{l}_{ij}]$ be the Laplacian matrix of $\mathcal{G}$, whose entries 
are defined as $l_{ii}=\sum_{j=1,j\neq i}^{n}a_{ij}$ 
and $l_{ij}=-a_{ij},~i\neq j$;
%From this definition, 
we can see that the sum of the elements of each row 
of $L$ is zero. 

A path from node $v_1$ to $v_p$ is 
a sequence $(v_1,v_2,\ldots,v_p)$ in which 
$(v_i,v_{i+1}) \in \mathcal{E}$ for $i=1, \ldots,p-1$. 
%In the directed graph $\mathcal{G}$, 
If there is a path between each pair of nodes, 
the graph is said to be strongly connected. %\cite{mesbahi}. 
%The connectivity %(or the vertex connectivity) 
%$K(\mathcal{G})$ of the graph 
%$\mathcal{G}$ is the minimum number of nodes such that the graph formed 
%by removing the nodes and all edges associated with the nodes 
%is not strongly connected. The graph is said to be 
%$\kappa$-connected if $K(\mathcal{G}) \geq \kappa$. 
A directed graph is said to have a directed spanning tree if
there is a node from which there is a path to every other node 
in the graph. 
%We use the terms node and agent interchangeably.
%Undirected graphs can be studied as digraphs with two directed 
%edges instead of one directed edge. 

For the MSR-type resilient consensus algorithms,
the critical topological notion is graph robustness, 
which is a connectivity measure of graphs. 
Robust graphs were introduced in \cite{zhang1} 
for the analysis of resilient consensus of first-order 
multi-agent systems. 

\begin{definition}\label{(def:robustgraph)}\rm
The digraph $\mathcal{G}$ is $(r,s)$-robust $(r,s<n)$ if 
for every pair of nonempty disjoint subsets 
$\mathcal{S}_1,\mathcal{S}_2 \subset \mathcal{V}$, 
at least one of the following conditions is satisfied:
\[
{\color{black}
1.~\mathcal{X}_{\mathcal{S}_1}^r =\mathcal{S}_1,~~~
2.~\mathcal{X}_{\mathcal{S}_2}^r=\mathcal{S}_2,~~~
3.~\abs{\mathcal{X}_{\mathcal{S}_1}^r} +\abs{\mathcal{X}_{\mathcal{S}_2}^r} \geq s,}
\]
%	\begin{enumerate}
%           \item[1.] $\mathcal{X}_{\mathcal{S}_1}^r =\mathcal{S}_1$,
%	    \item[2.] $\mathcal{X}_{\mathcal{S}_2}^r=\mathcal{S}_2$,
%	    \item[3.] $\abs{\mathcal{X}_{\mathcal{S}_1}^r} +\abs{\mathcal{X}_{\mathcal{S}_2}^r} \geq s$,
%	\end{enumerate}}
where $\mathcal{X}^r_{\mathcal{S}_{\ell}}$ is the set of all nodes in ${\mathcal{S}_{\ell}}$ which have at least $r$ incoming edges from outside of ${\mathcal{S}_{\ell}}$.
In particular, graphs which are $(r,1)$-robust are called $r$-robust.
\end{definition}

The following lemma helps to have a better understanding 
of $(r,s)$-robust graphs \cite{LeBlancPhD}.
%and their relation to $r$-robust ones \cite{LeBlancPhD}.

\begin{lemma}\label{lemma:robust graphs}\rm
	For an $(r,s)$-robust graph $\mathcal{G}$, the following hold:
	\begin{enumerate}
		\item[(i)] $\mathcal{G}$ is $(r',s')$-robust, where $0\leq r'\leq r$ and $1 \leq s'\leq s$, and in particular, it is $r$-robust.
		\item[(ii)] $\mathcal{G}$ is $(r-1,s+1)$-robust.
		\item[(iii)] $\mathcal{G}$ is at least $r$-connected, but an $r$-connected graph is not necessarily $r$-robust. 
		\item[(iv)] $\mathcal{G}$ has a directed spanning tree.
		\item[(v)] $r \leq \lceil n/2 \rceil$. Also, if $\mathcal{G}$ is a complete graph, 
		then it is $(r',s)$-robust for all $0<r'\leq \lceil n/2 \rceil$ and $1 \leq s \leq n$.
%\item[(vi)] The graph $\mathcal{G}'=( \mathcal{V} ,\mathcal{E}_{0} )$ is 
%$(r-p,s)$-robust, when $\mathcal{G}'$ is formed by removing 
%at most $p$ edges from neighbors of each node in $\mathcal{V}$, where $p < r$. 
%		\item[(vii)] The graph $\mathcal{G}'=( \mathcal{V} \cup  \{ v_{0} \}, \mathcal{E} 
%		\cup {\mathcal{E} _{0}} )$, where $v_{0}$ is a node added to 
%		$\mathcal{G}$ and $\mathcal{E} _{0}$ is the edge set related to 
%		$v_{0}$, is $r$-robust if ${d}_{v_{0}} \geq r+s-1$.  
	\end{enumerate}
	Moreover, a graph is $(r,s)$-robust 
	if it is $(r+s-1)$-robust.
\end{lemma}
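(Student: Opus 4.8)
The plan is to rest the whole lemma on one trivial monotonicity fact: a node with at least $r$ in-neighbours outside a set $\mathcal{S}$ a fortiori has at least $r'$ such neighbours when $r'\le r$, so $\mathcal{X}^{r}_{\mathcal{S}}\subseteq\mathcal{X}^{r'}_{\mathcal{S}}$; call a node of $\mathcal{S}$ \emph{$r$-exposed} if it lies in $\mathcal{X}^{r}_{\mathcal{S}}$. Item~(i) then follows directly: for disjoint nonempty $\mathcal{S}_1,\mathcal{S}_2$, if $\mathcal{X}^{r}_{\mathcal{S}_1}=\mathcal{S}_1$ then $\mathcal{S}_1=\mathcal{X}^{r}_{\mathcal{S}_1}\subseteq\mathcal{X}^{r'}_{\mathcal{S}_1}\subseteq\mathcal{S}_1$ gives condition~1 at level $(r',s')$, symmetrically for condition~2, and otherwise $\abs{\mathcal{X}^{r'}_{\mathcal{S}_1}}+\abs{\mathcal{X}^{r'}_{\mathcal{S}_2}}\ge\abs{\mathcal{X}^{r}_{\mathcal{S}_1}}+\abs{\mathcal{X}^{r}_{\mathcal{S}_2}}\ge s\ge s'$; taking $s'=1$ gives $r$-robustness. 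The closing ``Moreover'' then comes by iterating item~(ii): $(r+s-1,1)$-robust $\Rightarrow(r+s-2,2)$-robust $\Rightarrow\cdots\Rightarrow(r,s)$-robust, $s-1$ steps in all, the first index never falling below $r$.

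The substance is item~(ii), which I would prove by contradiction. Suppose disjoint nonempty $\mathcal{S}_1,\mathcal{S}_2$ fail all three $(r-1,s+1)$-conditions, so $\mathcal{X}^{r-1}_{\mathcal{S}_i}\subsetneq\mathcal{S}_i$ and $\abs{\mathcal{X}^{r-1}_{\mathcal{S}_1}}+\abs{\mathcal{X}^{r-1}_{\mathcal{S}_2}}\le s$. Since $\mathcal{X}^{r}_{\mathcal{S}_i}\subseteq\mathcal{X}^{r-1}_{\mathcal{S}_i}\subsetneq\mathcal{S}_i$, conditions~1 and~2 at level $(r,s)$ already fail for this same pair; if moreover $\abs{\mathcal{X}^{r}_{\mathcal{S}_1}}+\abs{\mathcal{X}^{r}_{\mathcal{S}_2}}\le s-1$, that pair contradicts $(r,s)$-robustness outright. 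Otherwise the chain $s\le\abs{\mathcal{X}^{r}_{\mathcal{S}_1}}+\abs{\mathcal{X}^{r}_{\mathcal{S}_2}}\le\abs{\mathcal{X}^{r-1}_{\mathcal{S}_1}}+\abs{\mathcal{X}^{r-1}_{\mathcal{S}_2}}\le s$ collapses to equalities, forcing $\mathcal{X}^{r}_{\mathcal{S}_i}=\mathcal{X}^{r-1}_{\mathcal{S}_i}=:\mathcal{Y}_i\subsetneq\mathcal{S}_i$ with $\abs{\mathcal{Y}_1}+\abs{\mathcal{Y}_2}=s$. As $s\ge1$ we may assume $\mathcal{Y}_1\neq\emptyset$; pick $y\in\mathcal{Y}_1$ and set $\mathcal{S}_1':=\mathcal{S}_1\setminus\{y\}$, which is nonempty because $\abs{\mathcal{S}_1}\ge\abs{\mathcal{Y}_1}+1\ge2$. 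The crux is the identity $\mathcal{X}^{r}_{\mathcal{S}_1'}=\mathcal{Y}_1\setminus\{y\}$: a node of $\mathcal{Y}_1\setminus\{y\}$ keeps its $\ge r$ in-neighbours outside $\mathcal{S}_1$, hence outside $\mathcal{S}_1'$; a node of $\mathcal{S}_1\setminus\mathcal{Y}_1$ had at most $r-2$ in-neighbours outside $\mathcal{S}_1$ and picks up at most the single edge from $y$, staying below $r$. Then $(\mathcal{S}_1',\mathcal{S}_2)$ contradicts $(r,s)$-robustness, since $\mathcal{X}^{r}_{\mathcal{S}_1'}=\mathcal{Y}_1\setminus\{y\}\subsetneq\mathcal{S}_1'$, $\mathcal{X}^{r}_{\mathcal{S}_2}=\mathcal{Y}_2\subsetneq\mathcal{S}_2$, and $\abs{\mathcal{X}^{r}_{\mathcal{S}_1'}}+\abs{\mathcal{X}^{r}_{\mathcal{S}_2}}=(\abs{\mathcal{Y}_1}-1)+\abs{\mathcal{Y}_2}=s-1$. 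I expect this vertex-deletion bookkeeping, and in particular disposing of the degenerate cases ($\mathcal{S}_1=\{y\}$, $\mathcal{Y}_1=\emptyset$, $s=1$) cleanly, to be the only real obstacle.

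For (iii)--(iv) I would observe that $r$-robustness descends, in the weaker guise of $1$-robustness, to $\mathcal{G}-\mathcal{C}$ for any vertex set $\mathcal{C}$ with $\abs{\mathcal{C}}\le r-1$: whichever of the three $(r,1)$-conditions a pair of sets meets in $\mathcal{G}$, each witnessing node still retains at least $r-\abs{\mathcal{C}}\ge1$ in-neighbours outside its set after $\mathcal{C}$ is deleted. With $\mathcal{C}=\emptyset$ this shows $\mathcal{G}$ is $1$-robust, and a $1$-robust graph contains no two disjoint nonempty subsets each receiving no incoming edge from its complement, hence has a directed spanning tree --- that is (iv), and, applied to all $\mathcal{G}-\mathcal{C}$, the connectivity claim of (iii). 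The converse in (iii) fails because, e.g., a directed cycle stays spanning-tree-connected after deleting any one vertex yet has every in-degree equal to $1$, so it is not $2$-robust. Finally, for (v) I would feed the definition a balanced bipartition $\mathcal{V}=\mathcal{S}_1\cup\mathcal{S}_2$ with $\abs{\mathcal{S}_1}=\lfloor n/2\rfloor$: whichever condition holds exhibits a node with at least $r$ in-neighbours inside a set of size at most $\lceil n/2\rceil$, so $r\le\lceil n/2\rceil$; and in a complete graph a node $v\in\mathcal{S}_i$ has exactly $n-\abs{\mathcal{S}_i}$ in-neighbours outside $\mathcal{S}_i$, so the smaller part (size $\le\lfloor n/2\rfloor\le n-r'$) is entirely $r'$-exposed, making condition~1 hold for every admissible $r'$ and every $s$.
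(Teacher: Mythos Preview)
The paper does not actually prove this lemma; it is quoted from \cite{LeBlancPhD} and stated without proof. Your proposal therefore supplies an argument where the paper offers none, and it is essentially correct.

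A few remarks. Your proof of item~(ii) via the vertex-deletion trick is the standard one and is carried out cleanly; the equality chain forcing $\mathcal{X}^{r}_{\mathcal{S}_i}=\mathcal{X}^{r-1}_{\mathcal{S}_i}$ is the key observation, and you handle the degenerate cases correctly (in particular $\abs{\mathcal{S}_1}\ge 2$ because $\mathcal{Y}_1\subsetneq\mathcal{S}_1$). The reduction of the ``Moreover'' clause to $s-1$ iterations of~(ii) is exactly right. For items~(iii)--(iv), your descent argument (that $\mathcal{G}-\mathcal{C}$ stays $1$-robust for $\abs{\mathcal{C}}\le r-1$) is sound, and the equivalence of $1$-robustness with existence of a directed spanning tree is the standard condensation argument you sketch. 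One caution: the paper never defines ``$r$-connected'' for digraphs, and your directed-cycle counterexample relies on reading it as ``$\mathcal{G}-\mathcal{C}$ retains a spanning tree for all $\abs{\mathcal{C}}\le r-1$'' rather than ``$\mathcal{G}-\mathcal{C}$ stays strongly connected''; under the latter reading a directed cycle is only $1$-connected. You might prefer an undirected example (say a long cycle, which is $2$-connected in every sense but not $2$-robust) to sidestep the ambiguity. Item~(v) is fine as written.
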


 \begin{figure}
	\centering
%	\vspace*{1mm}
	% \includegraphics[width=35mm,height=3cm]{robustgraph}
	\includegraphics[width=24mm]{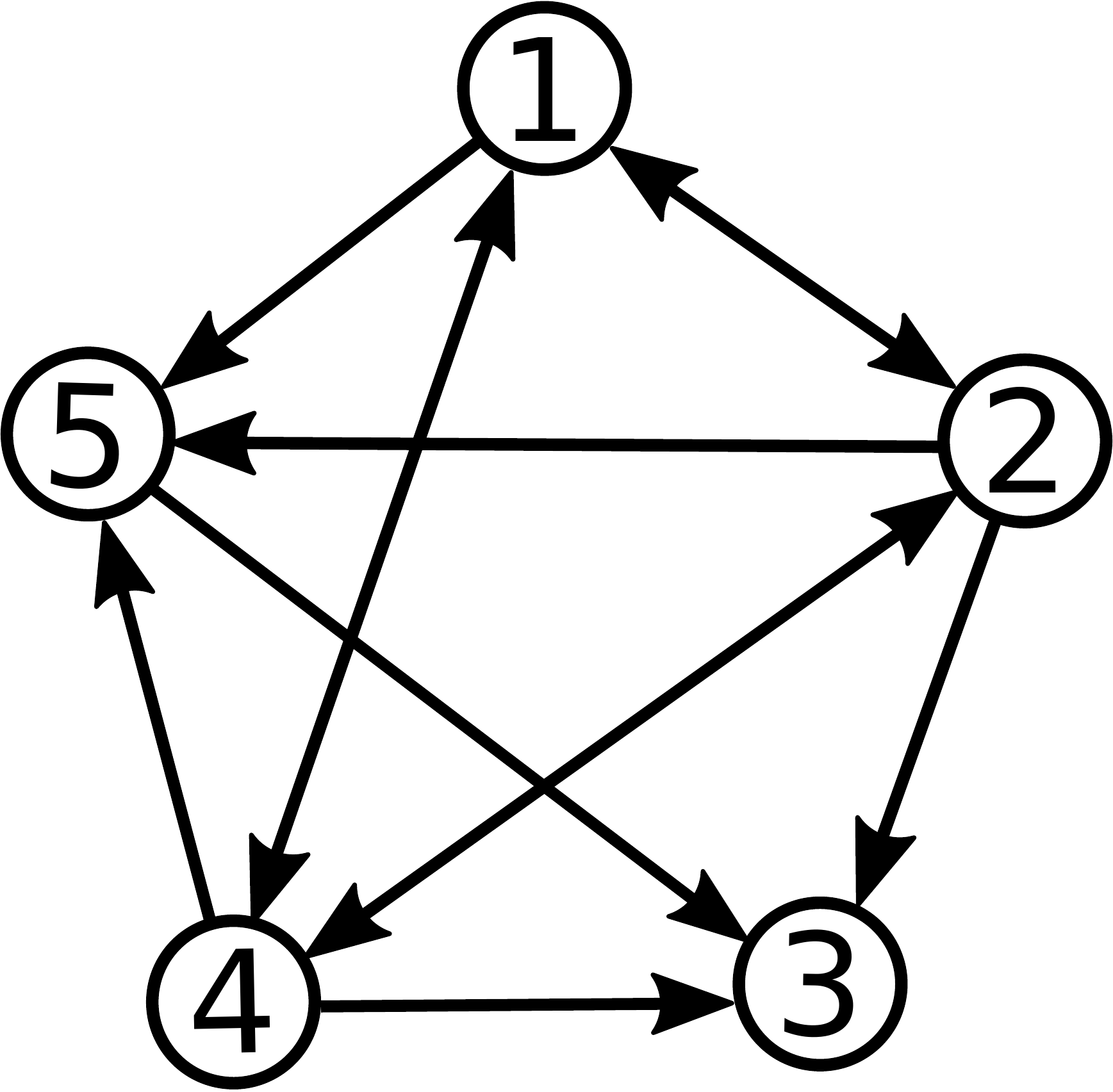}
	\vspace*{-1mm}
	\caption{A graph which is  $(2,2)$-robust but not $3$-robust.}
	\label{fig:robustgraph}
%	\vspace*{-3mm}
\end{figure}

It is clear that $(r,s)$-robustness is more restrictive than $r$-robustness. 
The graph with five nodes in Fig.~\ref{fig:robustgraph} 
%can be shown to be 
is $(2,2)$-robust, but not $3$-robust;
further, removing any edge destroys its $(2,2)$-robustness.
%With five agents, this graph is the minimal one to have 
%this robustness property. Thus, removing any edge destroys being $(2,2)$-robust.
In general, to determine if a given graph
has a robustness property is computationally difficult 
since the problem involves combinatorial aspects. 
It is known that random graphs become robust
when their size tends to infinity \cite{zhangfata}.
% % % % % % % % % % % % % % % % % % % % % % % % % % %

\subsection{Second-Order Consensus Protocol}

Consider a network of agents whose interactions are represented by the directed graph $\mathcal{G}$. 
Each agent $i\in\mathcal{V}$ has a double-integrator dynamics given by
\begin{align*}
\begin{split}
 \dot{x_i}(t)=v_{i}(t),~~ \dot{v}_i(t)=u_i(t),\qquad {}  i=1,\ldots,n, 
\end{split}
%\label{eqn: SecondOrderDynamicsContinuous}
\end{align*}
where $x_i(t)\in{\mathbb{R}}$ and $v_i(t)\in{\mathbb{R}}$ are 
its position and velocity, respectively, and $u_i(t)$ is the control input.
% applied to the $i$th agent at time $t\geq 0$. 
We discretize the system %\eqref{eqn: SecondOrderDynamicsContinuous} 
with sampling period $T$ as
%. Then, it is represented as
\begin{align}
\begin{split}
x_i[k+1]&=x_i[k]+Tv_i[k]+\frac{T^2}{2}u_i[k],\\
v_i[k+1]&=v_i[k]+Tu_i[k],\qquad  i=1,\ldots,n,
\end{split}
\label{eqn: SecondOrderDynamicsDicrete}
\end{align}
where $x_i[k]$, $v_i[k]$, and $u_i[k]$ are, respectively, the position, 
the velocity, and the control input of agent $i$ at $t=kT$ 
for $k\in \mathbb{Z}_+$. 
{\color{black} Our discretization is based on control inputs 
generated by zeroth order holds; other methods 
are employed in, e.g., \cite{LinP,JiahuHirche}.}
%does not have the term with $u[k]$ in the positions' dynamics.}
%Throughout the paper, we assume that at the initial time, the agents 
%are at rest: $v_i[0]=0$ for all $i$.

At each time step $k$, the agents update their positions and 
velocities based on the time-varying topology of the graph $\mathcal{G}[k]$, %at the time,
which is a subgraph of $\mathcal{G}$ and is specified later. 
In particular, the control uses
the relative positions with its neighbors and its own velocity
\cite{renbook2}:
\begin{equation}
  u_i[k] = -\sum_{j\in\mathcal{N}_i}
              a_{ij}[k][(x_i[k]-\delta_i)-(x_j[k]-\delta_j)]-\alpha v_i[k],
 \label{eq3}
\end{equation}
where $a_{ij}[k]$ is the $(i,j)$ entry of the adjacency matrix $A[k]\in \mathbb{R}^{n\times n}$ 
corresponding to $\mathcal{G}[k]$, $\alpha$ is a positive scalar, and $\delta_i \in \mathbb{R}$ is a constant representing the desired relative position of agent $i$ 
in a formation. %Let $\Delta_{ij}=\delta_i-\delta_j$. 

The agents' objective is consensus 
in the sense that they come to formation 
and then stop asymptotically: 
\[
%x_i[k]-x_j[k] \rightarrow \Delta_{ij},~~v_i[k]\rightarrow 0~\text{as $k\rightarrow\infty$},~~\forall i,j\in \mathcal{V}. 
x_i[k]-x_j[k] \rightarrow \delta_i-\delta_j,~~
v_i[k]\rightarrow 0~\text{as $k\rightarrow\infty$},~~\forall i,j\in \mathcal{V}. 
\]
In \cite{renbook2}, it is shown that if there is some 
$\ell_0\in\mathbb{Z}_+$ such that for any nonnegative integer $k_0$, the union of $\mathcal{G}[k]$ across $k \in [k_0, k_0+\ell_0]$ has a directed spanning tree, 
then consensus can be obtained 
under the control law \eqref{eq3} by properly choosing 
$\alpha$ and $T$.

In this paper, we study the case where some agents 
malfunction due to failure, disturbances, or attacks.
%are attacked, fail, or are affected by external disturbances. 
In such circumstances, 
they may not follow the predefined update rule \eqref{eq3}. 
In the next subsection, we introduce necessary definitions and 
then formulate the resilient 
consensus problem in the presence of malicious agents. 
%For this case, the requirement on graph connectivity becomes 
%more stringent than that in the conventional case 
%without malicious agents \cite{renbook2}.

Finally, we represent the agent system 
in a vector form. 
Let $\hat{x}_i[k]=x_i[k]-\delta_i$, 
$\hat{x}[k]=\left[\hat{x}_1[k]\,\cdots\,\hat{x}_n[k]\right]^T$,
and $v[k]=\left[v_1[k]\,\cdots\,v_n[k]\right]^T$. 
{\color{black}For the sake of simplicity, hereafter, the agents' 
positions refer to $\hat{x}[k]$ and not $x[k]$.}
The system \eqref{eqn: SecondOrderDynamicsDicrete} 
then becomes
\begin{align}
\begin{split}
  \hat{x}[k+1] &= \hat{x}[k] + T v[k] + \frac{T^2}{2} u[k],\\
  v[k+1]  &= v[k] + T u[k], 
\end{split}
\label{eqn:rv}
\end{align}
and the control law \eqref{eq3} can be written as
\begin{equation}
  u[k] = - L[k] \hat{x}[k] - \alpha v[k],
\label{eqn:u}
\end{equation}
where $L[k]$ is the Laplacian matrix for the graph $\mathcal{G}[k]$.

\subsection{Resilient Consensus}
%In order to formulate the problem, we must set the scope of the malicious (or misbehaving) agents in the networkwhen some agents are attacked, fail, or affected by external disturbances \cite{zhang1}.

We introduce notions related to malicious agents and consensus in the presence of such agents \cite{zhang1,lynch,Vaidya}.

\begin{definition}\label{normal node}\rm
Agent $i$ is called normal if it updates its state 
based on the predefined control \eqref{eq3}. 
Otherwise, it is called malicious
%, i.e., if it does not apply the pre-specified updating rule
and may make arbitrary updates. The index set of malicious agents
is denoted by $\mathcal{M}\subset\mathcal{V}$. 
The numbers of normal agents and malicious agents are denoted by $n_N$ and $n_M$, respectively.
\end{definition}

%The malicious agents might deceive the normal agents and prevent them from reaching consensus.
%and sends the same values to its neighbors.
We assume that an upper bound is available 
for the number of misbehaving agents in the entire network or at least in each normal agent's neighborhood. 

\begin{definition}\rm
The network is $f$-total malicious if the number $n_M$ of faulty agents 
is at most $f$, i.e., $n_M \leq f$.
%\end{definition} 
%
%
%\begin{definition}\rm
On the other hand,
the network is $f$-local malicious if the number of malicious agents in the neighborhood of each normal agent $i$ is bounded by $f$, i.e., $\abs{\mathcal{N}_i \cap \mathcal{M}} \leq f$. 
%$i \in \mathcal{V}/\mathcal{M}$.  
\end{definition} 

%The resilient consensus of $f$-total malicious models has been extensively studied during past decades\cite{lynch,shames,zohir,vidin,Yangfeng,pasq}. 

According to the model of malicious agents considered, 
the difference between normal agents and malicious agents 
lies in their control inputs $u_i$: For the normal agents, it is given by \eqref{eq3} 
while for the malicious agents, it is arbitrary. 
On the other hand, the position and velocity dynamics for all 
agents remain the same as \eqref{eqn: SecondOrderDynamicsDicrete}.

We introduce the notion of resilient consensus 
for the network of second-order agents 
\cite{dibajiishiiSCL2015}.
%in the presence of misbehaving agents. 
%To this end, based on \cite{zhang1}, we introduce the notion 
%of resilient consensus for the model. 
%Let $m$ and $M$ be, respectively, the minimum and the maximum of the initial values for the positions $x_i [0]$ of the normal agents.

\begin{definition}\label{resilient consensus}\rm
If for any possible set of malicious agents, 
any initial positions and velocities, 
and any malicious inputs, 
the following conditions are met, then the network is said to reach resilient consensus:
\begin{enumerate}
\item[1.] Safety:~There exists a bounded interval $\mathcal{S}$ determined by the initial positions and velocities of the normal agents such that 
$\hat{x}_i[k]\in \mathcal{S}$, $i \in \mathcal{V} \backslash \mathcal{M},
k \in \mathbb{Z}_+$. The set $\mathcal{S}$ is called the safety interval.
\item[2.] Agreement:~For some $c\in \mathcal{S}$, it holds that 
$\lim_{k \rightarrow \infty }\hat{x}_i[k]=c$ and 
$\lim_{k \rightarrow \infty } v_i[k]=0$, 
$i \in \mathcal{V} \backslash \mathcal{M}$.
\end{enumerate}
\end{definition}

A few remarks are in order regarding the safety interval 
$\mathcal{S}$.
% in this definition. 
(i)~The malicious agents may or may not be in $\mathcal{S}$, 
while the normal agents must stay inside
though they may still be influenced by the malicious agents 
staying in $\mathcal{S}$. 
%(ii)~The safety interval is not known by the the normal agents. 
(ii)~We impose the safety condition to ensure that the behavior
of the normal agents remains close to that when no malicious agent 
is present.
(iii)~We do not have a safety interval for velocity 
of normal agents and hence they may even move faster 
than their initial speeds. 
%by following \eqref{eqn: SecondOrderDynamicsDicrete}.
%as long as they meet the condition on their positions.

%%%%%%%%%%%%%%%%%%%%%%%%%%%%%%%%%%%%%%%%%%%%%%%%%%%%%%%%%%%%
%%%%%%%%%%%%%%%%%%%%%%%%%%%%%%%%%%%%%%%%%%%%%%%%%%%%%%%%%%%%

\section{Synchronous Networks}\label{sect:SynchSystems}
\subsection{DP-MSR Algorithm}\label{dp-msr}

% % % % % % % % % % % % % % % % % % % % % % %
%In this section, 
We first outline the algorithm 
for achieving consensus in the presence of misbehaving agents
in the synchronous case, where all 
agents make updates 
%based on their neighbors' information 
at every time step. 
The algorithm is called DP-MSR, which stands for 
Double-Integrator Position-Based Mean Subsequence Reduced 
algorithm. 
It was proposed in \cite{dibajiishiiSCL2015} for the $f$-local malicious model.

The algorithm has three steps as follows:
%algorithm:
\begin{enumerate}
\item[1.] At each time step $k$, each normal agent $i$ receives 
  the {\color{black}relative position values 
{\color{black}$\hat{x}_j[k]-\hat{x}_i[k]$ of its neighbors
  $j \in \mathcal{N}_i[k]$} and sorts them 
in a decreasing order.}
%from the largest to the smallest.
\item[2.] If there are less than $f$ agents whose 
relative position values are greater than or equal to zero, 
then the normal agent $i$ ignores 
the incoming edges from those agents. Otherwise, it ignores the incoming 
edges from $f$ agents counting from those having the largest relative position
values. 
Similarly, if there are less than $f$ agents whose values are smaller than or equal
to zero, then agent $i$ ignores the incoming edges from those agents. 
Otherwise, it ignores the $f$ incoming edges counting from those having the
smallest relative position values.
\item[3.] Apply the control input \eqref{eq3} by substituting $a_{ij} [k]=0$ for edges $(j,i)$ which are ignored in step 2.
\end{enumerate}

%For this algorithm, we say $2f$ to be its parameter. 
The main feature of this algorithm lies in its simplicity. 
Each normal agent ignores the information received from its neighbors 
which may be misleading. In particular, it ignores
up to $f$ edges from neighbors whose positions are large, and $f$ edges from neighbors whose positions are small. 
The underlying graph $\mathcal{G}[k]$ at time $k$ is determined by
the remaining edges. 
%not ignored by the agents then.
The adjacency matrix $A[k]$ and the Laplacian matrix $L[k]$ 
are determined accordingly. 

%In this section, we consider the synchronous case where all 
%agents make updates based on their neighbors' current information at
%every time step. 
The problem for the synchronous agent network can be stated as follows:
Under the $f$-total malicious model, find a condition on the network
topology such that the normal agents
reach resilient consensus based on the DP-MSR algorithm.

\subsection{Matrix Representation}

We provide a modified system model 
%of the agent system 
when malicious agents are present. To simplify the notation, 
the agents' indices are reordered.
%Since the number of normal agents is $n_N$,
Let the normal
agents take indices $1,\ldots,n_N$ and let the malicious agents 
be $n_N+1,\ldots,n$.
Thus, the vectors representing the positions, velocities, 
and control inputs of all agents consist of two parts as
\begin{equation}
\hat{x}[k]
= \begin{bmatrix}
\hat{x}^{N}[k]\\
\hat{x}^{M}[k]
\end{bmatrix},~~
v[k]
= \begin{bmatrix}
v^{N}[k]\\
v^{M}[k]
\end{bmatrix},~~
u[k]
= \begin{bmatrix}
u^{N}[k]\\
u^{M}[k]
\end{bmatrix},
\label{eqn:partition}
\end{equation}
where the superscript $N$ stands for normal and $M$ for malicious. 
Regarding the control inputs $u^{N}[k]$ and $u^{M}[k]$, 
the normal agents follow \eqref{eq3} while the malicious agents
may not. Hence, they can be expressed as
\begin{equation}\label{normalmalicious control}
\begin{split}
u^N[k] &= -L^N[k] \hat{x}[k]
-\alpha \begin{bmatrix}
I_{n_{N}} & 0
\end{bmatrix}v[k],\\
u^M[k] &: \text{arbitrary},
\end{split} 
\end{equation}
where 
$L^N[k]\in \mathbb{R}^{n_N\times n}$ is the matrix formed by the first $n_N$ rows 
%the Laplacian $L[k]$ associated with normal agents.
of $L[k]$ associated with normal agents.
%and it has the size of $n_N \times n$.
%Note that 
The row sums of this matrix $L^N[k]$ are zero as in $L[k]$.

With the control inputs of \eqref{normalmalicious control},
we obtain the model for the overall system \eqref{eqn:rv} as
\begin{align}
\nonumber
 \hat{x}[k+1]
   &=\left(
        I_n-\frac{T^2}{2} 
        \begin{bmatrix}
           L^N[k] \\
           0
        \end{bmatrix}
     \right)\hat{x}[k]+Q v[k] \nonumber\\ 
   &~~\qquad~\qquad\qquad\qquad\mbox{}
     +\frac{T^2}{2}
        \begin{bmatrix}
          0\\
          I_{n_M}
        \end{bmatrix} u^M[k],\label{eqn: PosVel}\\
 v[k+1]
   &= -T \begin{bmatrix}
           L^N[k] \\ 
           0
         \end{bmatrix}
      \hat{x}[k] + R v[k] 
      + T\begin{bmatrix}
           0\\
           I_{n_M}
        \end{bmatrix} u^M[k],
 \nonumber
\end{align}
where the partitioning in the matrices 
is in accordance with the vectors in \eqref{eqn:partition}, 
and $Q$ and $R$ are given by
\begin{equation}\label{eqn: MatrixNotation}
Q=TI_n-\frac{\alpha T^2}{2}\begin{bmatrix}
I_{n_N} & 0 \\ 0 & 0
\end{bmatrix},
~~R=
I_n-\alpha T \begin{bmatrix}
I_{n_N} & 0 \\ 0 & 0
\end{bmatrix}.
\end{equation}
For the sampling period $T$ and the parameter $\alpha$, 
%we assume $\alpha T \neq 2 $. 
we assume\footnote{%
The condition \eqref{eqn:alpha_T} on $T$ and $\alpha$ 
ensures that the matrix $\bigl[\Phi_{1k}~\Phi_{2k}\bigr]$ 
possesses the properties stated in Lemma~\ref{lemmapositionupdates}. 
We may relax it, for example, by not imposing 
$\sum_{j=1}^{n}a_{ij}[k]$ to be less than 1.
While \eqref{eqn:alpha_T} can be fulfilled by any $T$, the control law 
\eqref{eq3} may make the agents exhibit {\color{black} undesired} oscillatory movements. 
This type of property is also seen in previous works 
on consensus of agents with second-order dynamics; see, e.g.,
\cite{renjournal,renbook2}. We remark that
the condition \eqref{eqn:alpha_T} is less restrictive than that in \cite{JiahuQin2012}. 
{\color{black} In general, shortcomings due to this assumption
should be further studied in future research.} 
}
	\begin{equation}
	1+\frac{T^2}{2} \leq  \alpha T \leq 2- \frac{T^2}{2}.
	\label{eqn:alpha_T}
	\end{equation} 
%Based on this assumption, we obtain 
The following lemma from \cite{dibajiishiiSCL2015} plays 
a key role in the analysis.

\begin{lemma}\label{lemmapositionupdates}\rm
	%If the initial velocity values for each agent in the system \eqref{matrixform} is zero, 
	%there exists $\alpha$ and $T$ such that:\\ \\
	Under the control inputs \eqref{normalmalicious control}, 
	the position vector $\hat{x}[k]$ of the agents for $k\geq 1$ can be expressed as 
	\begin{align*}
	\hat{x}[k+1] 
	&= \begin{bmatrix}
	\Phi_{1k} & \Phi_{2k}
	\end{bmatrix}
	\begin{bmatrix}  
	\hat{x}[k]\\
	\hat{x}[k-1] 
	\end{bmatrix}\\
	&\hspace*{1.5cm}\mbox{}
	+ \frac{T^2}{2} 
	\begin{bmatrix}
	0 \\
	I_{n_M}
	\end{bmatrix}
	\bigl(
	u^M[k] + u^M[k-1]
	\bigr),
	\end{align*}
	where
	\begin{align*}
	\Phi_{1k}
	&= R+I_n- \frac{T^2}{2}\begin{bmatrix}
	L^N[k]\\  0
	\end{bmatrix},\\
	\Phi_{2k}
	&=  -R -\frac{T^2}{2} \begin{bmatrix} L^N[k-1]\\0 \end{bmatrix}.
	\end{align*}
	Moreover{\color{black}, under \eqref{eqn:alpha_T}},
	the matrix $\bigl[\Phi_{1k}~~\Phi_{2k}\bigr]$ 
	is nonnegative, and 
	the sum of each of its first ${n_N}$ rows is one.
\end{lemma}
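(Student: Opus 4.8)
The plan is to eliminate the velocity vector from the two coupled first-order recursions and thereby obtain a closed second-order recursion in the position vector; the key point is that one needs \emph{two} position--velocity relations, not one, to do this. Fix $k\ge 1$. From the position update in \eqref{eqn:rv} evaluated at $k$ and $k-1$, together with the velocity update $v[k]=v[k-1]+Tu[k-1]$ (valid for every agent, since only the inputs $u_i$ differ between normal and malicious agents), a short computation yields the two componentwise identities
\begin{align*}
\hat{x}[k+1] &= 2\hat{x}[k]-\hat{x}[k-1]+\tfrac{T^2}{2}\bigl(u[k]+u[k-1]\bigr),\\
v[k]+v[k-1] &= \tfrac{2}{T}\bigl(\hat{x}[k]-\hat{x}[k-1]\bigr).
\end{align*}
The second identity is the crucial one: it converts the velocities, which do not occur in the target formula, into position differences that do.

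Next I would substitute the control law \eqref{normalmalicious control}. In the malicious rows the input $u^M$ is unconstrained, and there the first identity above is already the asserted recursion, the bottom blocks of $\Phi_{1k}$ and $\Phi_{2k}$ being $2I_{n_M}$ and $-I_{n_M}$. In the normal rows I would write $u^N[k]+u^N[k-1]=-L^N[k]\hat{x}[k]-L^N[k-1]\hat{x}[k-1]-\alpha(v^N[k]+v^N[k-1])$, replace $v^N[k]+v^N[k-1]$ by $\frac{2}{T}(\hat{x}^N[k]-\hat{x}^N[k-1])$ via the second identity, and substitute into the first. Collecting the coefficients of $\hat{x}[k]$ and of $\hat{x}[k-1]$ then produces, for the first $n_N$ rows, exactly $R+I_n-\frac{T^2}{2}\bigl[\begin{smallmatrix}L^N[k]\\ 0\end{smallmatrix}\bigr]=\Phi_{1k}$ and $-R-\frac{T^2}{2}\bigl[\begin{smallmatrix}L^N[k-1]\\ 0\end{smallmatrix}\bigr]=\Phi_{2k}$, respectively; here one uses that the first $n_N$ diagonal entries of $R$ equal $1-\alpha T$, so that the coefficient of $\hat{x}^N[k]$ is $2-\alpha T$ and that of $\hat{x}^N[k-1]$ is $\alpha T-1$. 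The leftover forcing term is $\frac{T^2}{2}(u^M[k]+u^M[k-1])$, which lives only in the malicious rows. (Alternatively one can perform the elimination directly at the matrix level of \eqref{eqn: PosVel}: $Q$ and $R$ commute, both being of the form $cI_n+d\bigl[\begin{smallmatrix}I_{n_N}&0\\ 0&0\end{smallmatrix}\bigr]$, and the identity $\frac{T^2}{2}R-TQ+\frac{T^2}{2}I_n=0$ makes the $\hat{x}[k-1]$- and $u^M[k-1]$-coefficients collapse to the stated forms.)

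For the last assertion, take a normal row $i\le n_N$ of $[\Phi_{1k}~\Phi_{2k}]$. Its off-diagonal entries are $\frac{T^2}{2}a_{ij}[k]\ge0$ and $\frac{T^2}{2}a_{ij}[k-1]\ge0$; its diagonal entry in $\Phi_{1k}$ is $2-\alpha T-\frac{T^2}{2}l_{ii}[k]$, which is at least $2-\alpha T-\frac{T^2}{2}\ge0$ by the right-hand inequality in \eqref{eqn:alpha_T}, since $l_{ii}[k]=\sum_{j\neq i}a_{ij}[k]\le\sum_{j\neq i}a_{ij}\le1$ (the DP-MSR step only removes edges); and its diagonal entry in $\Phi_{2k}$ is $(\alpha T-1)-\frac{T^2}{2}l_{ii}[k-1]\ge\alpha T-1-\frac{T^2}{2}\ge0$ by the left-hand inequality in \eqref{eqn:alpha_T}. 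Finally, each row of $L^N[k]$ and of $L^N[k-1]$ sums to zero, so row $i$ of $\Phi_{1k}$ sums to $(1-\alpha T)+1=2-\alpha T$ and row $i$ of $\Phi_{2k}$ sums to $-(1-\alpha T)=\alpha T-1$, whence row $i$ of $[\Phi_{1k}~\Phi_{2k}]$ sums to $1$. The main obstacle is not any single computation but keeping the two time indices $k$ and $k-1$ and the normal/malicious block partition \eqref{eqn:partition} straight throughout; it is worth noting that condition \eqref{eqn:alpha_T} is used precisely, and exactly once per side, at the two diagonal estimates above.
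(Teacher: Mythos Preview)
Your argument is correct. The paper does not actually give a proof of this lemma here; it cites the result from \cite{dibajiishiiSCL2015} and, for the analogous delayed version (Lemma~\ref{Lemma: DelayFinalEqu}), remarks only that the proof ``is by direct calculation similar to Lemma~\ref{lemmapositionupdates}.'' Your derivation is precisely such a direct calculation: eliminate the velocity via the pair of identities $\hat{x}[k+1]=2\hat{x}[k]-\hat{x}[k-1]+\tfrac{T^2}{2}(u[k]+u[k-1])$ and $v[k]+v[k-1]=\tfrac{2}{T}(\hat{x}[k]-\hat{x}[k-1])$, insert the control law, and read off $\Phi_{1k},\Phi_{2k}$. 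The check of nonnegativity and unit row sums, using each side of \eqref{eqn:alpha_T} exactly once on the two diagonal entries, is also correct. There is nothing to compare against, and nothing to fix.
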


\begin{remark}\label{remark1}\rm
It is clear from the lemma that
the controls $u^M[k]$ and $u^M[k-1]$ do not directly enter the new
positions $\hat{x}^N[k+1]$ of the normal agents. 
%It follows from this lemma that 
Moreover, 
the positions of the normal agents 
$\hat{x}^N[k+1]$ for $k\geq 1$
are obtained via the convex combination 
of the current positions $\hat{x}[k]$ and
those from the previous time step $\hat{x}[k-1]$. 
%Furthermore, from (\ref{eqn: PosVel}), we see that 
%the positions of the normal agents 
%at time $k=1$ depend only on their initial 
%positions and velocities.
%It will be shown that the interval made by the maximum 
%and the minimum values in $\hat{x}^N[0]$ and $v^N[0]$ 
%provide the safety interval $\mathcal{S}$. 
\end{remark}

\subsection{A Necessary and Sufficient Condition}

We are now ready to state the main result for the synchronous case.
Let the interval $\mathcal{S}$ be given by 
\begin{align}\label{eqn:  SafetyInterval}
\mathcal{S}
  &= \biggl[
       \min \hat{x}^N[0]
       + \min \biggl\{0,
                \biggl(
                  T - \frac{\alpha T^2}{2}
                \biggr) v^N[0]
              \biggr\},\notag\\
 &\hspace*{.6cm}
     \max \hat{x}^N[0]
       + \max \biggl\{0,
               \biggl(
                 T - \frac{\alpha T^2}{2}
               \biggr) v^N[0]
              \biggr\}
     \biggr],
\end{align}
where the minimum and the maximum are taken 
%with respect to
over all entries of the vectors.
Note that the interval is determined only by
the initial states
%positions and velocities 
of the normal agents.

\begin{theorem}\label{NecSufSynch}\rm
Under the $f$-total malicious model, the network of agents 
with second-order dynamics using the control in 
\eqref{normalmalicious control} and the DP-MSR algorithm reaches resilient consensus 
if and only if the underlying graph is $(f+1,f+1)$-robust.  The safety interval is given by \eqref{eqn:  SafetyInterval}.
%The $f$-total malicious network reaches resilient consensus using the DP-MSR algorithm if and only if the underlying graph is $(f+1,f+1)$-robust.
\end{theorem}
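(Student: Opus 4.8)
The statement is an equivalence, so the argument splits into necessity and sufficiency. For necessity I would argue by contraposition: suppose $\mathcal{G}$ is not $(f+1,f+1)$-robust. Then there exist nonempty disjoint $\mathcal{S}_1,\mathcal{S}_2\subset\mathcal{V}$ violating all three conditions of Definition~\ref{(def:robustgraph)}, i.e.\ each of $\mathcal{S}_1,\mathcal{S}_2$ contains a node receiving fewer than $f+1$ edges from outside, and the total number of such ``well-connected'' nodes is at most $f$. The plan is to initialize all normal agents in $\mathcal{S}_1$ at position $a$ with zero velocity, all normal agents in $\mathcal{S}_2$ at position $b\neq a$ with zero velocity, everything in between at values strictly inside $(a,b)$, and let the $\le f$ nodes that could ``break the wall'' be the malicious ones, holding their positions fixed. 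Then every normal agent in $\mathcal{S}_1$ has at most $f$ neighbors (all of them potentially in $\mathcal{S}_2$ or malicious) with values $\ge$ its own, so the DP-MSR filtering step discards all of them; the same holds symmetrically in $\mathcal{S}_2$. Hence the normal agents in $\mathcal{S}_1$ never move from $a$ and those in $\mathcal{S}_2$ never move from $b$, so agreement fails; this shows $(f+1,f+1)$-robustness is necessary.

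For sufficiency, assume $\mathcal{G}$ is $(f+1,f+1)$-robust and use Lemma~\ref{lemmapositionupdates}: for $k\ge 1$ the normal positions $\hat{x}^N[k+1]$ are a convex combination of the entries of $\hat{x}[k]$ and $\hat{x}[k-1]$, with the malicious control not entering directly. Define
\[
\overline{x}[k]=\max\bigl\{\max\hat{x}^N[k],\max\hat{x}^N[k-1]\bigr\},\quad
\underline{x}[k]=\min\bigl\{\min\hat{x}^N[k],\min\hat{x}^N[k-1]\bigr\}.
\]
First one shows safety: because of the convex-combination structure, $\overline{x}[k]$ is nonincreasing and $\underline{x}[k]$ is nondecreasing for $k\ge 1$, and a direct one-step computation from \eqref{eqn:rv}--\eqref{eqn:u} handles the passage from $k=0$ to $k=1$, pinning all normal positions into the interval $\mathcal{S}$ of \eqref{eqn:  SafetyInterval}. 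Then one shows the gap $\overline{x}[k]-\underline{x}[k]\to 0$. The key step here is the standard robust-consensus argument adapted to the second-order/two-step setting: fix $\epsilon>0$ small, consider the sets of normal agents whose position (at step $k$ or $k-1$) is within $\epsilon$ of $\overline{x}[k]$, respectively $\underline{x}[k]$; as long as the gap exceeds a threshold these two sets are disjoint and nonempty, so by $(f+1,f+1)$-robustness at least one of them, say the top one, contains a normal node with $\ge f+1$ neighbors outside it, hence at least one \emph{normal} neighbor (since at most $f$ are malicious and the DP-MSR step removes at most $f$ high values) whose value is below $\overline{x}[k]-\epsilon$; its convex update with uniformly lower-bounded weight $\gamma$ (and the lower bound on self/previous weights coming from \eqref{eqn:alpha_T}) then pulls it strictly below $\overline{x}[k]$ by a fixed fraction of the gap. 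Iterating this over a bounded number of steps contracts $\overline{x}-\underline{x}$ by a fixed factor, giving exponential decay; finally, velocities going to zero follows from $v[k+1]=v[k]+Tu[k]$ together with $\hat{x}$ converging, or more carefully from the explicit two-step recursion, as in \cite{dibajiishiiSCL2015}.

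The main obstacle is the sufficiency direction, and within it the contraction argument in the presence of the \emph{two-step} (rather than one-step) convex combination: one must track extrema over a sliding window of two consecutive times, verify the disjointness/nonemptiness of the ``near-top'' and ``near-bottom'' normal sets at the right indices, and ensure a uniform positive lower bound on the weight transmitted from a genuinely normal neighbor across possibly two steps — this is where the constraint \eqref{eqn:alpha_T} on $T,\alpha$ and the bound $\gamma$ on adjacency entries are essential, since they guarantee the coefficients in $[\Phi_{1k}\ \Phi_{2k}]$ are bounded away from zero in the relevant positions. The $f$-total (as opposed to $f$-local) bookkeeping is what lets a single global application of $(f+1,f+1)$-robustness suffice, rather than needing the stronger robustness required in the $f$-local analysis of \cite{dibajiishiiSCL2015}; care is needed to confirm that no more than $f$ malicious agents can simultaneously affect the top and bottom sets.
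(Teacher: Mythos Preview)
Your overall architecture matches the paper's proof closely: the necessity argument is essentially identical, and the sufficiency argument via the two-step window $\overline{x}[k],\underline{x}[k]$, safety from monotonicity, and a shrinking-set argument driven by $(f+1,f+1)$-robustness is the same strategy (the paper phrases the last step as a proof by contradiction on the limits $\overline{x}^\star,\underline{x}^\star$, whereas you aim directly for a contraction, but these are equivalent reformulations).

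There is, however, a genuine gap in how you intend to invoke $(f+1,f+1)$-robustness. You propose to form the near-top and near-bottom sets from \emph{normal} agents only and then apply robustness to those sets. That does not work: robustness is a property of subsets of the full vertex set $\mathcal{V}$, and applying it to normal-only sets gives no control over how many neighbors lie outside. The paper instead defines
\[
\mathcal{X}_1(k_\epsilon+\ell,\epsilon_\ell)=\{j\in\mathcal{V}:\hat{x}_j[k_\epsilon+\ell]>\overline{x}^\star-\epsilon_\ell\},\quad
\mathcal{X}_2(k_\epsilon+\ell,\epsilon_\ell)=\{j\in\mathcal{V}:\hat{x}_j[k_\epsilon+\ell]<\underline{x}^\star+\epsilon_\ell\},
\]
including malicious agents according to their actual positions. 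Robustness then yields one of the three alternatives of Definition~\ref{(def:robustgraph)}; the $f$-total assumption is used only to conclude that among the (at least $f+1$) nodes with $f+1$ incoming edges from outside, at least one is normal. Crucially, ``outside $\mathcal{X}_1$'' now means the neighbor's \emph{position} is at most $\overline{x}^\star-\epsilon_\ell$, regardless of whether that neighbor is normal or malicious. Consequently the correct survival argument is not ``at least one of the $f+1$ outside neighbors is normal and its value gets through,'' but rather ``all $f+1$ outside neighbors have values $\le\overline{x}^\star-\epsilon_\ell$, DP-MSR discards at most $f$ small values, so at least one such value survives in the convex combination.'' This is exactly the mechanism that distinguishes the $(f+1,f+1)$ condition for $f$-total from the $(2f+1)$ condition for $f$-local; your phrasing blurs the two and, as written, would not close.
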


\begin{proof}(Necessity)~We prove by contradiction. 
Suppose that the network is not $(f+1,f+1)$-robust. 
Then, there are nonempty disjoint sets $\mathcal{V}_1 , \mathcal{V}_2 \subset \mathcal{V}$ such that none of the conditions 1--3 in Definition~\ref{(def:robustgraph)} holds. Suppose all agents in $\mathcal{V}_1$ have initial positions at $a$ and all agents in $\mathcal{V}_2$ have 
initial positions at $b$ with $a < b$. 
Let all other agents have initial positions taken from the interval 
$(a,b)$ and every agent has 0 as initial velocity. 
From condition~3, we have that $\abs{\mathcal{X}_{\mathcal{V}_1}^{f+1}}+\abs{\mathcal{X}_{\mathcal{V}_2}^{f+1}}\leq f$. Suppose that all agents in $\mathcal{X}_{\mathcal{V}_1}^{f+1}$ and $\mathcal{X}_{\mathcal{V}_2}^{f+1}$ are malicious and keep their values constant. There is at least one normal agent in  $\mathcal{V}_1$ and one normal agent in $\mathcal{V}_2$ by $\abs{\mathcal{X}_{\mathcal{V}_1}^{f+1}} < \abs{\mathcal{V}_1}$ and $\abs{\mathcal{X}_{\mathcal{V}_2}^{f+1}} < \abs{\mathcal{V}_2}$ because conditions 1 and 2 do not hold. These normal agents have $f$ or fewer neighbors outside of their own sets because they are not in ${\mathcal{X}_{\mathcal{V}_1}^{f+1}}$ or ${\mathcal{X}_{\mathcal{V}_2}^{f+1}}$. 
As a result, all normal agents in $\mathcal{V}_1$ and $\mathcal{V}_2$ update based only on the values inside $\mathcal{V}_1$ and $\mathcal{V}_2$ by removing the values received from outside of their sets. This makes their positions at $a$ and $b$ unchanged. 
Hence, 
%there will be two different values $a$ and $b$ for the next iteration. 
%It is clear that 
there will be no agreement among the normal agents.
	
(Sufficiency)~
We first establish the safety condition with 
$\mathcal{S}$ given in \eqref{eqn:  SafetyInterval}, 
i.e., $\hat{x}_i[k]\in\mathcal{S}$ for all $k$ and $i\in\mathcal{V}\setminus\mathcal{M}$.
For $k=0$, it is obvious that the condition holds.
For $k=1$, the positions of normal agents are given 
by \eqref{eqn: PosVel} as
\begin{equation}\label{eqn: k=1 update}
  \hat{x}^N[1]
    = \biggl(
       \begin{bmatrix}
         I_{n_N} & 0
       \end{bmatrix} - \frac{T^2}{2}L^N [0]
      \biggr) \hat{x}[0] 
       + \biggl(
            T-\frac{\alpha T^2}{2}
         \biggr) v^N[0].	
\end{equation}
While the initial velocities of the malicious agents do not 
appear in $\hat{x}^{N}$ at this time, their initial positions 
may have influences. 
However, for a normal agent, if some of its neighbors are 
malicious and are outside the interval $[\min \hat{x}^N[0],\max \hat{x}^N[0]]$,
then they will be ignored by step~2 in DP-MSR
because there are at most $f$ such agents. 
The matrix $[I_{n_N} 0]- (T^2/2)L^N[0]$ in 
\eqref{eqn: k=1 update} is
nonnegative and its row sums are one
because $L^N[0]$ consists of the first $n_N$ rows 
of the Laplacian $L[0]$.
As a result, the first term on the right-hand side of 
\eqref{eqn: k=1 update} is a vector whose
entries are convex combinations of values within 
$[\min \hat{x}^N[0],\max \hat{x}^N[0]]$.
Thus, for each normal agent $i\in\mathcal{V}\setminus\mathcal{M}$, 
it holds that $\hat{x}_i[1]\in\mathcal{S}$. 

Next, to further analyze the normal agents, let
\begin{equation}\label{eqn: SecondOrderVariables}
\begin{split}
  \overline{x}[k]=&\max \left(\hat{x}^N[k],\hat{x}^N[k-1]\right),\\
  \underline{x} [k]=&\min\left(\hat{x}^N[k],\hat{x}^N[k-1]\right).
\end{split}
\end{equation} 
%where the maximum and the minimum are taken with respect to 
%the entries of the vectors. 
In what follows, we show that $\overline{x}[k]$ 
is a nonincreasing function of $k \geq 1$. 
For $k \geq 2$, 
Lemma~\ref{lemmapositionupdates} and Remark~\ref{remark1}
indicate that the positions of normal agents are
convex combinations of those of 
its neighbors from time $k-1$ and $k-2$. 
If any neighbors of the normal agents
at those time steps are malicious and 
are outside the range of the normal agents' position values, 
they are ignored by step~2 in DP-MSR.
Hence, we have
%\begin{equation*}
$\max \hat{x}^N[k] \leq \max \left ( \hat{x}^N[k-1],\hat{x}^N[k-2]\right)$. 
%\end{equation*}
It also easily follows that
%\begin{equation*}
$\max \hat{x}^N[k-1] \leq \max \bigl( \hat{x}^N[k-1]$, $\hat{x}^N[k-2]\bigr)$.
%\end{equation*}
Thus, we arrive at
\begin{align*}
\overline{x}[k]&=\max \left ( \hat{x}^N[k],\hat{x}^N[k-1]\right) \\ &\leq \max \left ( \hat{x}^N[k-1],\hat{x}^N[k-2]\right)=\overline{x}[k-1].
\end{align*}
%$\overline{r}[k]\leqslant \max \left ( \hat{r}^N[k-1],\hat{r}^N[k-2]\right )$,
%and by definition, 
%$\overline{r}[k-1]= \max \left ( \hat{r}^N[k-1],\hat{r}^N[k-2]\right )$.
%Therefore, we have 
%\begin{align*}
%\overline{r}[k]
% &= \max \left ( \hat{r}^N[k],\hat{r}^N[k-1]\right ) \\
%&\leq \max \left ( \hat{r}^N[k-1],\hat{r}^N[k-2]\right )=\overline{r}[k-1].
%\end{align*}
Similarly, $\underline{x}[k]$ is a nondecreasing function of time. 
Thus, we have that for $k\geq 2$, normal agents satisfy
$\hat{x}_i[k]\in\mathcal{S}$, $i\in\mathcal{V}\setminus\mathcal{M}$.
The safety condition has now been proven.

It remains to establish the agreement condition. We start 
with the proof of agreement in the position values. Because $\overline{x}[k]$ and $\underline{x}[k]$ 
are bounded and monotone, their limits exist, which
are denoted by $\overline{x}^{\star}$ and $\underline{x}^{\star}$, 
respectively. 
If %they are the same as 
$\overline{x}^{\star}=\underline{x}^{\star}$, 
then resilient consensus follows. We prove by contradiction, 
and thus assume $\overline{x}^{\star}>\underline{x}^{\star}$. 

Denote by $\beta$ the minimum nonzero entry of 
the matrix $\bigl[\Phi_{1k}~\Phi_{2k}\bigr]$ over all $k$.
This matrix is determined by the structure of the 
graph $\mathcal{G}[k]$ and thus can vary over a finite number
of candidates;
by \eqref{eqn:alpha_T}, we have $\beta\in(0,1)$.
Let $\epsilon _0>0$ and $\epsilon>0$ be sufficiently small that
\begin{align}
& \underline{x}^\star +\epsilon _0
< \overline{x}^\star-\epsilon _0,~~
\epsilon
< \frac{\beta^{n_N}\epsilon_0}{1-\beta ^{n_N}}.
\label{eqn: xstar}
\end{align}
We introduce the sequence $\{\epsilon_\ell\}$ defined by
\[
\epsilon_{\ell+1}
= \beta \epsilon_{\ell} - (1-\beta)\epsilon,~~\ell=0,1,\ldots,n_N-1.
\]
It is easy to see that
$\epsilon_{\ell+1} < \epsilon_{\ell}$ for $i=0,1,\ldots,n_N-1$.
Moreover, by \eqref{eqn: xstar}, they are positive because
\[\epsilon_{n_N}  
= \beta^{n_N} \epsilon_{0}  
- \sum_{\ell=0}^{n_N-1}
\beta^\ell (1-\beta) \epsilon
= \beta^{n_N} \epsilon_{0}  
-\bigl(
1- \beta^{n_N} 
\bigr)\epsilon
> 0.
\]
We also take $k_\epsilon\in\mathbb{Z}_{+}$ such that for $k\geq k_\epsilon$,
it holds that 
%\[ 
$\overline{x}[k] 
< \overline{x}^\star +\epsilon$,
$\underline{x}[k]
> \underline{x}^\star -\epsilon$.
%\]
Due to convergence of $\overline{x}[k]$ and $\underline{x}[k]$,
such $k_\epsilon$ exists. 
For the sequence $\{\epsilon_\ell\}$,
let 
\begin{equation}\label{eqn: Thedelayedmaxminsets}
\begin{split}
  \mathcal{X}_{1}(k_{\epsilon}+\ell,\epsilon _\ell)
    &=\{j \in \mathcal{V}:~
         \hat{x}_j[k_{\epsilon}+\ell]>\overline{x}^\star-\epsilon_\ell \},\\ 
       \mathcal{X}_{2}(k_{\epsilon}+\ell,\epsilon _\ell)
    &=\{j \in \mathcal{V}:~ 
             \hat{x}_j[k_{\epsilon}+\ell]
               < \underline{x}^\star+\epsilon_\ell \}. 
\end{split}
\end{equation}
For each fixed $\ell$, these sets are disjoint by \eqref{eqn: xstar} and 
$\epsilon_{{\ell}+1} < \epsilon_{{\ell}}$.
Here, we claim that in a finite number of steps, 
one of these sets will contain no normal agent. 
Note that this contradicts the assumption of
$\overline{x}^\star$ and $\underline{x}^\star$ being limits.
%{\color{black}this is in contradiction to 
%the assumption $\overline{x}^\star>\underline{x}^\star$.}

\begin{figure}
	\centering
	\vspace*{-19mm}
	\def \svgwidth{8cm}
	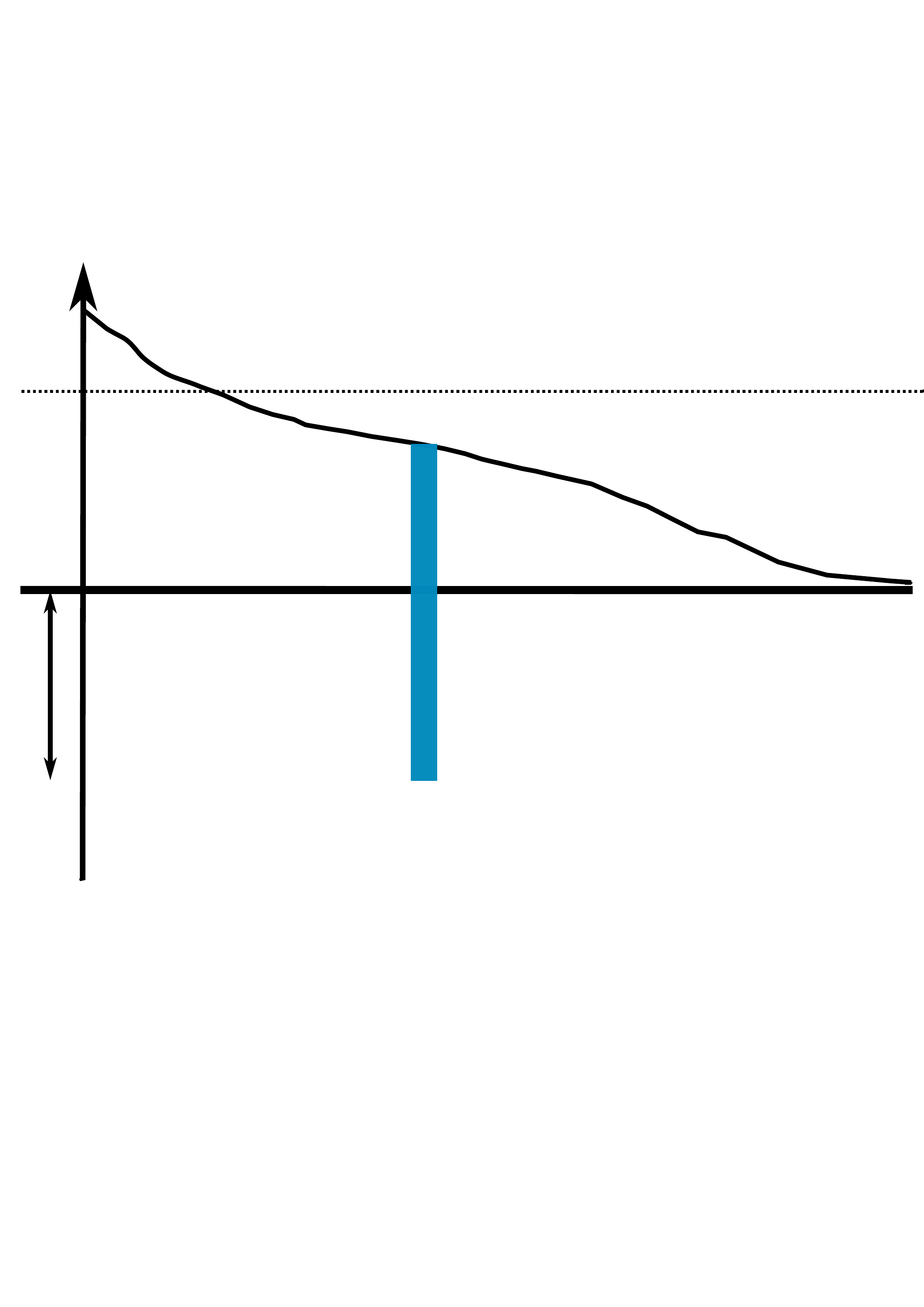
	\label{fig: SufSyncconproof}
	\vspace*{-35mm}
	\caption{A sketch for the proof of Theorem \ref{NecSufSynch}, where a set of normal agents around $\overline{x}^{\star}$ becomes smaller as time proceeds.}%	\vspace*{-2mm}
\end{figure} 

We start by considering $\mathcal{X}_1(k_\epsilon,\epsilon_0 )$
(see Fig.~\ref{fig: SufSyncconproof}).
%$\mathcal{X}_2(k_\epsilon,\epsilon_0)$. 
Because of the limit $\overline{x}^{\star}$,
one or more normal agents are contained
in this set or the set $\mathcal{X}_1(k_\epsilon +1,\epsilon_1 )$ from the next step. 
In what follows, we prove that $\mathcal{X}_1(k_\epsilon,\epsilon_0 )$ 
is nonempty. We can in fact show 
that each normal agent $j$ outside of $\mathcal{X}_1(k_\epsilon,\epsilon_0 )$ at time 
$k_{\epsilon}$ will remain outside of 
$\mathcal{X}_1(k_\epsilon +1,\epsilon_1 )$. 
Here, agent $j$ satisfies 
$\hat{x}_j^N[k_\epsilon] \leq \overline{x}^\star -\epsilon _0$.
From Remark~\ref{remark1}, each normal agent updates 
its position by taking a convex combination of the neighbors' 
positions at the current and previous time steps. 
Thus, by the choice of $\beta$, the position of agent $j$ 
%at the next time step 
is bounded as
\begin{align}
\hat{x}_j^N[k_\epsilon +1] 
& \leq (1-\beta)\overline {x}[k_{\epsilon }]+
    \beta (\overline{x}^{\star}- \epsilon  _0) \nonumber\\ 
& \leq (1-\beta ) ( \overline{x}^{\star}+\epsilon )
    +\beta (\overline{x}^{\star}-\epsilon_0 )\nonumber\\ 
& \leq  \overline{x}^{\star}- \beta \epsilon_0 + (1-\beta)\epsilon
 =\overline{x}^\star-\epsilon_1.
\label{eqn: normalagentbounds}
\end{align} 
Hence, agent $j$ is not in $\mathcal{X}_1(k_\epsilon +1,\epsilon_1 )$.  
Similarly, we can show that $\mathcal{X}_2(k_\epsilon,\epsilon_0)$
is nonempty.

Since the graph is $(f+1,f+1)$-robust, one of the conditions 1--3 from Definition \ref{(def:robustgraph)} holds:
\begin{enumerate}
\item[1.]
All agents in $\mathcal{X}_1(k_\epsilon,\epsilon_0)$ have 
$f+1$ neighbors from outside of 
$\mathcal{X}_1(k_\epsilon,\epsilon_0)$.
\item[2.]
All agents in $\mathcal{X}_2(k_\epsilon,\epsilon_0)$ have $f+1$ neighbors 
from outside of $\mathcal{X}_2(k_\epsilon,\epsilon_0)$.
\item[3.]
In the two sets in total, there are $f+1$ agents having at least $f+1$ neighbors from outside their own sets.
\end{enumerate}
In the first case, based on the definition of $\overline{x}^{\star}$ and $\overline{x}[k]$ there exists at least one normal agent inside the set $\mathcal{X}_1(k_\epsilon,\epsilon_0)$ having $f+1$ incoming links from outside of $\mathcal{X}_1(k_\epsilon,\epsilon_0)$. 
Similarly, in the second case, there exists one normal agent having $f+1$ incoming links from outside $\mathcal{X}_2(k_\epsilon,\epsilon_0)$. 
In the third case,
because the maximum number of malicious agents is $f$, there is one normal agent in $\mathcal{X}_1(k_\epsilon,\epsilon_0)$ or $\mathcal{X}_2(k_\epsilon,\epsilon_0)$ which has $f+1$ neighbors from outside the set it belongs to. 

Suppose that the set $\mathcal{X}_1(k_\epsilon,\epsilon_0)$ has this normal agent; denote its index by $i$. Because at each time, each normal agent ignores at most $f$ smallest neighbors and all of these $f+1$ neighbors are upper bounded by $\overline{x}^{\star}-\epsilon_0$, at least one of the agents affecting $i$ has a value smaller 
than or equal to $\overline{x}^{\star}-\epsilon_0$. From Remark \ref{remark1}, every normal agent
updates by a convex combination of position values of current and previous time steps. 
By \eqref{eqn:alpha_T} and the choice of $\beta$, 
one of the normal neighbors must be used in the update with DP-MSR.
%From Remark~\ref{remark1}, every normal agent updates 
%by a convex combination of position values of current 
%and previous time steps. 
%
Thus, for agent $i$, 
\begin{align*}
\hat{x}_i^N[k_\epsilon +1] 
& \leq (1-\beta)\overline {x}[k_{\epsilon }]
+ \beta (\overline{x}^\star- \epsilon  _0). %\nonumber\\ 
%   & \leq (1-\beta ) (\overline{r}^\star+\epsilon )
%           + \beta (\overline{r}^\star-\epsilon_0 )\nonumber\\ 
%   & \leq  \overline{r}^\star- \beta \epsilon_0 +(1-\beta)\epsilon
%     =  \overline{r}^\star - \epsilon_1.
%\label{eqn: normalagentbounds}
\end{align*}
%The inequalities in \eqref{eqn: normalagentbounds} imply that 
By \eqref{eqn: normalagentbounds}, 
$\hat{x}_i^N[k_{\epsilon}+1]$ 
is upper bounded by $\overline{x}^\star- \epsilon_1$. 
%
%Here let $\epsilon_1 =\beta \epsilon_0 -(1-\beta)\epsilon$. 
% Because the function  $g(\lambda)=\frac{\lambda \epsilon_0}{1-\lambda}$
 % is increasing for $\lambda \in [0,1)$, \eqref{eqn:betainterval} implies $\epsilon<\frac{\beta \epsilon_0}{1-\beta}$ or equivalently 
%$0<\beta \epsilon_0 
%-(1-\beta)\epsilon=\epsilon_1$. Since $\beta \leqslant 1/2$, we have $\epsilon < \frac{\beta \epsilon_0}{1-\beta} \leqslant \epsilon_0$.% $\0<\epsilon <\epsilon_0$.
Thus, at least, one of the normal agents in $\mathcal{X}_1(k_\epsilon,\epsilon_0)$ has decreased to $\overline{x}^{\star} -\epsilon_1$, and the number of normal agents in $\mathcal{X}_1(k_\epsilon +1,\epsilon_1)$ is less than $\mathcal{X}_1(k_\epsilon,\epsilon_0)$ (see Fig. 2). The same argument holds for 
%the case of 
$\mathcal{X}_2(k_\epsilon,\epsilon_0)$.
%Let $\epsilon_j=\beta \epsilon _{j-1} -(1- \beta)\epsilon$ for $j=2,\ldots,n_N$.
%We observe that $0<\epsilon_j < \epsilon_{j-1}$ because 
%by the bound \eqref{eqn:betainterval} on $\epsilon$, we have
%\begin{align}
 %\epsilon_j 
  %  &= \beta \epsilon_{j -1}-(1-\beta)\epsilon \nonumber \\
   % &= \beta ^2 \epsilon_{j -2}-\beta (1- \beta)\epsilon - (1- \beta)\epsilon = \cdots \nonumber \\
   % &= \beta ^j \epsilon_0 -(1-\beta ^j)\epsilon 
    %\geq \beta ^{n_N} \epsilon_0 -(1-\beta ^{n_N})\epsilon >0. \nonumber
%\end{align}
Hence, it follows that the number of normal agents in $\mathcal{X}_1(k_\epsilon+\ell,\epsilon_\ell)$ 
is less than that in $\mathcal{X}_1(k_\epsilon+\ell-1,\epsilon_{\ell-1})$ and/or 
the number of normal agents in  $\mathcal{X}_2(k_\epsilon+\ell,\epsilon_\ell)$ 
is less than that in $\mathcal{X}_2(k_\epsilon+\ell-1,\epsilon_{\ell-1})$.
Because the number of normal agents is finite, there is a time $\ell^\star \leq n_N$ 
where the set of normal agents in $\mathcal{X}_1(k_\epsilon +\ell ,\epsilon_\ell)$ 
and/or that in $\mathcal{X}_2(k_\epsilon + \ell ,\epsilon_\ell)$ is empty for $\ell \geq \ell^\star$. 
This fact contradicts the existence of the two limits $\overline{x}^{\star}$ and $\underline{x}^{\star}$.  
Thus, we conclude that $\overline{x}^{\star}=\underline{x}^{\star}$, i.e., 
all normal agents reach position consensus.

It is finally shown that all normal agents stop asymptotically, 
which is agreement in the velocity values. When the normal agents reach agreement in their positions, the controls \eqref{eq3} become
$u_i^N[k]\to -\alpha v_i^N[k]$ as $k\to \infty$.
By the dynamics \eqref{eqn: SecondOrderDynamicsDicrete} of the agents, 
it holds that
%\begin{align*}
$\hat{x}_i^N[k+1]
%  &= r_i[k] + T v_i[k] + \frac{T^2}{2} u_i[k]\\
\to \hat{x}_i^N[k] + T \left(1 - \alpha T/2\right) v_i^N[k]$
%\end{align*}
as $k\to \infty$.
%By the choice of $\alpha$ and $T$ in \eqref{eqn:alpha_T},
%which is necessary to have a second-order system in equation \eqref{matrixform}, 
Noting \eqref{eqn:alpha_T},
we arrive at $v_i^N[k] \to 0$ as $k\to \infty$.
%
%If there is agreement 
%among the positions of normal agents $i\in\mathcal{V}\setminus\mathcal{M}$, 
%using \eqref{eq3}, we can write 
%$u_i^N[k]\to -\alpha v_i^N[k]$ as $k\to \infty$.
%By substitution of this into \eqref{eq2}, we have
%\begin{align*}
%x_i^N[k+1]
%%  &= r_i[k] + T v_i[k] + \frac{T^2}{2} u_i[k]\\
%  &\to x_i^N[k] + T \left(1 - \frac{\alpha T}{2}\right) v_i^N[k]
%\end{align*}
%as $k\to \infty$.
%By the choice of $\alpha$ and $T$ in \eqref{eqn:alpha_T},
%%which is necessary to have a second-order system in equation \eqref{matrixform}, 
%it follows that $v_i^N[k] \to 0$ as $k\to \infty$.
\end{proof}

%We discuss the differences between the two models of $f$-total and $f$-local
%in the context of resilient consensus. 
In \cite{dibajiishiiSCL2015}, we have studied
the $f$-local model, where each normal agent has 
at most $f$ malicious agents as neighbors.
Clearly, there may be more malicious agents overall
than the $f$-total case. 
There, a sufficient condition for resilient consensus
is that the network is $(2f+1)$-robust. From
Lemma~\ref{lemma:robust graphs}, such graphs require more edges 
than $(f+1,f+1)$-robust graphs, as given
%which is the condition given
in the theorem above. 

The result is consistent with that for the first-order agent case in \cite{zhang1}. 
More from the technical side,
%we discuss the difficulties 
difficulties in dealing with second-order agent dynamics
can be described as follows:
%in comparison to the first-order case in \cite{zhang1}
In the proof above, an important step is to establish that
$\overline{x}[k]$ and $\underline{x}[k]$ defined in 
\eqref{eqn: SecondOrderVariables} are monotonically nonincreasing and 
nondecreasing, respectively. These properties do not hold
for the maximum position $\max\, \hat{x}^N[k]$ and the minimum 
position $\min\, \hat{x}^N[k]$ as in the first-order case.
Furthermore, as a consequence of this fact,
it is more involved %analysis is necessary in showing 
to show that the sets
$\mathcal{X}_1(k_\epsilon +{\ell},\epsilon_{\ell} )$ and
$\mathcal{X}_2(k_\epsilon+{\ell},\epsilon_{\ell})$ 
in \eqref{eqn: Thedelayedmaxminsets} 
become smaller as ${\ell}$ increases.
%{\color{black}
%***We also note that 
%the convergence rate to achieve consensus is exponential
%(see \cite{DibIsh:arxiv17}). 
%}

%*******************************************
As a corollary to Theorem~\ref{NecSufSynch}, we show that 
the convergence rate to achieve consensus is exponential. 

\begin{corollary}\label{cor:Synch}\rm
Under the assumptions in Theorem~\ref{NecSufSynch}, 
the network of agents with second-order dynamics 
reaches resilient consensus 
with an exponential convergence rate. 
\end{corollary}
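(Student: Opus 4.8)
The plan is to make the contradiction argument in the sufficiency part of Theorem~\ref{NecSufSynch} quantitative and uniform in the starting time, so that it delivers a geometric contraction rate. Let $\psi[k]:=\overline{x}[k]-\underline{x}[k]$ with $\overline{x}[k],\underline{x}[k]$ from \eqref{eqn: SecondOrderVariables}. Recall from that proof that $\overline{x}[k]$ is nonincreasing and $\underline{x}[k]$ nondecreasing for $k\ge1$ (so $\psi[k]$ is nonincreasing), that they converge to a common limit $c^\star$, and that $c^\star\in[\underline{x}[k],\overline{x}[k]]$, whence $\abs{\hat{x}_i^N[k]-c^\star}\le\psi[k]$ for every normal $i$. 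It therefore suffices to show that $\psi[k]$ decays geometrically, i.e.\ that there are an integer $N$ and a constant $c\in(0,1)$ --- depending only on $n_N$ and the constant $\beta\in(0,1)$ of Lemma~\ref{lemmapositionupdates} --- with $\psi[k+N]\le c\,\psi[k]$ for all $k\ge1$; iterating then gives $\psi[k]\le C_0\lambda^{k}$ for some $\lambda\in(0,1)$, and the inequality above turns this into exponential convergence of all normal positions to $c^\star$.

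To obtain the contraction, fix $k_0\ge1$, put $\mu_\ell:=\beta^{\ell}\psi[k_0]/2$, and form the sets of \eqref{eqn: Thedelayedmaxminsets} anchored at the \emph{current} extremes rather than at the limits: $\mathcal{X}_1(k_0+\ell)=\{j:\hat{x}_j[k_0+\ell]>\overline{x}[k_0]-\mu_\ell\}$ and $\mathcal{X}_2(k_0+\ell)=\{j:\hat{x}_j[k_0+\ell]<\underline{x}[k_0]+\mu_\ell\}$, which are disjoint since $2\mu_\ell\le\psi[k_0]$. Anchoring at $\overline{x}[k_0]$ and $\underline{x}[k_0]$ eliminates the slack $\epsilon$ used in the theorem: by monotonicity $\overline{x}[k]\le\overline{x}[k_0]$ for $k\ge k_0$, so \eqref{eqn: normalagentbounds} sharpens to $\hat{x}_j^N[k_0+\ell+1]\le(1-\beta)\overline{x}[k_0]+\beta(\overline{x}[k_0]-\mu_\ell)=\overline{x}[k_0]-\mu_{\ell+1}$ for any normal $j\notin\mathcal{X}_1(k_0+\ell)$, and symmetrically for $\mathcal{X}_2$; in particular, once a set contains no normal agent at some step it contains none at all later steps. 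Running the argument of Theorem~\ref{NecSufSynch} verbatim --- while both sets still contain normal agents, $(f+1,f+1)$-robustness together with $n_M\le f$ forces the total number of normal agents in $\mathcal{X}_1(k_0+\ell)\cup\mathcal{X}_2(k_0+\ell)$ to strictly decrease in $\ell$ --- shows that within $n_N$ steps at least one of the two sets becomes, and stays, free of normal agents. Hence for some fixed $i\in\{1,2\}$ both $\mathcal{X}_i(k_0+n_N)$ and $\mathcal{X}_i(k_0+n_N+1)$ contain no normal agent; taking $i=1$ for concreteness, this means $\max \hat{x}^N[k_0+n_N]\le\overline{x}[k_0]-\mu_{n_N}$ and $\max \hat{x}^N[k_0+n_N+1]\le\overline{x}[k_0]-\mu_{n_N+1}$, so $\overline{x}[k_0+n_N+1]\le\overline{x}[k_0]-\mu_{n_N+1}$ and therefore, with $N:=n_N+1$, $\psi[k_0+N]\le\psi[k_0]-\mu_{N}=(1-\beta^{n_N+1}/2)\,\psi[k_0]$; the choice $i=2$ is symmetric, so $c:=1-\beta^{n_N+1}/2\in(0,1)$ works.

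It remains to pass the rate to the velocities. The normal-agent dynamics read $v_i^N[k+1]=(1-\alpha T)v_i^N[k]-T\,(L^N[k]\hat{x}[k])_i$; by \eqref{eqn:alpha_T} we have $\alpha T\in(1,2)$, so $\abs{1-\alpha T}=\alpha T-1\in(0,1)$ is a stable scalar factor. The driving term is small because the DP-MSR trimming at time $k$ discards any malicious neighbor whose value exceeds $\max \hat{x}^N[k]$ or falls below $\min \hat{x}^N[k]$ (at most $f$ agents, all malicious, lie above the largest normal position), so every surviving term of $(L^N[k]\hat{x}[k])_i$ is a difference of values within an interval of length $\psi[k]$; hence $\abs{(L^N[k]\hat{x}[k])_i}\le\psi[k]$, which decays exponentially. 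A stable scalar recursion driven by an exponentially decaying input has an exponentially decaying solution (a routine discrete convolution/Gr\"onwall estimate), so $\abs{v_i^N[k]}\le C_1(\lambda')^{k}$ for some $\lambda'\in(0,1)$, completing the proof.

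I expect the main obstacle to be executing the contraction step cleanly: one must work with $\overline{x}[\cdot]$, a maximum over two consecutive times, rather than a one-step maximum, which is why emptiness of $\mathcal{X}_i$ is needed at two consecutive steps (forcing the window length $n_N+1$) and why the rate involves $\beta^{n_N+1}$; and one must verify that trimming really leaves only in-range values in $(L^N[k]\hat{x}[k])_i$ so that this term is genuinely $O(\psi[k])$. The remaining estimates are standard.
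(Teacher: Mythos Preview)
Your argument is correct and follows the same strategy as the paper: turn the shrinking-set argument of Theorem~\ref{NecSufSynch} into a uniform contraction of $V[k]=\overline{x}[k]-\underline{x}[k]$ over a window of length $O(n_N)$, then iterate. The paper's outline anchors its threshold sets at the consensus value $x^\star$ (borrowed from Theorem~\ref{NecSufSynch}) with a free parameter $\eta\in(0,1)$ and asserts $V(kn_N)\le(1-\beta^{n_N}\eta)^kV(0)$; you anchor instead at the current extremes $\overline{x}[k_0],\underline{x}[k_0]$, which eliminates the slack $\epsilon$ directly and does not require invoking the limit. You are also more careful than the paper's sketch on two points it passes over: the two-time-step nature of $\overline{x}[\cdot]$, which forces you to secure emptiness of the relevant set at two consecutive steps (hence the window $n_N+1$ rather than $n_N$), and the explicit exponential decay of the velocities via the stable scalar recursion $v_i^N[k+1]=(1-\alpha T)v_i^N[k]-T(L^N[k]\hat{x}[k])_i$ with $|1-\alpha T|<1$ and an $O(\psi[k])$ forcing term. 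These are implementation refinements rather than a different route; the core mechanism is identical.
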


{\color{black}
\begin{proof}
We outline the proof which follows a similar line of argument 
to that of Theorem~\ref{NecSufSynch},
but with the knowledge that the agents come to agreement. 
Let $V(k)=\overline{x}[k]-\underline{x}[k]$. We show that
this function decreases to zero exponentially fast as $k\rightarrow\infty$.

Take an arbitrary constant $\eta\in(0,1)$.
We introduce two sets as follows: For $k=0,1,\ldots,n_N$, let
\begin{align*}
 \mathcal{X}'_1(k) 
  &= \big\{
        j\in\mathcal{V}:~
         \hat{x}_j[k] 
            > \big(1-\beta^k \eta \big)
               \big(\overline{x}[0] - x^{\star}\big) + x^{\star}
     \big\},\\
 \mathcal{X}'_2(k) 
  &= \big\{
        j\in\mathcal{V}:~
         \hat{x}_j[k]
            < \beta^k \eta 
               \big(\underline{x}[0] - x^{\star}\big) + \underline{x}[0]
     \big\}.
\end{align*}
Clearly, for each $k$, the sets $\mathcal{X}'_1(k)$ and $\mathcal{X}'_2(k)$
are disjoint, so by $(f+1,f+1)$ robustness, there is one normal
agent $i$ in one of the sets having $f+1$ incoming links from outside
the set to which it belongs. If agent $i$ is in $\mathcal{X}'_1(k)$, 
then, similar to \eqref{eqn: normalagentbounds}, 
we can upper bound its position as
\begin{equation*}
  \hat{x}_i[k+1]
    \leq (1-\beta^{k+1}) \overline{x}[0]
           + \beta^{k+1} \big[(1-\eta)(\overline{x}[0]-x^\star)+x^\star\big].
\end{equation*}
Hence, in this case, agent $i$ is not in $\mathcal{X}'_1(k+1)$ at time $k+1$. 
On the other hand, if agent $i$ is in $\mathcal{X}'_2(k)$, then
\begin{equation*}
  \hat{x}_i[k+1]
    \geq (1-\beta^{k+1}) \underline{x}[0]
           + \beta^{k+1} \big[\eta(x^\star - \overline{x}[0]) 
                              + \overline{x}[0]\big],
\end{equation*}
implying that $\hat{x}_i[k+1]$ is outside of $\mathcal{X}'_2(k+1)$.
Since the number of normal agents is $n_N$, at time $k=n_N$,
both of the sets $\mathcal{X}'_1(n_N)$ and $\mathcal{X}'_2(n_N)$ do
not contain any normal agent. Hence, we can conclude that the 
maximum position $\overline{x}[k]$ and the minimum position 
$\underline{x}[k]$ 
of normal agents, respectively, decreased and increased since time 0.
More concretely, we have $V(n_N)\leq (1-\beta^{n_N}\eta)V(0)$.
By repeating this argument, we can establish that 
$V(k n_N)\leq (1-\beta^{n_N}\eta)^kV(0)$.
\end{proof}

This corollary also indicates that conventional consensus algorithms 
without malicious agents in the network have exponential convergence
rate. In the proposed MSR-type algorithm, the rate of convergence is affected
by two factors. One is that a number of edges are ignored and
not used for the updates, which will reduce the convergence rate. 
The other is that if malicious agents stay together with some 
of the normal agents, they can still influence the rate to achieve
consensus and slow it down.
}

\section{Networks with {\color{black}Partial} Asynchrony and Delay}\label{sect: asynchsystems}

So far, the underlying assumption in the model has been 
that all agents exchange their states at the same time instants. 
Moreover, they make updates based on the relative
locations of their neighbors without any time delay in the model.
In practice, however, the agents might not be synchronized
nor have access to the current data of all neighbors simultaneously. 
In this section, we extend the setup so that the agents are 
allowed to update at different times with
delayed information. 
%{\color{black}They may decide to make updates, for example, whenever their neighbors' information are available (with potential delays).}

%So far, the aforementioned consensus problem setting is not based 
%on a real distributed system. In fact, the realized version of the 
%problem is when the agents at each time step might have access to 
%its neighbors or might not receive any information from its neighbors. 

We would like to emphasize the difference in the {\color{black}partially} asynchronous 
agent model employed here from those in the 
resilient consensus literature.
%We develop an asynchronous MSR-based updating rule for the agents suitable 
%for vehicle networks, which highlights the contribution of this work.
In particular, we follow the approach generally assumed in 
%the literature of 
asynchronous consensus for the case without malicious agents; 
see, e.g., \cite{mesbahi,Yangfeng,Xiao2006} for single-integrator networks 
and \cite{LinP,Cheng-Lin,JiahuQin2012,JiahuHirche} for the double-integrator case.
That is, at the time for an update, each agent uses the most recently received positions
of its neighbors. This is a natural setting especially for autonomous
mobile robots or vehicles using sensors to locate their neighbors
in real time.

In contrast, the works \cite{Azadmanesh2002,Azadmanesh1993,LeBlancPaper,Vaidya} from the area of computer science consider asynchronous MSR-type algorithms based on the notion of \textit{rounds}
(for the case with first-order agents).
There, when each agent makes a transmission, it broadcasts its state together with its round $r$, 
representing the number of transmissions made so far.
The agent makes an update to obtain the new state corresponding to round $r+1$ by 
using the states of neighbors, but only when a sufficient number of those labeled 
with the same round $r$ are received. 
%Note that the communication in these settings is subject to delay as well, so 
Due to delays in communication, 
the states labeled with round $r$ may be received at
various times, causing potentially large delays in making the $(r+1)$th update for
some agents. 

Compared to the results in the previous section,
the analysis in the {\color{black}partially} asynchronous model studied here 
becomes more complicated.
Moreover, the derived condition is more restrictive because
there are additional ways for the malicious 
agents to deceive the normal ones.
For example, they may quickly move so that they appear to be at different positions 
for different normal agents, which may prevent them from coming together. 
%We will later see such behaviors through a numerical example.
%in this section the problem is defined and studied. The real 

\subsection{Asynchronous DP-MSR Algorithm}\label{Subsec: AsynchFormulation}
%We studied the problem without any time delay incorporated 
%in the model thus far.In practice, however, the agents 
%might not have access to the current data from their neighbors. 
%In addition to the time delays regarding data of neighbors 
%in the model, we consider asynchrony in updates. 

Here, we employ the control input taking account of 
possible delays in the position values from 
the neighbors as 
\begin{equation}
{\color{black}
  u_i[k]
   = \sum_{j \in \mathcal{N}_i} 
       a_{ij}[k]\bigl(
                  \hat{x}_j[k-\tau_{ij}[k]]-\hat{x}_i[k]
                \bigr) -\alpha  v_i[k], 
\label{eqn: DelayControl1}}
\end{equation}
where $\tau_{ij}[k] \in \mathbb{Z}_+$ denotes the delay 
in the edge $(j,i)$ at time $k$. 
%Hence, $\hat{x}_j[k-\tau_j^i[k]]-\hat{x}_i[k-\tau_j^i[k]]$ 
%is the position of agent $j$ relative to agent $i$. 
From the viewpoint of agent $i$, the most 
recent information regarding agent $j$ 
%that agent $i$ can access 
at time $k$ is the position at time $k-\tau_{ij}[k]$ relative 
to its own current position.
The delays are time varying and may be different at each edge, 
but we assume the common upper bound $\tau$ as
\begin{equation}\label{eqn:delay_tau}
 0 \leq \tau_{ij}[k] \leq \tau,~~
  (j,i)\in\mathcal{E},~k\in\mathbb{Z}_+.
\end{equation}
Hence, each normal agent becomes aware of 
the position of each of its neighbors at least 
once in $\tau$ time steps, but possibly at different
time instants. 
{\color{black}In other words, normal agents must update and 
transmit their information often enough to meet \eqref{eqn:delay_tau}.}
It is also assumed in \eqref{eqn: DelayControl1} that 
agent~$i$ uses its own current velocity. 
%In the case of first-order agent networks, 
%the non-updating agents are assumed to stop at the times 
%they do not make any update \cite{Bertsekas}. 
%However, in second-order networks since the agents move with an acceleration, 
%forcing them to stop suddenly is not possible. 
%
We emphasize that the value of $\tau$ in \eqref{eqn:delay_tau}
can be arbitrary and moreover
need not be known to the agents since this information is not used
in the update rule.
In \cite{Gao,Cheng-Lin,JiahuHirche}, 
time delays for {\color{black} partially} asynchronous cases 
have been studied for agents with second-order dynamics. 
%{\color{black}Note that fully asynchronous agents must also be facilitated with their own clocks \cite{JiahuHirche}. However, the problem setup in this section contains some level of synchrony regarding the same sampling times. Due to this reason, the network is so-called \textit{partially asynchronous} which is a common term in the literature for those update protocols with both delay and different update times \cite{Bertsekas}.}

As in the synchronous case, the malicious agents are assumed 
to be omniscient. Here, it means that they have prior knowledge 
of update times and $\tau_{ij}[k]$ for all links and $k \geq 0$. 
The malicious agents might take advantage of this 
knowledge to decide how they should make updates to 
confuse and mislead the normal agents.

To achieve resilient consensus, we employ a modified version
of the algorithm in Section~\ref{sect:SynchSystems},
called the asynchronous DP-MSR, outlined below. 
\begin{enumerate}
\item[1.] At each time step $k$, each normal agent $i$ 
decides whether to make an update or not. 
\item[2.]
If it decides to do so, then it uses the relative position values
of its neighbors $j\in\mathcal{N}_i$
based on the most recent values in the form of 
$\hat{x}_j[k-\tau_{ij}[k]]-\hat{x}_i[k]$ 
and then follows step~2
of the DP-MSR algorithm based on these values.
%It proceeds to step~3 of DP-MSR and 
Afterwards, it applies the control input 
\eqref{eqn: DelayControl1} by substituting $a_{ij}[k]=0$
for edges $(j,i)$ which are ignored in step~2 of DP-MSR.
\item[3.] Otherwise, it applies the control \eqref{eqn: DelayControl1} 
where the first term of position values of its neighbors 
remains the same as the previous time step,
and for the second term, its own current 
velocity is used. 
\end{enumerate}

The asynchronous version of the resilient consensus problem is stated
as follows: Under the $f$-total malicious model, find a condition on the
network topology so that the normal agents reach resilient consensus using the asynchronous DP-MSR algorithm. 

\subsection{Matrix Representation}\label{subsec: MatrixRepresenation}

Before presenting the main result of this section, we introduce some notation to represent the equations in the matrix form. Define the matrices $A_{\ell}[k]$, 
$ 0 \leq \ell \leq \tau$, by
\begin{align*}%\label{eqn: A_lMatrices}
 &\big(A_{\ell}[k]\big)_{ij}
    =\begin{cases} 
         a_{ij}[k] &
           \text{if $(j,i)\in\mathcal{E}[k]$ and
              $\tau_{ij}[k]=\ell$},\\
         0 & \text{otherwise}.
\end{cases}
\end{align*}
Then, let $D[k]$ be a diagonal matrix whose $i$th entry is given by
%\begin{equation}
$d_i[k] = \sum_{j=1}^{n} a_{ij}[k]$.
%\end{equation}
Now, the $n \times (\tau +1 ) n$ matrix $L_\tau[k]$ is defined as
\begin{equation*} %\label{eqn: Delayed-LapMatrix}
  L_\tau [k]
   = \begin{bmatrix}
      D[k]-A_0[k]~~\cdots~~~-A_\tau [k]
     \end{bmatrix}.
\end{equation*}
It is clear that the summation of each row is zero as in the Laplacian matrix $L[k]$. 

Now, the control input \eqref{eqn: DelayControl1} can be 
expressed as 
\begin{equation} \label{eqn: DelayControl3}
\begin{split}
  u^N[k]
   &= -L^N_\tau[k]z[k]
        -\alpha
           \begin{bmatrix}
             I_{n_N} & 0
           \end{bmatrix} v[k],\\
  u^M[k]&:~\text{arbitrary},
\end{split}
\end{equation} 
where $z[k]=[\hat{x}[k] ^T \hat{x}[k-1]^T \cdots \hat{x}[k-\tau]^T]^T$ 
is a $(\tau +1)n$-dimensional vector for $k \geq 0$ and $L^N_{\tau}[k]$ is 
a matrix formed by the first $n_N$ rows of $L_\tau[k]$. 
Here, $z[0]$ is the given initial position values of the agents {\color{black}and can be chosen arbitrarily.}
%Since each normal agent sends its last information at least once during $\tau$ steps, the normal agents are aware of the initial values of their neighbors even before their first update and they use the control \eqref{eqn: DelayControl1} with those neighbors' initial values.
By \eqref{eqn:rv} and \eqref{eqn: DelayControl3}, 
the agent dynamics can be written as
\begin{align} \label{eqn: DelayPosVel}
\hat{x}[k+1]
  &= \Gamma[k] z[k] + Qv[k] + \nonumber\frac{T^2}{2}\begin{bmatrix}
0 \\ I_{n_M}
\end{bmatrix} u^M[k], \\
v[k+1]&=-T\begin{bmatrix}
L^N_{\tau}[k] \\ 0
\end{bmatrix}z[k]+Rv[k]+T\begin{bmatrix}
0 \\ I_{n_M}
\end{bmatrix}u^M[k],
\end{align}
where $\Gamma[k]$ is an $n \times (\tau+1)n$ matrix given by
%where $\Xi[k]$, $\Pi[k]$, $H$, and $Q$ are given by
%\begin{align}\label{eqn: DelayMatricesNotation}
%&\Xi[k]:=\begin{bmatrix}
%[I_n~~0]-\frac{T^2}{2}L_\tau[k] \\ [I_{n\tau}~~0]
%\end{bmatrix}~~H:=\begin{bmatrix}
%(T-\frac{\alpha T^2}{2})I_n~~0]\\0
%\end{bmatrix}\\ &\Pi[k]:=
%\begin{bmatrix}
%-TL_\tau[k] \\ [I_{n\tau}~~0]
%\end{bmatrix}~~Q=\begin{bmatrix}
%[1-\alpha T)I_n~~0] \\ 0 
%\end{bmatrix}
%\end{align}
\begin{equation*}%\label{eqn: DelayMatricesNotation}
\Gamma[k]=[I_n~~0]-\frac{T^2}{2}\begin{bmatrix} L^N_\tau[k]  \\ 0
\end{bmatrix}
\end{equation*}
and $Q$ and $R$ are given in \eqref{eqn: MatrixNotation}. 
Based on the expression \eqref{eqn: DelayPosVel}, we can derive 
a result corresponding to Lemma~\ref{lemmapositionupdates} 
for the {\color{black}partially} asynchronous and delayed protocol. 
%By some algebraic simplifications on equations \eqref{eqn: DelayPosVel}, somewhat similar to what has been done in the proof of lemma 3.1 in \cite{dibajiishii2014}, the following result is obtained.
The proof is omitted since it is by direct calculation similar to 
Lemma~\ref{lemmapositionupdates} shown in \cite{dibajiishiiSCL2015}.

\begin{lemma}\label{Lemma: DelayFinalEqu}\rm
Under the control inputs \eqref{eqn: DelayControl3}, 
the position vector $\hat{x}[k]$ of the agents for $k \geq 1$ can be expressed as
\begin{align*}%\label{eqn: DelayFinalEqu}
  \hat{x}[k+1]
     &=\bigl[
          \Lambda_{1k}~~ \Lambda_{2k}
       \bigr]
         \begin{bmatrix}
	z[k]\\
	z[k-1]
	\end{bmatrix}\nonumber\\
     &\hspace*{8mm}\mbox{}+\frac{T^2}{2}\begin{bmatrix}
	0 \\ I_{n_M}
	\end{bmatrix}\bigl(u^M[k]+u^M[k-1]\bigr),
\end{align*}
	where 
\begin{align*}
	\Lambda_{1k}&= \bigl[R ~~0 \bigr]+\Gamma[k],~~
	\Lambda_{2k}= -R\,\Gamma[k-1]-QT\begin{bmatrix}   
	L^N_\tau[k] \\ 0
	\end{bmatrix}.
\end{align*}
	Furthermore, the matrix 
$\bigl[\Lambda_{1k}~~\Lambda_{2k}\bigr]$ is nonnegative and the sum of each of its first $n_N$ rows is one. 
\end{lemma}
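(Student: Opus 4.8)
The plan is to obtain the stated two-step position recursion by eliminating the velocity vector from the one-step dynamics in \eqref{eqn: DelayPosVel}, following the same scheme as the proof of Lemma~\ref{lemmapositionupdates} in \cite{dibajiishiiSCL2015} but now carrying the $\tau+1$ stacked blocks of $z[\cdot]$ along. First I would assemble three relations extracted from \eqref{eqn: DelayPosVel}: the position update at step $k$ (giving $\hat{x}[k+1]$ in terms of $z[k]$, $v[k]$, $u^M[k]$), the same update at step $k-1$ (giving $\hat{x}[k]$ in terms of $z[k-1]$, $v[k-1]$, $u^M[k-1]$), and the velocity recursion shifted by one step (giving $v[k]$ in terms of $v[k-1]$, $z[k-1]$, $u^M[k-1]$). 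The goal is then to clear $v[k]$ and $v[k-1]$ from the first relation.

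Two elementary observations make this routine. First, $Q$ and $R$ in \eqref{eqn: MatrixNotation} are block diagonal, each a scalar multiple of the identity on the normal block and on the malicious block, so they commute, and a direct check gives $Q=\tfrac{T}{2}(R+I_n)$. Second, every row of $L_\tau[k]$ sums to zero, as noted right after its definition, so every row of $\Gamma[k]$ sums to one. Substituting the shifted velocity recursion into the update for $\hat{x}[k+1]$ produces a term $QR\,v[k-1]$; commuting $Q$ past $R$, then replacing $Q\,v[k-1]$ via the step-$(k-1)$ position relation, and using $\hat{x}[k]=[\,I_n~~0\,]z[k]$ to rewrite $R\hat{x}[k]$ as $[\,R~~0\,]z[k]$, eliminates all velocities. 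Collecting the coefficient of $z[k]$ yields $\Lambda_{1k}=\bigl[R~~0\bigr]+\Gamma[k]$, collecting the coefficient of $z[k-1]$ yields $\Lambda_{2k}$ as in the statement, and the identity $QT-\tfrac{T^2}{2}R=\tfrac{T^2}{2}I_n$ collapses the two malicious-input coefficients into the single term $\tfrac{T^2}{2}(u^M[k]+u^M[k-1])$ entering only the malicious rows.

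For the remaining assertions, the row-sum statement is immediate: since $\Gamma[k]\mathbf{1}=\mathbf{1}$ (zero row sums of $L_\tau[k]$) and the Laplacian-type coupling block annihilates $\mathbf{1}$, the $R\mathbf{1}$ contributions in $\Lambda_{1k}\mathbf{1}$ and $\Lambda_{2k}\mathbf{1}$ cancel, leaving $[\,\Lambda_{1k}~~\Lambda_{2k}\,]$ applied to the all-ones vector equal to $\mathbf{1}$; taking the first $n_N$ rows gives the claim. Nonnegativity of the first $n_N$ rows of $[\,\Lambda_{1k}~~\Lambda_{2k}\,]$ is the step I expect to cost the most work. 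Written entrywise, every entry other than the $n_N$ diagonal entries of the current ($z[k]$-, resp. $z[k-1]$-) block is a nonnegative combination of the edge weights $a_{ij}[\cdot]\ge 0$, hence trivially nonnegative; only those diagonal entries can be critical. For $\Lambda_{1k}$ the diagonal entry of row $i$ equals $2-\alpha T-\tfrac{T^2}{2}d_i[k]$, nonnegative by the upper bound in \eqref{eqn:alpha_T} together with $d_i[k]\le 1$. For $\Lambda_{2k}$, after the cancellations forced by $Q=\tfrac{T}{2}(R+I_n)$, the diagonal entry of row $i$ is bounded below by $\alpha T-1-\tfrac{T^2}{2}$, nonnegative by the lower bound in \eqref{eqn:alpha_T} (this also uses $T\le 1$, which the feasibility of \eqref{eqn:alpha_T} forces, to keep the intermediate factor $1-\tfrac{T^2}{2}d_i$ positive). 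The only genuinely delicate point is the bookkeeping---keeping track of which of the three source relations each matrix entry originates from, and verifying the cancellation in the $\Lambda_{2k}$ diagonal entry---since everything else is exactly the direct computation already carried out in \cite{dibajiishiiSCL2015} for the delay-free case.
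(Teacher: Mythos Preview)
Your plan is correct and is exactly the route the paper intends: the paper omits the proof entirely, stating only that it ``is by direct calculation similar to Lemma~\ref{lemmapositionupdates} shown in \cite{dibajiishiiSCL2015},'' and your elimination of $v[k]$ and $v[k-1]$ via the one-step recursions, together with the entrywise verification of nonnegativity and row sums under \eqref{eqn:alpha_T}, is precisely that calculation carried out with the extra $\tau$ delayed blocks. One small remark: when you carry out the substitution you will find that the Laplacian block appearing in $\Lambda_{2k}$ is $L^N_\tau[k-1]$ rather than $L^N_\tau[k]$ (consistent with $\Phi_{2k}$ in Lemma~\ref{lemmapositionupdates}); this appears to be a typographical slip in the statement and does not affect anything downstream.
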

%\begin{remark}\label{Remark: DelayRem}\rm
%This lemma shows that for $k \geqslant 1$, the position vector $\hat{x}^N[k+1]$ is updated based on a convex combination of the position values of $\hat{x}[k-\tau-1]$ to $\hat{x}[k]$.  
%\end{remark}

\subsection{Resilient Consensus Analysis}\label{subsec: DelayRCAnalysis}

The following theorem is the main result of the paper,
addressing resilient consensus via the asynchronous DP-MSR 
in the presence of delayed information. 
The safety interval differs from the previous 
case and is given by
\begin{align}\label{eqn:  DelaySafetyInterval}
 \mathcal{S_\tau}
   &= \biggl[
         \min z^N[0]
           + \min \biggl\{0,
                    \biggl(
                      T - \frac{\alpha T^2}{2}
                    \biggr) v^N[0]
                  \biggr\},\notag\\
   &\hspace*{0.5cm}
      \max z^N[0]
        + \max \biggl\{0,
                 \biggl(
                   T - \frac{\alpha T^2}{2}
                 \biggr) v^N[0]
               \biggr\}
       \biggr].
\end{align}

\begin{theorem}\label{Theorem: DelaySuf}\rm
Under the $f$-total malicious model, 
the network of agents with second-order dynamics using 
the control in \eqref{eqn: DelayControl3} and 
the asynchronous DP-MSR algorithm reaches resilient consensus
only if the underlying graph $(f+1,f+1)$-robust.
Moreover, if the underlying graph is $(2f+1)$-robust,
then resilient consensus is attained 
with a safety interval given by \eqref{eqn:  DelaySafetyInterval}.	
%\begin{enumerate}[i.]
%\item with the synchronous updating rule, the network is $(f+1,f+1)$-robust.
%\item with the asynchronous updating rule, the network is $(2f+1)$-robust. 
%\end{enumerate}
\end{theorem}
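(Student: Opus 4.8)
The plan is to mirror the structure of the proof of Theorem~\ref{NecSufSynch}, handling necessity and sufficiency separately, but with the bookkeeping needed to accommodate asynchronous updates and delays. For the necessity part, I would reuse the counterexample construction almost verbatim: assuming the graph is not $(f+1,f+1)$-robust, pick the disjoint sets $\mathcal{V}_1,\mathcal{V}_2$ violating all three conditions of Definition~\ref{(def:robustgraph)}, place the normal agents of $\mathcal{V}_1$ at $a$, those of $\mathcal{V}_2$ at $b>a$, make the at-most-$f$ agents in $\mathcal{X}_{\mathcal{V}_1}^{f+1}\cup\mathcal{X}_{\mathcal{V}_2}^{f+1}$ malicious with constant values, and note that every normal agent in $\mathcal{V}_1$ or $\mathcal{V}_2$ has at most $f$ neighbors outside its own set and hence discards all of them under the asynchronous DP-MSR; the positions stay frozen at $a$ and $b$ regardless of update times and delays, so agreement fails. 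Delays only help here since a normal agent may use even staler outside values, all still equal to $a$ or $b$.

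For sufficiency under $(2f+1)$-robustness, I would first establish the safety condition with $\mathcal{S}_\tau$ from \eqref{eqn: DelaySafetyInterval}. The key objects are the running extrema over a sliding window: define $\overline{x}[k]=\max_{0\le\ell\le\tau+1}\max \hat{x}^N[k-\ell]$ and $\underline{x}[k]=\min_{0\le\ell\le\tau+1}\min \hat{x}^N[k-\ell]$ (the width-$(\tau+2)$ window is needed because Lemma~\ref{Lemma: DelayFinalEqu} expresses $\hat{x}[k+1]$ as a convex combination of entries of $z[k]$ and $z[k-1]$, i.e., of positions from times $k,k-1,\ldots,k-\tau-1$). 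Using Lemma~\ref{Lemma: DelayFinalEqu} and the fact that $[\Lambda_{1k}\ \Lambda_{2k}]$ is nonnegative with unit row sums on normal rows, plus the DP-MSR trimming that removes any malicious neighbor lying outside the current normal range, one shows $\max \hat{x}^N[k+1]\le\overline{x}[k]$ and similarly for the minimum, hence $\overline{x}[k]$ is nonincreasing and $\underline{x}[k]$ nondecreasing for $k$ large enough (after the initial transient handled as in the $k=1$ case of Theorem~\ref{NecSufSynch}); safety follows.

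For agreement, let $\overline{x}^\star,\underline{x}^\star$ be the limits of these monotone bounded sequences, assume for contradiction $\overline{x}^\star>\underline{x}^\star$, let $\beta\in(0,1)$ be the smallest nonzero entry of $[\Lambda_{1k}\ \Lambda_{2k}]$ over all $k$, and run the same shrinking-set argument. The crucial difference is that an agent can sit idle for up to $\tau$ steps, so I would not insist the ``large'' set shrink at every step but rather over each block of $\tau+1$ consecutive steps: define $\mathcal{X}_1(k_\epsilon+m(\tau+1),\epsilon_m)$ with a geometric sequence $\epsilon_{m+1}=\beta^{\tau+1}\epsilon_m-(1-\beta^{\tau+1})\epsilon$ (chosen positive via an inequality analogous to \eqref{eqn: xstar}), and show that in each such block at least one normal agent leaves $\mathcal{X}_1$ or $\mathcal{X}_2$: since the graph is $(2f+1)$-robust — equivalently, by Lemma~\ref{lemma:robust graphs}, $(f+1,f+1)$-robust with a safety margin of $f$ — one of these disjoint sets contains a normal agent with $2f+1$ incoming edges from outside it; after DP-MSR discards the $f$ largest and the malicious ones (at most $f$ total), at least one genuinely exterior normal value below $\overline{x}^\star-\epsilon_m$ survives and is used with weight at least $\beta$, whether the agent updates at that step or uses a retained value from an earlier update (which also originated from an update where trimming was applied, by a short induction on the ``last update time''). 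After at most $n_N$ blocks one set is emptied of normal agents, contradicting the limits; this forces $\overline{x}^\star=\underline{x}^\star$. Velocity agreement then follows exactly as in Theorem~\ref{NecSufSynch}: once positions converge, $u_i^N[k]\to-\alpha v_i^N[k]$, and \eqref{eqn:alpha_T} yields $v_i^N[k]\to0$.

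The main obstacle I anticipate is the step handling \emph{non-updating} agents: when agent $i$ does not update at time $k$, its position still evolves by its own dynamics, and one must verify that the retained ``first term'' of \eqref{eqn: DelayControl1} — a stale convex combination to which DP-MSR trimming was applied at the last update time — still keeps $\hat{x}_i^N$ inside the current (shrunk) window. Concretely, one needs that the ``age'' of any information actually influencing a normal agent's position at time $k$ is bounded by $\tau+1$ relative to $k$, so that the sliding-window extrema genuinely dominate; making this precise requires carefully tracking, for each normal agent, its most recent update time and arguing that a string of $\tau$ consecutive non-updates cannot occur (because \eqref{eqn:delay_tau} forces each agent to transmit, hence update, at least once every $\tau$ steps). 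Once this bounded-staleness invariant is nailed down, the rest is a routine adaptation of the synchronous proof with $\beta$ replaced by $\beta^{\tau+1}$.
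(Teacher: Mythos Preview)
Your plan is essentially the paper's own approach, and the necessity and safety parts go through as you describe. Two technical points deserve sharpening, however, and they are precisely where the asynchronous proof differs from the synchronous one.

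First, the shrinking sets must be defined over \emph{normal} agents only, i.e., $\mathcal{X}_{1\tau}(k_\epsilon+\ell,\epsilon_\ell)=\{j\in\mathcal{V}\setminus\mathcal{M}:\hat{x}_j[k_\epsilon+\ell]>\overline{x}^\star_\tau-\epsilon_\ell\}$, not over all of $\mathcal{V}$ as in Theorem~\ref{NecSufSynch}. The reason is exactly what forces the jump from $(f+1,f+1)$- to $(2f+1)$-robustness: with delays, a malicious agent lying outside $\mathcal{X}_{1\tau}$ at time $k_\epsilon$ may have been (or will be) at a completely different position at the time stamp actually used by a normal neighbor, so its ``exterior'' value is worthless. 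Applying $(2f+1)$-robustness to the normal-only sets gives a \emph{normal} agent $i$ with $2f+1$ incoming edges from outside; at most $f$ of those neighbors are malicious, so at least $f+1$ are normal and genuinely bounded. DP-MSR removes at most $f$ values from below, so one such normal exterior value survives. Your phrase ``discards the $f$ largest and the malicious ones'' conflates the trimming rule with the adversary count; the algorithm does not know who is malicious.

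Second, the paper runs the $\epsilon$-sequence step by step, $\epsilon_{\ell+1}=\beta\epsilon_\ell-(1-\beta)\epsilon$ for $\ell=0,\ldots,(\tau+1)n_N-1$, rather than at block boundaries. This is not merely cosmetic: to bound $\hat{x}_i[k_\epsilon+\tau+1]$ you need the delayed value of a normal exterior neighbor $j$ at \emph{some} time in $\{k_\epsilon,\ldots,k_\epsilon+\tau\}$, and for that you must first establish that $j$ stays outside $\mathcal{X}_{1\tau}(k_\epsilon+\ell,\epsilon_\ell)$ for every intermediate $\ell$. The one-step contraction bound (your inequality with $\beta$) furnishes exactly this, and it applies uniformly whether or not $j$ ``updates'' at that step, because Lemma~\ref{Lemma: DelayFinalEqu} already encodes the non-update case (step~3 of the algorithm just freezes $a_{ij}[k]$, which is still a row of $[\Lambda_{1k}~\Lambda_{2k}]$ with the convexity property). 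So your anticipated obstacle about non-updating agents dissolves: the bounded-staleness invariant you want is precisely the step-indexed monotonicity of $\mathcal{X}_{1\tau}$, and no separate induction on ``last update time'' is needed.
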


{\color{black}
The proof of this theorem given below follows an argument 
similar to that of Theorem~\ref{NecSufSynch}. 
However, the problem is more general with the delay bound
$\tau \geq 0$ in \eqref{eqn:delay_tau}. 
This in turn results in more involved 
analysis with subtle differences. 
We provide further discussions later. }

\begin{proof} 
(Necessity)~The synchronous network is a special case of {\color{black}partially} 
asynchronous ones with $\tau=0$. Thus, the necessary condition
in Theorem~\ref{NecSufSynch} is valid. 

(Sufficiency)~We first show that the safety condition holds. 
For $k=0$, by \eqref{eqn:  DelaySafetyInterval}, %it is obvious that 
we have $\hat{x}_i[0]\in\mathcal{S}_{\tau}$ for $i\in\mathcal{V}\setminus\mathcal{M}$.
For $k=1$, by \eqref{eqn: DelayPosVel},
the positions of normal agents can be expressed as
\begin{equation}\label{eqn: Delay:k=1 update}
	\hat{x}^N[1]
	= \biggl(
	\begin{bmatrix}
	I_{n_N} & 0
	\end{bmatrix} - \frac{T^2}{2}L^N_\tau [0]
	\biggr) z[0] 
	+ \biggl(
	T-\frac{\alpha T^2}{2}
	\biggr) v^N[0].	
\end{equation}
This vector may be affected by the malicious agents 
%not through their velocities but their initial positions. 
through their initial positions. 
However, %observe that 
by step~2 in DP-MSR, 
for any normal agent, if some 
neighbors are malicious and are outside of $[\min z^N[0],\max z^N[0]]$,
then they will be ignored. 
%recall that the number of such agents does not exceed $f$.
In \eqref{eqn: Delay:k=1 update}, the matrix $[I_{n_N} 0]- (T^2/2)L_\tau ^N[0]$ is 
nonnegative and its row sums are one. 
%This means that 
Hence, the first term on the right-hand side of 
\eqref{eqn: Delay:k=1 update} becomes
convex combinations of values in the interval
$[\min z^N[0],\max z^N[0]]$.
Thus, we have
$\hat{x}_i[1]\in\mathcal{S_\tau}$ for $i\in\mathcal{V}\setminus\mathcal{M}$.
	
%For time $k\geq 2$, we can consider by same argument all malicious agents that have position values outside the safety interval $\mathcal{S}$ would be ignored, and considering Remark~\ref{remark1}, they make update based on those position values in their neighbors at time steps $k-1$ and $k-2$ which are in safety interval. It concludes that the positions of the normal agents remain within the interval given by minimum and maximum of the matrix $Z$ in \eqref{eqn:Z} and safety condition is satisfied.
	
Next, for $k \geq 1$, define two variables by
\begin{equation}\label{eqn: DelayVarDefine}
\begin{split}
\overline{x}_{\tau}[k]
 &=\max \left(
          \hat{x}^N[k],\hat{x}^N[k-1],\ldots,\hat{x}^N[k-\tau-1]
         \right),\\
\underline{x}_{\tau} [k]
 &=\min\left(
         \hat{x}^N[k],\hat{x}^N[k-1],\ldots,\hat{x}^N[k-\tau-1]
       \right).
\end{split}
\end{equation} 
Here, we claim that $\overline{x}_{\tau}[k]$ is a nonincreasing 
function of $k\geq 1$. By Lemma \ref{Lemma: DelayFinalEqu}, 
at time $k \geq 2$, each normal agent updates its position 
based on a convex combination of the neighbors' positions  
from $k-1$ to $k-\tau-1$. 
If some neighbors are malicious and stay outside of
the interval determined by the normal agents' positions,
then they are ignored in step~2 of DP-MSR. 
Hence, we obtain
%\begin{align*}
$\hat{x}_i[k+1] 
\leq \max \left (
            \hat{x}^N[k],\hat{x}^N[k-1],\ldots,\hat{x}^N[k-\tau-1]
          \right)$
%&~\qquad~~~\qquad~\mbox{\quad~~~~~~~~~~~\qquad~~~} i\in\mathcal {V}\setminus\mathcal {M}.
%$\end{align*}
for $i\in\mathcal {V}\setminus\mathcal {M}$.
It also follows that
\begin{align*}
  \hat{x}_i[k]
    &\leq \max \left (\hat{x}^N[k],\ldots,\hat{x}^N[k-\tau-1]\right),\\
  \hat{x}_i[k-1]
    &\leq \max \left (\hat{x}^N[k],\ldots,\hat{x}^N[k-\tau-1]\right),\\
%\ldots,\\
	\qquad\qquad\quad&~~\vdots\\
  \hat{x}_i[k-\tau]
    &=\hat{x}_i[k+1-(\tau +1)] \\
	~~\quad\quad\quad &\leq \max \left (\hat{x}^N[k],\ldots,\hat{x}^N[k-\tau-1]\right)
	\end{align*}
	for $i\in\mathcal {V}\setminus\mathcal {M}$. Hence, we have 
	\begin{align*}
	\overline{x}_{\tau}[k+1]&=\max\left(\hat{x}^N[k+1],\ldots,\hat{x}^N[k+1-(\tau+1)]\right)\\
	&\leq \max \left(\hat{x}^N[k],\ldots,\hat{x}^N[k-(\tau+1)]\right)=\overline{x}_{\tau}[k].
	\end{align*}
We can similarly prove 
that $\underline{x}_{\tau}[k]$ is nondecreasing in time. 
This indicates that for $k \geq 2$, we have $\hat{x}_i[k] \in \mathcal{S_\tau}$ for $i \in \mathcal{V}\backslash \mathcal{M}$. Thus, we have shown the safety condition.

	In the rest of the proof, we must show the agreement condition.
	As $\overline{x}_{\tau}[k]$ and $\underline{x}_{\tau}[k]$ are monotone functions 
	and contained in $ [\underline{x}_{\tau}[1],\overline{x}_{\tau}[1]]$, 
        both of their limits exist, which are
	%denote the limits of $\overline{r}[k]$ and $\underline{r}[k]$
        denoted by $\overline{x}^\star_{\tau}$ and 
        $\underline{x}^\star_{\tau}$, respectively.
	We claim that the limits in fact satisfy
	$\overline{x}^\star_{\tau}=\underline{x}^\star_{\tau}$, i.e., the positions of the normal
	agents come to consensus.
	The proof is by contradiction, so we assume that $\overline{x}^\star_{\tau}>\underline{x}^\star_{\tau}$.

	First, let $\beta$ be the minimum 
	nonzero element over all possible cases of $\bigl[\Lambda_{1k}~\Lambda_{2k}\bigr]$.
	From \eqref{eqn:alpha_T} and the bound $\gamma$ on $a_{ij}[k]$, 
	it holds that $\beta\in(0,1)$.
	Choose $\epsilon _0>0$ and $\epsilon>0$ small enough that
	\begin{align}
	& \underline{x}^\star_{\tau} +\epsilon _0
	< \overline{x}^\star_{\tau}-\epsilon _0,~~
	\epsilon
	< \frac{\beta^{(\tau+1)n_N}\epsilon_0}{1-\beta ^{(\tau+1)n_N}}.
	\label{eqn: xstarDelay}
	\end{align}
	We next take the sequence $\{\epsilon_\ell\}$ via 
	\[
	\epsilon_{\ell+1}
	= \beta \epsilon_{\ell} - (1-\beta)\epsilon,~~\ell=0,1,\ldots,(\tau+1)n_{N}-1.
	\]
	It can be shown that
	$0<\epsilon_{\ell+1} < \epsilon_{\ell}$ for all $\ell$.
	In particular, they are positive because 
        by \eqref{eqn: xstarDelay}, it holds that
	\begin{align*}
	\epsilon_{(\tau+1)n_N}  
	&= \beta^{(\tau+1)n_N} \epsilon_{0}  
	- \sum_{k=0}^{(\tau+1)n_N-1}
	\beta^k (1-\beta) \epsilon\\
	&= \beta^{(\tau+1)n_N} \epsilon_{0}  
	-\bigl(
	1- \beta^{(\tau+1)n_N} 
	\bigr)\epsilon
	> 0.
	\end{align*}
	We also take $k_\epsilon\in\mathbb{Z}_{+}$ such that 
	$\overline{x}_{\tau}[k] 
	< \overline{x}^\star_{\tau} +\epsilon$ and 
	$\underline{x}_{\tau}[k]
	> \underline{x}^\star_{\tau} -\epsilon$ for $k\geq k_\epsilon$.
	Due to convergence of $\overline{x}_{\tau}[k]$ and $\underline{x}_{\tau}[k]$,
	such $k_\epsilon$ exists. 
	For the sequence $\{\epsilon_\ell\}$,
	let 
	\begin{align*}
	\mathcal{X}_{1\tau}(k_{\epsilon}+\ell,\epsilon_\ell)
	&=\{j \in \mathcal{V}\backslash \mathcal{M}:~
	\hat{x}_j[k_{\epsilon}+\ell]>\overline{x}^\star_{\tau}-\epsilon_\ell \},\\ 
	\mathcal{X}_{2\tau}(k_{\epsilon}+\ell,\epsilon_\ell)
	&=\{j \in \mathcal{V}\backslash \mathcal{M}:~ 
	\hat{x}_j[k_{\epsilon}+\ell]<\underline{x}^\star_{\tau}+\epsilon_\ell \}. 
	\end{align*}
	These two sets are disjoint by \eqref{eqn: xstarDelay} and 
	$0<\epsilon_{\ell+1} < \epsilon_{\ell}$.

	Next, we must show that 
	%there exists such 
	%the sequence $\{\epsilon_i\}$ makes
	one of the two sets becomes empty in a finite number 
        of steps. This clearly contradicts
        the assumption on $\overline{x}^\star_{\tau}$ 
	and $\underline{x}^\star_{\tau}$ being the limits.
	Consider the set $\mathcal{X}_{1\tau}(k_\epsilon,\epsilon_0 )$.
	Due to the definition of $\overline{x}_{\tau}[k]$ in \eqref{eqn: DelayVarDefine} and its limit $\overline{x}_{\tau}^\star$, 
one or more normal agents are contained in the union of the sets 
$\mathcal{X}_{1\tau}(k_\epsilon +\ell,\epsilon_\ell)$ 
for $0\leq \ell \leq \tau+1$. 
We claim that $\mathcal{X}_{1\tau}(k_\epsilon,\epsilon_0 )$ is in fact 
nonempty. To prove this, 
it is sufficient to show that if a normal agent $j$ is not in 
$\mathcal{X}_{1\tau}(k_\epsilon+\ell, \epsilon_{\ell})$, 
then it is not in $\mathcal{X}_{1\tau}(k_\epsilon+\ell+1, \epsilon_{\ell+1})$
for $\ell =0,\ldots,\tau$.

Suppose agent $j$ satisfies
$\hat{x}_j[k_\epsilon+\ell] \leq \overline{x}_\tau^\star - \epsilon_{\ell}$.
From Lemma~\ref{Lemma: DelayFinalEqu}, every normal agent updates
its position to a convex combination of the neighbors' position values 
at the current or previous times. 
Though the neighbors may be malicious here,
the ones at positions greater than $\overline{x}_{\tau}[k_{\epsilon }+\ell]$
are ignored in step~2 of DP-MSR.
%It thus follows that 
Hence, the position of agent $j$ at the next time step 
is bounded as
\begin{align}
  &\hat{x}_j[k_\epsilon +\ell+1] 
    \leq (1-\beta)\overline{x}_{\tau}[k_{\epsilon }+\ell]
  	  + \beta (\overline{x}^\star_{\tau}- \epsilon _\ell) \nonumber\\ 
 &\hspace*{1cm} 
    \leq (1-\beta ) (\overline{x}^\star_{\tau}+\epsilon )
	+ \beta (\overline{x}^\star_{\tau}-\epsilon_\ell )\nonumber\\ 
 &\hspace*{1cm} 
    \leq  \overline{x}^\star_{\tau}- \beta \epsilon_\ell 
                  +(1-\beta)\epsilon%\nonumber\\
    =  \overline{x}^\star_{\tau} - \epsilon_{\ell+1}.
\label{eqn: Delaynormalagentbounds}
\end{align}
It thus follows that agent $j$ is not in $\mathcal{X}_{1\tau}(k_\epsilon+\ell +1,\epsilon_{\ell +1} )$.  
	%The same holds for $\mathcal{X}_{2\tau}(k_\epsilon,\epsilon_0)$ and thus, 
	%it it nonempty. 
	This means that the cardinality of the set $\mathcal{X}_{1\tau}(k_\epsilon+\ell,\epsilon_\ell)$ is nonincreasing for $\ell=0,\ldots,\tau+1$. 
	%The same can be exhibited for the sets $\mathcal{X}_{1\tau}(k_\epsilon+\ell,\epsilon_\ell)$.
	The same holds for 
	$\mathcal{X}_{2\tau}(k_\epsilon+\ell,\epsilon_\ell)$. 
	
	We next show that one of these two sets in fact becomes empty in finite time. 
	By $(2f+1)$-robustness, between the two nonempty disjoint sets 
	$\mathcal{X}_{1\tau}(k_\epsilon,\epsilon _{0})$ and 
	$\mathcal{X}_{2\tau}(k_\epsilon,\epsilon _0)$, 
	one of them has an agent with $2f+1$ incoming links from outside the set. 
	Suppose that $\mathcal{X}_{1\tau}(k_{\epsilon},\epsilon _0)$ 
	has this property and let $i$ be the agent in this set which has $2f+1$ neighbors 
	outside $\mathcal{X}_{1\tau}(k_ \epsilon ,\epsilon _0)$. 
	Since there are at most $f$ malicious neighbors for this normal agent, 
	there are at least $f+1$ normal neighbors outside 
	$\mathcal{X}_{1\tau}(k_ \epsilon ,\epsilon _0)$;
	by the argument above, these normal agents
	will not be in $\mathcal{X}_{1\tau}(k_\epsilon+\ell ,\epsilon _{\ell})$ 
	for $0\leq\ell\leq \tau$.
	By step~2 of DP-MSR, one of these normal neighbors
	must be used in the updates of agent $i$ at any time. 
	In particular, when agent $i$ makes an update at time $k_{\epsilon}+\tau$, 
	a normal agent's delayed position is used, upper bounded by
	$\overline{x}^\star_{\tau}-\epsilon _{\tau}$.
	%An upper bound on 
	%	the position of this normal neighbor is $\overline{x}^\star_{\tau}-\epsilon _0$. 
	%From Remark~\ref{remark1}, every normal agent updates 
	%by a convex combination of position values of current 
	%and previous time steps. 
	%
	It thus follows that, at time $k_{\epsilon}+\tau$, when agent $i$ makes an update, 
	its position can be bounded as
	\begin{align*}%\label{eqn: TauTimestepDecrease}
	\hat{x}_i[k_\epsilon +\tau+1] 
	& \leq (1-\beta)\overline {x}_{\tau}[k_{\epsilon }+\tau]
	+ \beta (\overline{x}^\star_{\tau}- \epsilon  _{\tau}).
	\end{align*}
	By \eqref{eqn: Delaynormalagentbounds}, we have
	$\hat{x}_i[k_{\epsilon}+\tau+1]\leq\overline{x}^\star_{\tau}- \epsilon_{\tau+1}$. 
	%By the choice of $\epsilon$, $\beta \epsilon _0 -(1-\beta)\epsilon$ is 
	%positive. Let $\epsilon _1 = \beta \epsilon _0 -(1-\beta)\epsilon$.
	We can conclude that if agent $i$ in $\mathcal{X}_{1\tau}(k_ \epsilon,\epsilon_{0})$ 
	has $2f+1$ incoming links from outside the set, then 
	it goes outside of 
	$\mathcal{X}_{1\tau}(k_\epsilon + \tau + 1,\epsilon_{\tau+1})$ 
        after $\tau+1$ steps.
	Consequently, $\mathcal{X}_{1\tau}(k_\epsilon +\tau+1,\epsilon_{\tau+1})$ 
        has the cardinality 
	smaller than that of $\mathcal{X}_{1\tau}(k_\epsilon,\epsilon _{0})$, that is,
	$\abs{\mathcal{X}_{1\tau}(k_\epsilon+\tau+1,\epsilon_{\tau+1})} 
	< \abs{\mathcal{X}_{1\tau}(k_\epsilon,\epsilon_{0})}$.
	Likewise, it follows that if $\mathcal{X}_{2\tau}(k_\epsilon ,\epsilon_{0})$ 
	has an agent with $2f+1$ incoming links from the rest, then
	%	\begin{equation*}\label{shrinking}
	$\abs{\mathcal{X}_{2\tau}(k_\epsilon +\tau+1,\epsilon_{\tau+1})} 
	< \abs{\mathcal{X}_{2\tau}(k_\epsilon,\epsilon_{0})}$.
	%	\end{equation*} 
	
Since there are only a finite number $n_N$ of normal agents,
we can repeat the steps above until one of the sets
$\mathcal{X}_{1\tau}(k_\epsilon +\tau+1,\epsilon _{\tau+1})$ and 
$\mathcal{X}_{2\tau}(k_\epsilon +\tau+1,\epsilon _{\tau+1})$ 
becomes empty; it takes no more than $(\tau+1)n_N$ steps.
Once the set becomes empty, it will remain so indefinitely.
%
%It is now clear that at least
%one of the sets $\mathcal{X}_{1\tau}(k_\epsilon +\ell,\epsilon_{\ell})$ or 
%$\mathcal{X}_{2\tau}(k_\epsilon +\ell,\epsilon _{\ell})$ is empty for all 
%$\ell\geq \ell^*$.
%
%As mentioned earlier, 
This contradicts the assumption that $\overline{x}^\star_{\tau}$ 
and $\underline{x}^\star_{\tau}$ are the limits.
Therefore, we obtain $\overline{x}^\star_{\tau}=\underline{x}^\star_{\tau}$.
\end{proof}

{\color{black}
It is interesting to note that the bound $\tau$ on the delay 
time in \eqref{eqn:delay_tau}
can be arbitrary and moreover need not be known
to any of the agents. Hence, in this respect, 
the condition~\eqref{eqn:delay_tau} is not restrictive. 
A trade-off concerning delay is that longer delays will 
result in slower convergence.
This property can be explained in the proof above as follows.
We observe 
from the monotonicity of $\overline{x}_{\tau}[k]$ and
$\underline{x}_{\tau}[k]$ in \eqref{eqn: DelayVarDefine}
that the normal agents come together if we look at the 
$\tau+1$ time step horizon in the past. 
Finally, note that the convergence rate of the asynchronous update case
can be shown to be exponential as %in Corollary~\ref{cor:Synch}
in the synchronous case.
}
%*******************************************
%*******************************************

Theorem~\ref{Theorem: DelaySuf} demonstrates that 
to achieve resilient consensus
in the {\color{black}partially} asynchronous delayed setting requires 
a more dense graph than that in the synchronous setting
in Theorem~\ref{NecSufSynch}.
Indeed, from Lemma~\ref{lemma:robust graphs},
a graph that is $(2f+1)$-robust is also $(f+1,f+1)$-robust.
On the other hand, the difference in graph robustness 
appearing in the two theorems have certain effects on the proofs. 
For Theorem~\ref{Theorem: DelaySuf}, 
the sets $\mathcal{X}_{1\tau}[k]$ and 
$\mathcal{X}_{2\tau}[k]$ do not include 
the malicious agents, while the sets $\mathcal{X}_{1}[k]$ 
and $\mathcal{X}_{2}[k]$ in the proof of 
Theorem~\ref{NecSufSynch} involve both normal and 
malicious agents. This difference originates from 
the definitions of $(2f+1)$- and $(f+1,f+1)$-robust graphs.
In fact, in the $f$-total model,
%knowing that there are at most $f$ adversaries in the network, 
we see that the second $f+1$ in 
$(f+1,f+1)$ guarantees that at least one of 
the agents in $\mathcal{X}_{1}[k]$ or $\mathcal{X}_{2}[k]$ 
is normal and has enough number of incoming links for convergence. 
However, $(2f+1)$-robustness is a more 
local notion. Since the worst-case behavior of the 
malicious agents happens in the neighborhood of each 
normal agent, the sets $\mathcal{X}_{1\tau}[k]$ and 
$\mathcal{X}_{2\tau}[k]$ are defined in this way.

%It is straightforward to see that 
%The necessary 
%condition in Theorem~\ref{NecSufSynch}
%holds for the delayed {\color{black}partially} asynchronous case 
%as stated below.

%\begin{proposition}\label{Prop: DelayNec}\rm
%Under the $f$-total malicious model, if the network of 
%agents with second-order dynamics using the control 
%in \eqref{eqn: DelayControl3} and the asynchronous DP-MSR 
%algorithm reaches resilient consensus, 
%then the underlying graph is $(f+1,f+1)$-robust.
%\end{proposition}

%\begin{proof}
%The synchronous network is a special case of {\color{black}partially} asynchronous ones
%with $\tau=0$. Therefore, the necessary condition obtained 
%in Theorem~\ref{NecSufSynch} is valid.
% in the generalized case with asynchrony and delayed information.
%\end{proof}

%As a result, we here employed DP-MSR in an asynchronous 
%network in the presence of information delays. 
%An expected difference is the lower convergence due 
%to delayed information, which can be seen 
%in the simulation later.

In the above result, we observe that there is a gap between the sufficient condition 
%in Theorem~\ref{Theorem: DelaySuf}
and the necessary condition. 
%in Proposition~\ref{Prop: DelayNec}.
However, this gap may be essential to the problem. 
%To see the strength of Theorem~\ref{Theorem:SufCond}, 
{\color{black}
To illustrate this point, we present 
a $2f$-robust graph in Fig.~\ref{fig: AsyncProp}, 
which is not resilient to $f$ totally bounded faults as we show formally. 

This graph is composed of four subgraphs 
$\mathcal{G}_i$, $i=1,\ldots,4$, 
and each of them is a complete graph.
The graph $\mathcal{G}_1$ consists of $4f$ agents and the rest
have $f$ agents.
%The numbers of their agents are given by
%$\abs{\mathcal{V}_1}  =\abs{\mathcal{V}_2} =\abs{\mathcal{V}_4} =\abs{\mathcal{V}_5} =2f$, and 
%$\abs{\mathcal{V}_3} = 4f$. 
Each agent in $\mathcal{G}_2$ has incoming links from $2f$ agents of $\mathcal{G}_1$. 
Every agent in $\mathcal{G}_3$ 
has $f$ links from $\mathcal{G}_1$ and $f$ links from $\mathcal{G}_2$. 
Likewise,
each agent of $\mathcal{G}_4$ has a link from every agent in $\mathcal{G}_1$ and $f$ incoming links from $\mathcal{G}_2$. 
Note that the minimum degree for a $2f$-robust graph is $2f$. However for this graph, the minimum degree of 
the agents is $2f+1$ or greater. This is an important point for the following reason. 
If a normal agent has only $2f$ neighbors,
it might ignore all of them under the asynchronous
DP-MSR algorithm, which in turn
means that the agent will remain at its current position.
It is clear that if this happens for more than two agents in the network, consensus can not take place.
The next proposition is based on this graph.

\begin{figure}
	\centering
	\vspace*{-19mm}
	\def \svgwidth{6.7cm}
	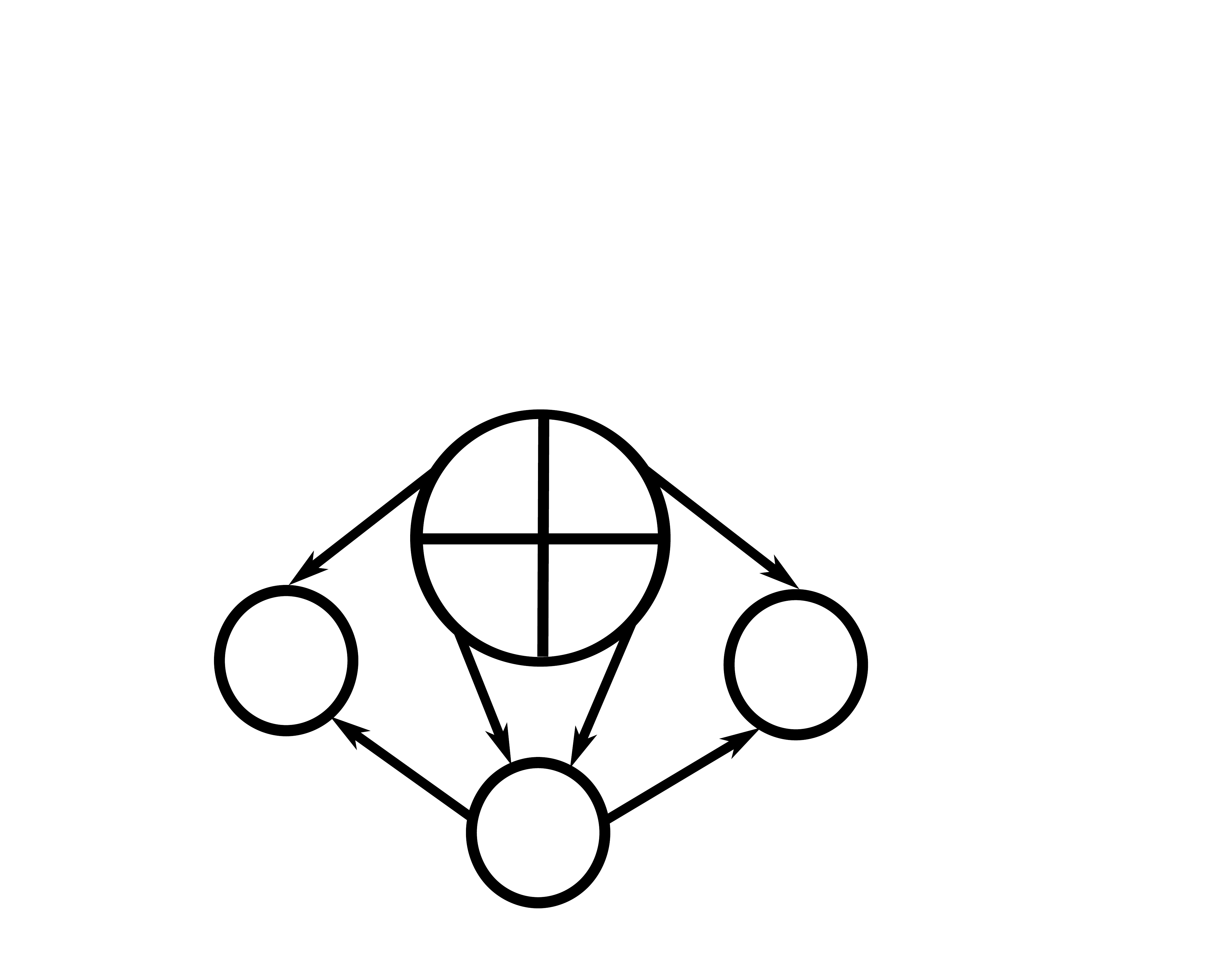
	\vspace*{-1mm}
	\caption{{\color{black}A $2f$-robust graph which fails to reach consensus with {\color{black}partial} asynchrony and delayed information.}}
\label{fig: AsyncProp}
\end{figure}
}

\begin{proposition}\label{Prop: Asynch2frobust}\rm
There exists a 2$f$-robust network with the minimum 
degree $2f+1$ under which normal agents may not achieve 
resilient consensus by the asynchronous DP-MSR algorithm. 
\end{proposition}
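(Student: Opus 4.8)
The plan is to settle this in two layers: the structural properties of the graph $\mathcal{G}$ of Fig.~\ref{fig: AsyncProp}, and then an explicit execution of the asynchronous DP-MSR on $\mathcal{G}$ in which the normal agents split permanently into two groups. On the structural side, the in-degrees are read straight off the construction — $4f-1$ at a vertex of $\mathcal{G}_1$, $3f-1$ at a vertex of $\mathcal{G}_2$ or $\mathcal{G}_3$, and $6f-1$ at a vertex of $\mathcal{G}_4$ — so the minimum in-degree is $2f+1$ (or larger) and no normal agent can lose all its neighbours to a deletion of $2f$ incident edges; $2f$-robustness is verified by the standard case analysis over disjoint nonempty pairs $\mathcal{S}_1,\mathcal{S}_2$, whose engine is that any set properly meeting $\mathcal{G}_1$ still contains a $\mathcal{G}_1$-vertex with at least $2f$ of its $4f-1$ in-neighbours outside the set while every $\mathcal{G}_2$-vertex carries $2f$ in-edges from $\mathcal{G}_1$; and one records that $\mathcal{G}$ is not $(2f+1)$-robust, so that the sufficient condition of Theorem~\ref{Theorem: DelaySuf} does not cover it — which is precisely what makes the example informative about the gap left open there.

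For the dynamical layer it suffices to exhibit one $f$-total execution of the asynchronous DP-MSR in which agreement fails, since safety holds automatically by Theorem~\ref{Theorem: DelaySuf}. I would take the malicious set $\mathcal{M}$ to consist of the $f$ vertices supplying the bridge edges between the two extreme groups, initialize the normal agents so that a low group $A$ sits at position $a$ and a high group $B$ at $b>a$, all velocities $0$, the partition being chosen so that every normal agent has at most $f$ opposite-group neighbours and at most $f$ malicious neighbours while retaining at least one same-group neighbour after any deletion of $2f$ edges, and drive the malicious agents so that their recent trajectories straddle both extremes. The crucial device is the delay: at each update of a normal agent $i$, the delays $\tau_{im}[k]$ on its malicious edges are chosen so that $i$ sees each malicious neighbour far below $a$ if $i\in A$ and far above $b$ if $i\in B$. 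A single adversary can thus be made to appear simultaneously at the bottom of one normal agent's sorted list and at the top of another's — impossible in the synchronous model — so this is exactly the extra freedom of the delayed asynchronous setting, and it is what keeps the two groups apart even though, by Theorem~\ref{NecSufSynch}, synchronous updates could not.

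With these data fixed, I would prove by induction on $k$ the invariant that every normal agent of $A$ stays at $a$ with zero velocity and every normal agent of $B$ stays at $b$ with zero velocity. Granting it up to $k$, Lemma~\ref{Lemma: DelayFinalEqu} writes the next position of an updating normal agent as a convex combination of its retained neighbours' (delayed) values; for $i\in A$ these sorted relative values are $0$ on the same-group neighbours, $b-a>0$ on the at most $f$ opposite-group neighbours, and far below $a$ on the malicious neighbours, so step~2 of the algorithm deletes exactly the opposite-group and malicious edges, leaves at least one same-group edge, and all survivors sit at $a=\hat{x}_i[k]$; hence the control \eqref{eqn: DelayControl1} reduces to $-\alpha v_i[k]=0$, so position and velocity are unchanged, and a non-updating agent reuses its previous (identical) data and does not move either. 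The case $i\in B$ is symmetric, and the normal agents inside the source part of $\mathcal{G}$ stay at $a$ because there the $f$ malicious vertices fill one extreme bucket of the filter while enough value-$a$ in-neighbours survive. Consequently $\overline{x}^\star_{\tau}=b>a=\underline{x}^\star_{\tau}$, so the agreement clause of Definition~\ref{resilient consensus} fails, which proves the proposition.

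The real obstacle is the feasibility bookkeeping behind the second layer: one must exhibit, for this particular $\mathcal{G}$, a two-colouring of the normal agents, an admissible choice of the $f$ malicious vertices, and a delay schedule on the malicious edges that simultaneously realize, at every update of every normal agent, the requirements that at the top and bottom of $i$'s sorted list exactly the opposite-group and malicious edges are the ones deleted, that at least $f$ relative values are non-negative so that no same-group edge is discarded spuriously, and that at least one same-group edge survives the filtering. This is exactly where $2f$-robustness of $\mathcal{G}$ (as opposed to $(2f+1)$-robustness) is used: the very partition that obstructs $(2f+1)$-robustness is what makes such a colouring and schedule possible. Everything downstream — the induction, the collapse of the control to $-\alpha v_i$, the velocities staying at zero, and the identification of the two limits — is routine.
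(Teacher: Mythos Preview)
Your plan is essentially the paper's: take the malicious set to be $\mathcal{G}_2$, place two normal groups at distinct constant positions $a<b$ with zero velocity, and exploit the delays so that each malicious agent presents a different face to each side; after the MSR filtering every surviving relative value is zero and no normal agent ever moves. The paper's execution differs only cosmetically: it initializes $\mathcal{G}_1$ at an intermediate value $c$ rather than folding it into one side, and it makes the malicious agents appear \emph{at} the viewing group's own position (at $a$ to $\mathcal{G}_3$, at $b$ to $\mathcal{G}_4$) rather than far outside --- so in the paper's version the malicious values survive the filter and pad the ``same-value'' bucket, whereas in yours they are deliberately discarded as outliers. Both mechanisms lead to the same frozen configuration.

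There is one concrete slip in your handling of the source block $\mathcal{G}_1$: no vertex of $\mathcal{G}_1$ has any in-edge from outside $\mathcal{G}_1$, so there are no malicious neighbours there to ``fill one extreme bucket of the filter.'' The fix is simpler than what you wrote: place $\mathcal{G}_1$ entirely in one of your two groups --- indeed the only partition of the normals meeting your ``at most $f$ opposite-group neighbours'' requirement is $A=\mathcal{G}_3$, $B=\mathcal{G}_1\cup\mathcal{G}_4$ (or its mirror), since each $\mathcal{G}_4$-vertex carries $4f$ in-edges from $\mathcal{G}_1$ and cannot be separated from it --- and then every $\mathcal{G}_1$-vertex sees only same-group values and trivially stays put. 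With that partition made explicit your induction goes through; note also that your ``at least one same-group neighbour survives'' clause fails for $\mathcal{G}_3$ when $f=1$, but harmlessly, since zero surviving neighbours still yields zero control in \eqref{eqn: DelayControl1}.
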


%{\color{black}****
%For the proof and related discussions, see \cite{DibIsh:arxiv17}. 
%}

{\color{black}
\begin{proof}
We claim that the graph in Fig.~\ref{fig: AsyncProp} is 
$2f$-robust and $(f+1,f+1)$-robust at the same time, 
but resilient consensus cannot be reached under the asynchronous DP-MSR. 
Suppose that all agents in $\mathcal{G}_2$ are malicious. 
We show a scenario in which 
by the DP-MSR algorithm, the values of the agents in $\mathcal{G}_3$ and $\mathcal{G}_4$ never agree.
Note that $\mathcal{G}_1$ is 2$f$-robust because of Lemma \ref{lemma:robust graphs} (v). By (iv) of this lemma, the graph obtained by adding $\mathcal{G}_2$ is still 2$f$-robust, since there are 2$f$ edges from $\mathcal{G}_1$. 
Similarly, adding $\mathcal{G}_3$ and $\mathcal{G}_4$ and the 
required edges 
%based on (iv) of Lemma \ref{lemma:robust graphs} 
also keeps the graph to be 2$f$-robust. 
	
	We assume that all agents start from stationary positions. 
	Assign $(a,0)$ as the initial position and velocity 
	(for $k=0$ and the prior $\tau$ steps) 
	of the agents in $\mathcal{G}_3$ 
	and $(b,0)$ as those of $\mathcal{G}_4$. 
	All agents in $\mathcal{G}_1$ are given $(c,0)$ as their initial positions and velocities, where $a < c < b$. 
The malicious agents in $\mathcal{G}_2$ take $a$ at even time steps and $b$ at odd time steps. The time delays are chosen by the following scenario: $\tau_{ij}[2m]=0$ 
and $\tau_{ij}[2m+1]=1$ for $(j,i) \in \mathcal{E}, j \in \mathcal{V}_2$, 
$i \in \mathcal{V}_3$, and $m \in \mathbb{Z}_+$. 
Also, $\tau_{ij}[2m]=1$ and $\tau_{ij}[2m+1]=0$ for $(j,i) \in \mathcal{E}$, $j \in \mathcal{V}_2$, $i \in \mathcal{V}_4$, and $m \in \mathbb{Z}_+$. All other links have no delay. 
Then, to the agents in $\mathcal{G}_3$,
the malicious agents appear to be stationary at $a$ and 
to the agents in $\mathcal{G}_4$ at $b$.
	
By executing the asynchronous DP-MSR at $k=0$, 
the agents in $\mathcal{G}_3$ will ignore every neighbor 
in $\mathcal{G}_1$ since $a < c$. 
Thus, for $i\in\mathcal{V}_3$, $\hat{x}_i[1]=a$. 
At $k=1$, the same happens for the agents $j \in \mathcal{V}_4$ 
and they stay at $b$. Since the agents in $\mathcal{G}_3$ 
are not affected by any agents with position values larger
than $a$, they remain at their positions for all $k \geq 0$. 
The same holds among the normal agents in the network, 
and therefore $\hat{x}_i[k]=a$ and $\hat{x}_j[k]=b$ for 
all $i \in \mathcal{V}_3$ and $j \in \mathcal{V}_4$. 
This shows failure in position agreement.
\end{proof}
}

\subsection{Further Results and Discussions}

Here, we provide some discussions related to the results of the paper
and their potential extensions. 

First, it is noteworthy that the result of Theorem~\ref{Theorem: DelaySuf} 
holds for the $f$-local malicious model as well, 
which is now stated as a corollary.
This follows since in the proof of the theorem, only the number 
of malicious agents in each normal agent's neighborhood plays a role. 

\begin{corollary}\label{Corollory: f-localAsycnronousSecond-Order}\rm
Under the $f$-local malicious model, the network of agents with 
second-order dynamics using the control in 
\eqref{eqn: DelayControl3} and the asynchronous DP-MSR algorithm 
reaches resilient consensus if the underlying graph is 
$(2f+1)$-robust. The safety interval is given 
by \eqref{eqn:  DelaySafetyInterval}. 
\end{corollary}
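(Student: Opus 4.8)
The plan is to observe that the sufficiency argument of Theorem~\ref{Theorem: DelaySuf} never invokes the global bound $n_M\leq f$ as such; it uses only, at two specific points, the fact that each normal agent has at most $f$ malicious neighbors, which is exactly the hypothesis of the $f$-local model. Consequently the same proof carries over almost verbatim, and I would organize it in the same two stages: first the safety condition with $\mathcal{S}_\tau$ from \eqref{eqn:  DelaySafetyInterval}, then the agreement condition, obtained by contradiction.

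For safety, I would show $\hat{x}_i[k]\in\mathcal{S}_\tau$ for all $i\in\mathcal{V}\setminus\mathcal{M}$ and $k\in\mathbb{Z}_+$. The case $k=0$ is immediate from \eqref{eqn:  DelaySafetyInterval}, and $k=1$ follows from \eqref{eqn: Delay:k=1 update}: the matrix $[I_{n_N}\ 0]-(T^2/2)L^N_\tau[0]$ is nonnegative with unit row sums, and any malicious neighbor lying outside $[\min z^N[0],\max z^N[0]]$ is discarded by step~2 of the asynchronous DP-MSR --- valid because each normal agent trims up to $f$ of the largest and $f$ of the smallest neighbor values and has at most $f$ malicious neighbors under the $f$-local model. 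For $k\geq2$ I would reuse $\overline{x}_\tau[k]$ and $\underline{x}_\tau[k]$ of \eqref{eqn: DelayVarDefine} and, via Lemma~\ref{Lemma: DelayFinalEqu}, show that $\overline{x}_\tau[k]$ is nonincreasing and $\underline{x}_\tau[k]$ nondecreasing, exactly as in Theorem~\ref{Theorem: DelaySuf}; monotonicity then confines all normal positions to $\mathcal{S}_\tau$.

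For agreement, the limits $\overline{x}^\star_\tau$ and $\underline{x}^\star_\tau$ exist by boundedness and monotonicity, and I argue by contradiction assuming $\overline{x}^\star_\tau>\underline{x}^\star_\tau$. Choosing $\beta$, $\epsilon_0$, $\epsilon$, the sequence $\{\epsilon_\ell\}$ and the time $k_\epsilon$ as in Theorem~\ref{Theorem: DelaySuf}, I form the normal-agent-only sets $\mathcal{X}_{1\tau}(k_\epsilon+\ell,\epsilon_\ell)$ and $\mathcal{X}_{2\tau}(k_\epsilon+\ell,\epsilon_\ell)$ and, via the convex-combination estimate \eqref{eqn: Delaynormalagentbounds}, show their cardinalities are nonincreasing in $\ell$. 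By $(2f+1)$-robustness, one of the two nonempty disjoint sets $\mathcal{X}_{1\tau}(k_\epsilon,\epsilon_0)$, $\mathcal{X}_{2\tau}(k_\epsilon,\epsilon_0)$ contains an agent $i$ with $2f+1$ incoming edges from outside it; under the $f$-local model at most $f$ of these come from malicious agents, so at least $f+1$ come from normal agents outside the set, which by the estimate above remain outside for the next $\tau$ steps. Since DP-MSR removes at most $f$ of the extreme neighbor values, at least one such surviving normal neighbor is used in each update of $i$, so when $i$ updates at time $k_\epsilon+\tau$ its position falls below $\overline{x}^\star_\tau-\epsilon_{\tau+1}$ (resp. rises above $\underline{x}^\star_\tau+\epsilon_{\tau+1}$). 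This strictly reduces the cardinality of one set every $\tau+1$ steps, so after at most $(\tau+1)n_N$ steps one set is empty and stays empty, contradicting the limits; hence $\overline{x}^\star_\tau=\underline{x}^\star_\tau$. Velocity convergence $v_i^N[k]\to0$ then follows exactly as at the end of the proof of Theorem~\ref{NecSufSynch}.

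I do not expect a genuine obstacle here: the only thing requiring care is the verification that the step ``$2f+1$ outside edges, at most $f$ malicious, hence at least $f+1$ normal surviving the trim'' is all the original argument needed, and since $(2f+1)$-robustness is itself a local notion, no global bound on $n_M$ is used. Once this is checked, the corollary is immediate.
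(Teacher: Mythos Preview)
Your proposal is correct and follows exactly the paper's own approach: the paper proves the corollary in a single sentence, noting that in the sufficiency proof of Theorem~\ref{Theorem: DelaySuf} only the number of malicious agents in each normal agent's neighborhood plays a role, so the argument carries over to the $f$-local model verbatim. Your more detailed walkthrough simply unpacks this observation and identifies the two places (step~2 of DP-MSR and the ``$2f+1$ outside edges $\Rightarrow$ at least $f+1$ normal'' step) where the local bound is invoked, which is precisely what the paper is asserting implicitly.
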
 

{\color{black}
The underlying graph $\mathcal{G}$ in the results thus far has been fixed, 
that is, the edge set 
$\mathcal{E}$ is time invariant. This clearly limits application of
our results to, e.g., multi-vehicle systems. 
We now present an extension for a partially asynchronous time-varying network
$\mathcal{G}_0[k]=(\mathcal{V},\mathcal{E}[k])$, where the graph $\mathcal{G}_0[k]$
plays a role of the original graph $\mathcal{G}$ in the previous discussions. 
The following definition is the important for the development:

%Note that the results in this paper are all valid for the second-order networks whose underlying graphs are fixed, i.e. with time-invariant $\mathcal{E}$. In \cite{LeBlancPaper}, for the first-order synchronous agent networks, 

%there is a natural extension for the time-varying $\mathcal{G}[k]$ and based on that $\mathcal{G}[k]=(\mathcal{V},\mathcal{E}[k])$ is enough to be $(f+1,f+1)$-robust at each time $k$. 

%The same condition is valid here, again for second-order synchronous networks. However, the assumption on robustness of the graph \textit{at each time $k$} is quite conservative and might bring difficulties in practice. 

\begin{definition}\rm
The time-varying graph $\mathcal{G}_0[k]=(\mathcal{V},\mathcal{E}[k])$ is said to be
jointly $(2f+1)$-robust if there exists a fixed $h$ such that 
the union of $\mathcal{G}_0[k]$ over each consecutive $h$ steps is $(2f+1)$-robust.
\end{definition}

It is again assumed that time delays are upper
bounded by $\tau$ as in \eqref{eqn:delay_tau}
and moreover that $\tau$ is no less than the horizon parameter $h$ of $\mathcal{G}_0[k]$ as
%We also notice that in a time-varying network, each normal agent $i$ 
%whenever some information are not available, the outdated links 
%from $\tau$ to time steps back can be used. 
%With the following additional assumption, the sufficient condition is obtained:
\begin{equation}\label{As: assumption}
 h \leq \tau.
\end{equation} 
We now state the extension for time-varying networks. 

\begin{corollary}\label{Cor: Time-varying}\rm
Under the $f$-total/$f$-local malicious model, 
the time-varying network $\mathcal{G}_0[k]$ 
of agents with second-order dynamics using 
the control in \eqref{eqn: DelayControl3} and 
the asynchronous DP-MSR algorithm reaches resilient consensus
if $\mathcal{G}_0[k]$ is jointly $(2f+1)$-robust with the condition 
\eqref{As: assumption}. The safety interval is given 
by \eqref{eqn:  DelaySafetyInterval}.
\end{corollary}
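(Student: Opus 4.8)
The plan is to mirror the proof of Theorem~\ref{Theorem: DelaySuf}, letting the time-varying graph $\mathcal{G}_0[k]$ play the role that the fixed graph $\mathcal{G}$ played there. The first point to note is that Lemma~\ref{Lemma: DelayFinalEqu} — in particular the nonnegativity of $\bigl[\Lambda_{1k}~\Lambda_{2k}\bigr]$ and the fact that each of its first $n_N$ rows sums to one — is obtained by direct calculation from \eqref{eqn: DelayControl3} and nowhere uses time-invariance of the edge set; it therefore holds verbatim when $L^N_\tau[k]$ is assembled from $\mathcal{G}_0[k]$ instead of from $\mathcal{G}$. Consequently the window variables $\overline{x}_\tau[k]$ and $\underline{x}_\tau[k]$ of \eqref{eqn: DelayVarDefine} remain, respectively, nonincreasing and nondecreasing in $k\geq 1$, since each normal update is still a convex combination of normal neighbors' positions (after the trimming in step~2 of DP-MSR) taken from the steps $k-1,\ldots,k-\tau-1$; an agent that elects not to update simply retains its already-bounded neighbor values, so the estimate is unaffected. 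This gives the safety condition with $\mathcal{S}_\tau$ as in \eqref{eqn:  DelaySafetyInterval}, exactly as before.

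For agreement I would again argue by contradiction, assuming $\overline{x}^\star_\tau>\underline{x}^\star_\tau$ for the (existing) limits of the two monotone sequences. With $\beta\in(0,1)$ the smallest nonzero entry over all instances of $\bigl[\Lambda_{1k}~\Lambda_{2k}\bigr]$, I would pick $\epsilon_0,\epsilon>0$ and a strictly decreasing positive sequence $\{\epsilon_\ell\}$ by the same recursion $\epsilon_{\ell+1}=\beta\epsilon_\ell-(1-\beta)\epsilon$, but over a longer index range — of length on the order of $(\tau+h+1)n_N$, which by \eqref{As: assumption} is at most $(2\tau+1)n_N$ — and, choosing $k_\epsilon$ large, define the shrinking sets $\mathcal{X}_{1\tau}(k_\epsilon+\ell,\epsilon_\ell)$ and $\mathcal{X}_{2\tau}(k_\epsilon+\ell,\epsilon_\ell)$ of normal agents as in the theorem. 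The one-step bound \eqref{eqn: Delaynormalagentbounds} carries over unchanged, so once a normal agent leaves one of these sets it remains outside thereafter.

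The genuinely new ingredient is the extraction of the critical normal agent. Over a block of $h$ consecutive steps starting at $k_\epsilon$, \emph{joint} $(2f+1)$-robustness of $\mathcal{G}_0[k]$ says that the union graph $\bigcup_{m=k_\epsilon}^{k_\epsilon+h-1}\mathcal{G}_0[m]$ is $(2f+1)$-robust, so one of the two nonempty disjoint sets contains a normal agent $i$ with $2f+1$ incoming edges, in that union, originating outside the set; since at most $f$ of $i$'s neighbors are malicious — this is all that is used, which is precisely why the $f$-local model goes through identically — at least $f+1$ of those in-edges come from normal agents that, by the previous paragraph, stay outside the corresponding set over the whole block. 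Here the hypothesis $h\leq\tau$ is exactly what is needed: each such edge is present at some time $m$ within the block, and when agent $i$ performs its next update it reads that neighbor's value delayed by at most $\tau$ steps, i.e. a value from a time no earlier than $k_\epsilon-\tau$ and no later than $k_\epsilon+h-1$, a window that the shrinking argument has already controlled, so the value actually used lies below $\overline{x}^\star_\tau-\epsilon_\ell$ (resp. above $\underline{x}^\star_\tau+\epsilon_\ell$) for the relevant index $\ell$. As in the theorem, agent $i$'s position after that update is then bounded by $\overline{x}^\star_\tau-\epsilon_{\ell+1}$, so one of the two sets of normal agents strictly loses an element every at most $\tau+h+1$ steps and hence becomes empty within at most $(2\tau+1)n_N$ steps, contradicting that $\overline{x}^\star_\tau$ and $\underline{x}^\star_\tau$ are the limits. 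Thus $\overline{x}^\star_\tau=\underline{x}^\star_\tau$, and velocity agreement $v_i^N[k]\to 0$ follows from \eqref{eq3} together with \eqref{eqn:alpha_T} exactly as in Theorem~\ref{NecSufSynch}.

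I expect the main obstacle to be the bookkeeping of the three interacting horizons — the robustness-accumulation length $h$, the delay bound $\tau$, and the per-step $\epsilon_\ell$-contraction — and, specifically, pinning down the precise horizon (an explicit multiple of $\tau n_N$, obtained using $h\leq\tau$) after which one of $\mathcal{X}_{1\tau}$, $\mathcal{X}_{2\tau}$ is guaranteed empty, while confirming that non-updating agents and the switching neighbor sets of $\mathcal{G}_0[k]$ do not break the monotonicity estimates. Once these indices are reconciled, the remainder is a transcription of the proof of Theorem~\ref{Theorem: DelaySuf}.
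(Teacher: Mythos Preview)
Your proposal is correct and takes the same approach as the paper: reduce to the proof of Theorem~\ref{Theorem: DelaySuf} after observing that nothing there depends on the underlying graph being time-invariant. The paper's own proof is in fact a single sentence to that effect, so your explicit bookkeeping of how the condition $h\leq\tau$ lets the delayed-information window absorb a full $h$-block over which joint $(2f+1)$-robustness applies---and your correspondingly lengthened $\epsilon_\ell$-horizon---is considerably more than the authors provide; you are being more careful, not less.
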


The result follows from Theorem~\ref{Theorem: DelaySuf} since
in the proof there, the time-invariant nature of the original
graph $\mathcal{G}$ is not used. We also note that similar
development is made in \cite{LeBlancPaper} for the first-order 
agent networks; there, the assumption is that $\mathcal{G}_0[k]$ is 
$(f+1,f+1)$-robust at \textit{every time} $k$, which is obviously
more conservative than that in the above theorem.}

Next, we relate the graph properties that have appeared 
in the resilient consensus problem considered here
to those in standard consensus problems without attacks
\cite{CaoMorseAnderson,JiahuQin2012,JiahuHirche,Xiao2006}.
In this paper, we have assumed that the maximum number of 
faulty agents is at most $f$. In the case of $f=0$, 
all conditions 
in both synchronous and {\color{black}partially} asynchronous 
networks reduce to that of 1-robust graphs.
By Lemma~\ref{lemma:robust graphs}\;(iv), such a graph
is equivalent to having a directed spanning tree. 
It is well known that under such a graph, 
consensus can be achieved. 
%We can obtain the subgraph of $\mathcal{G}$ consisting of
%only the normal agents by removing all edges connected
%to malicious ones.
%By (vi) of Lemma~\ref{lemma:robust graphs}, 
%this network becomes $(1,f+1)$-robust 
%Now, the obtained graph has a spanning tree by (iv) of 
%the same lemma. It is well known that under such a graph, 
%consensus can be achieved. 
%{\color{black}It is also interesting that, likewise, the sufficient condition in \ref{Cor: Time-varying} is consistent with the consensus condition on time-varying networks known as having jointly spanning tree \cite{Boyd2005}.}
% \cite{CaoMorseAnderson,JiahuQin2012,JiahuQin2011,Jiahu,Xiao2008,Xiao2006}.
%which is consistent with the results in

{\color{black}
It is further noted that in the works \cite{Azadmanesh2002,khanafer,Azadmanesh1993,LeBlancPaper,plunkett,Vaidya}, malicious agents are allowed \textit{not} to make any transmissions, which is often called omissive faults. Hence, the normal agent $i$ would wait to receive at least $d_i-f$ values 
from its neighbors before making an update.
The necessary and sufficient condition derived in 
\cite{LeBlancPaper} on the network is $(2f+1,f+1)$-robustness. 
Compared to the synchronous case, an extra $f$ 
is needed because of the omissive faults, but the analysis 
remains mostly the same.
It should be noted that omissive faults can also be tolerated 
by the MSR-type algorithms of the paper. The malicious agents knowing that 
the normal agents apply the DP-MSR algorithm
might attempt to make this kind of attack to 
cause lack of information in step~2. In such cases, if agent $i$ does not receive the information of $m_i[k]$ incoming links at time $k$, then the parameter of the 
asynchronous DP-MSR for that agent can be changed 
from $2f$ to $2(f-m_i[k])$.

%assuming that agent $i$ is aware of $d_i[k]$. 

Furthermore, the DP-MSR algorithms for $f$-total malicious models 
are resilient against another type of attacks 
studied in \cite{Feng}. There, the attackers can create 
extra links in the networks, but adding links does 
not change the value of $f$ of the networks. 
In contrast, in the case with the $f$-local model, 
the situation is slightly different. Adding a link 
might increase the number of malicious neighbors
for some normal agents. Accordingly, 
the agents must be aware of the created links
so as to remove them along with
the edges ignored in step~2 of DP-MSR.
}

Finally, we discuss the two versions of consensus problems for
second-order agent networks.
In this paper, we have considered the case where
all normal agents agree on their positions and then stop. 
The other version is where normal agents aim at agreeing on both 
their positions and velocities, so they may be moving together 
at the end \cite{Cheng-Lin,renbook2,YU}.
This can be realized via the control law 
\begin{equation*}
%\label{eqn: SecondOrderUndampedControl}
  u_i[k]
   = \sum_{j \in \mathcal{N}_i} 
      a_{ij}[k] \bigl[
                 (\hat{x}_j[k]-\hat{x}_i[k])+\alpha(v_j[k]-v_i[k])
                \bigr].
\end{equation*}
%We note that 
This version of consensus is 
difficult to deal with based on MSR-type algorithms
when malicious agents are present. 
We may extend them so that not only the positions but also 
the velocities of extreme values are ignored at the time of
updates. It is however hard to prevent situations, for example, 
where
%For example, it seems desirable that even	
the normal agents follow a malicious agent
which moves away from 
%the initial positions of 
them at a ``gentle'' speed within the normal range;
it seems unreasonable to call such a situation to be resilient.
%Instead of using the control \eqref{eqn: SecondOrderUndampedControl}, another possibility is to use the following control \cite{Chen}:
%\begin{equation}\label{eqn: SecondOrderVelocityConsensus}
%u^N_i[k]=\sum_{j\in\mathcal{N}_i} a_{ij}[k](v_j[k]-v_i[k]),
%\end{equation}
%where the normal agents must reach agreement between their velocities and not their positions. In this case, the problem of resilient consensus can be treated as the first-order MSR-type algorithms \cite{dibajiishiiNecSys2015,zhang1,Vaidya}. 

\section{Numerical Example}\label{sect: simulations}

%In this section, we show through a numerical example that the proposed 
%resilient consensus algorithms are effective.

Consider the network of five agents in Fig.~\ref{fig:robustgraph}.
As mentioned earlier, this graph is $(2,2)$-robust. 
One of the agents is set to be malicious. 
The bound on the number of malicious agents is fixed at $f=1$.
We use the sampling period as $T=0.3$
and the parameter as $\alpha = 3.67$.
% in \eqref{eq3}.
%The initial values for the agents are assigned as below:
%\begin{align*}
%\hat{x}v[0]
%&=\begin{bmatrix}
%5&2&10&70&4
%\end{bmatrix}^T,\\
%v[0] 
%&=\begin{bmatrix}
%0&0&0&0&0
%\end{bmatrix}^T.7
%\end{align*}

%In this example, we confirm that the $(2,2)$-robustness is essential 
%for resilient consensus under the synchronous DP-MSR approach. 
%We also illustrate the differences in
%the synchronous and asynchronous updating rules, and in particular, 
%what the malicious agent can achieve in these two different settings. We will see that this network might not reach resilient consensus under asynchronous setting.

\begin{figure}[t]
	\centering
	\vspace*{-2mm}
	\includegraphics[width=78mm,height=40mm]{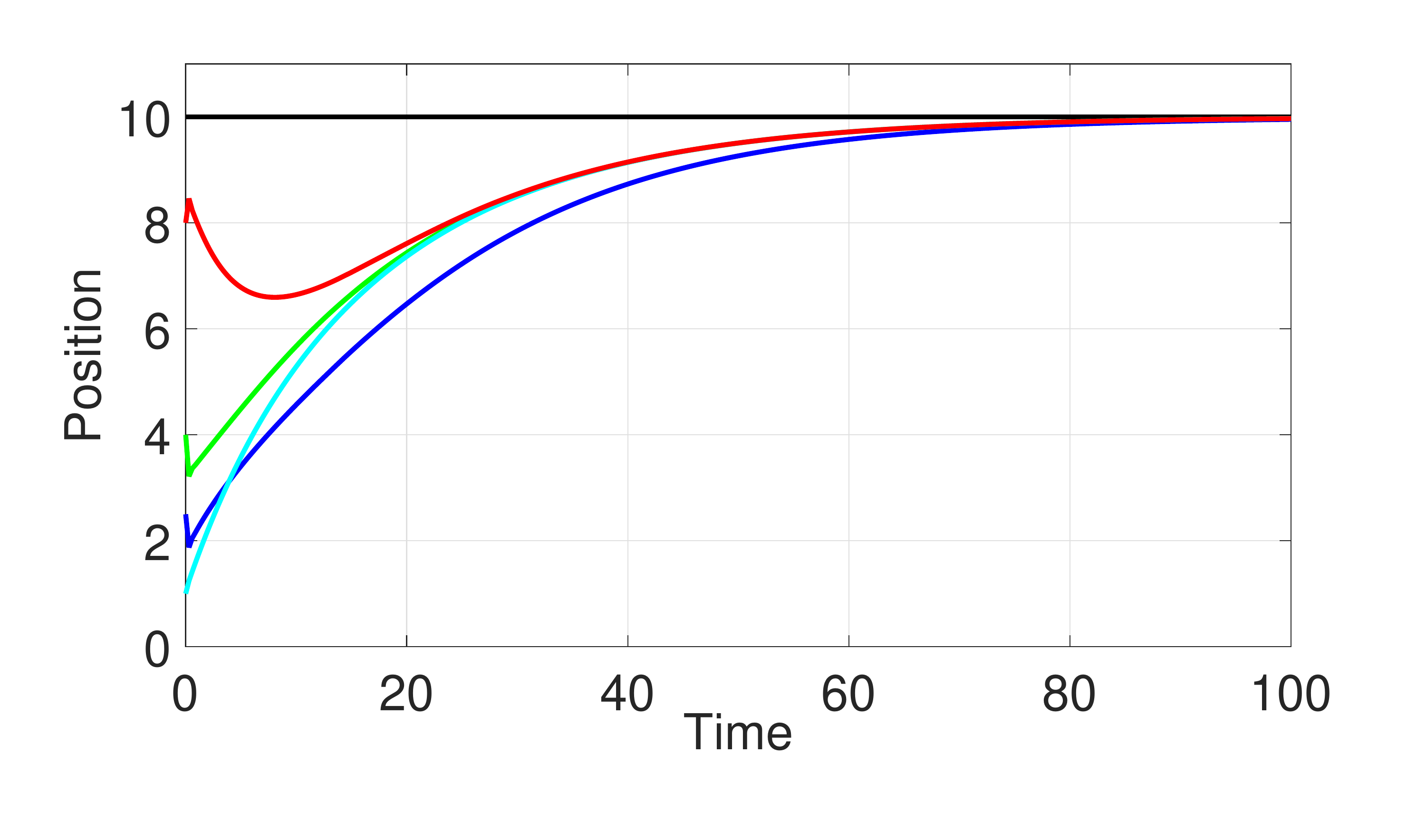}
	\vspace*{-7mm}
	\caption{Time responses for 
		the conventional algorithm over a (2,2)-robust graph.}
	\label{fig:syncposconventional}
\end{figure}

\textit{Synchronous Network}. First, we carried out simulations for the synchronous case. 
The initial states were chosen as
$\left[\hat{x}^T[0]~v^T[0]\right]
=\big[10~4~2.5~1~8~0~{-6}~{-5}~1~4\big]$. 
Agent~1 is set to be malicious and stays at its initial position 10. 
By \eqref{eqn:  SafetyInterval}, the safety interval is $\mathcal{S}=[0.19,8.54]$.
%To this end, agent~1 applies the control input $u_1[k]=0$. 
Fig.~\ref{fig:syncposconventional} shows the time responses of their
positions with the conventional control from \cite{renbook2} where 
in \eqref{eq3}, all $a_{ij}[k]$ are 
constant based on the original graph $\mathcal{G}$. 
The normal agents achieve agreement, but by following
the malicious agent. Thus, the safety condition is not met. 

 %Fig. indicates the time response  $(2,2)$-robustness is necessary to have resilient consensus. Fig.  illustrates this case where there is no agreement among normal agents. 
 
Next, we applied the DP-MSR algorithm and the results are given in 
Fig.~\ref{fig:syncposrobust}. 
%While, as before, agent~1 remains outside the safety region $\mathcal{S}$, 
This time, the normal agents are not affected by the malicious agent~1 
and achieve consensus 
inside the safety region $\mathcal{S}$, confirming Theorem~\ref{NecSufSynch}.
In the third simulation, we modified the network in Fig.~\ref{fig:robustgraph}
by removing the edge $(2,5)$
and ran the DP-MSR algorithm.
% with the malicious behavior of agent~1 as above.
The graph is in fact no longer $(2,2)$-robust. 
We see from Fig.~\ref{fig:syncposnotrobust} that
consensus is not attained. 
%and there is a certain 
%gap remaining among them 
%though the velocities go to zero. 
In fact, agent~5 cannot make any updates because it 
has only two neighbors after the removal of the edge. 

\begin{figure}[t]
	\centering
	\vspace*{-2mm}
	\includegraphics[width=78mm,height=40mm]{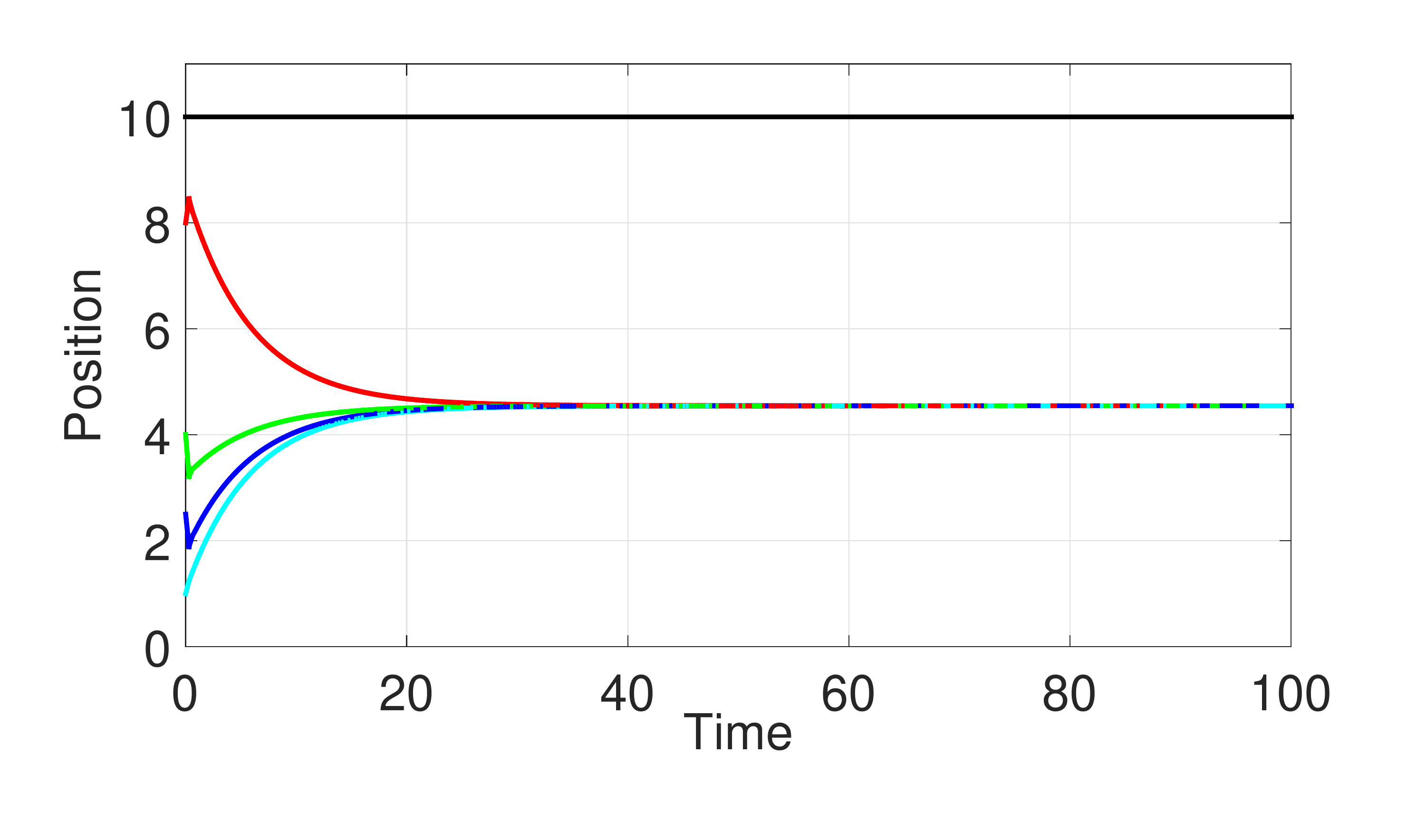}
	\vspace*{-7mm}
	\caption{Time responses for
		the synchronous DP-MSR algorithm over a (2,2)-robust graph.}
	\label{fig:syncposrobust}
%	\vspace*{-2mm}
\vspace*{1mm}
\includegraphics[width=78mm,height=40mm]{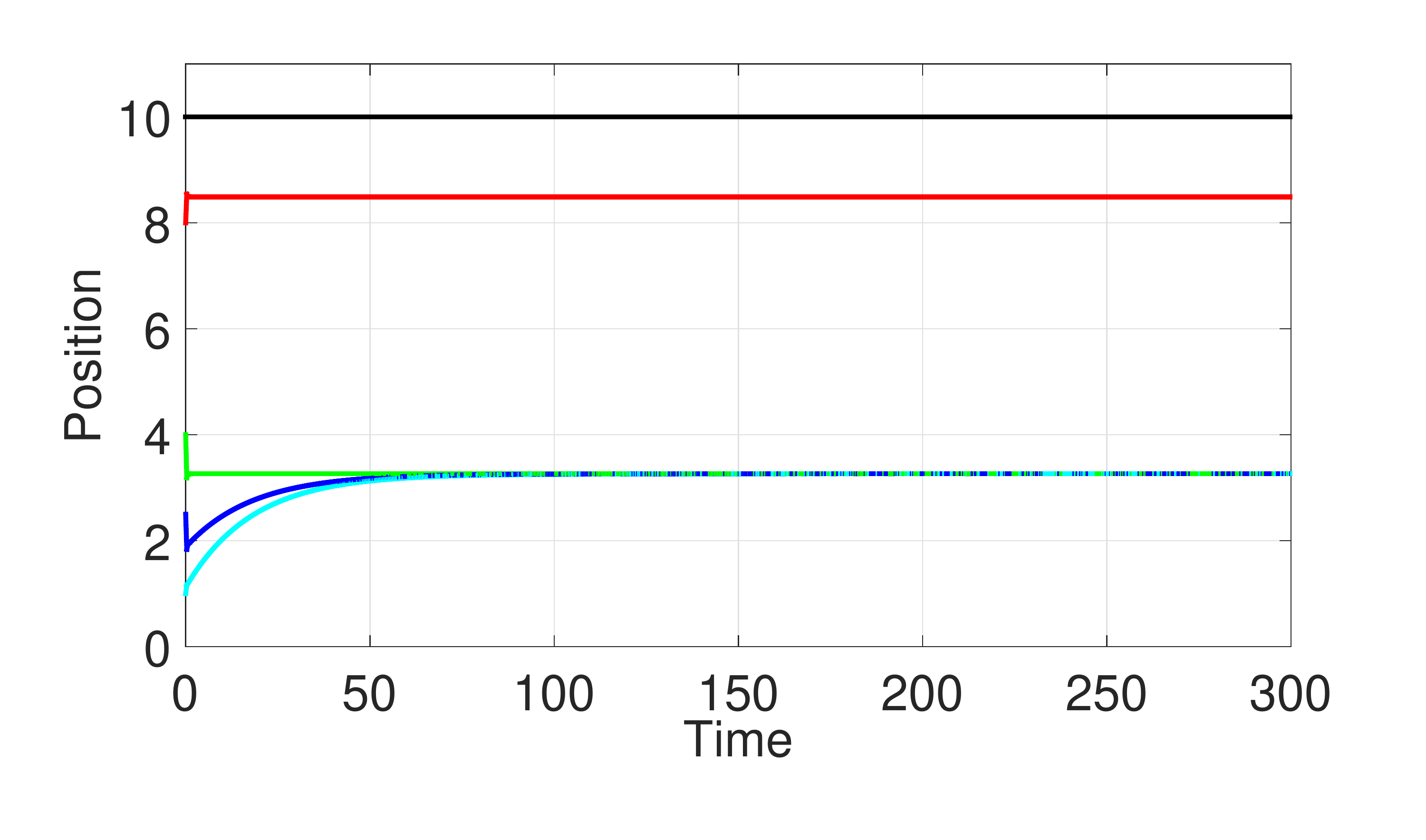}
\vspace*{-7mm}
\caption{Time responses for the synchronous DP-MSR algorithm over a non-robust graph.}
\label{fig:syncposnotrobust}
\end{figure}

\textit{Asynchronous Delayed Network}. Next, we examined the {\color{black}partially} asynchronous delayed version of the protocol. % introduced in \eqref{eqn: DelayControl1}.
This time, agent 4 is chosen to be malicious.
Here, the normal agents make updates periodically
with period 12, but at different timings:
Agents 1, 2, 3, and 5 update at time steps 
$k=12\ell+6,12\ell+9,12\ell+11,12\ell+4$ for $\ell\in\mathbb{Z}_{+}$, respectively. 
We assume that at these time steps, there is no delay for their updates. 
Since the normal agents do not receive new data at other times, 
we have $\tau=11$. By this setting, each agent deals with nonuniform 
time-varying delays. 
The initial states of the agents are given 
by $\left[\hat{x}^T[0]~v^T[0]\right]
=\big[4~10~8~9~1~0~{-1}~{-1}~4~3\big]$. 
The safety interval in \eqref{eqn:  DelaySafetyInterval} 
becomes $\mathcal{S}_\tau=[0.865,10.54]$. In this case, the malicious 
agent~4 can move so that each normal 
agent sees it at different locations. 
%It is thus possible for this agent to lead the normal ones
%Thus, this agent can lead the normal ones
%in multiple directions to prevent them from arriving at agreement. 
We set its control as $u_4[k]=({(-1)^k}/{T^2})\left(-40T+ 14(2k+1)\right)$, 
%it can be checked that 
%then, agent~4 takes positions at 
which makes agent~4 oscillate as
$\hat{x}_4[2k]=2$ and $\hat{x}_4[2k+1]=9$, $k \geq 0$.

%This time, the malicious agent moves in new position is 
In Fig.~\ref{fig:asyncposnotrobust},
the time responses of the positions of normal agents are 
presented. 
% (i.e., the oscillatory trajectory of agent~4 is not shown).
%As expected, although 
Though the underlying network is $(2,2)$-robust, 
as the necessary condition %(Proposition~\ref{Prop: DelayNec}), 
in Theorem~\ref{Theorem: DelaySuf}, 
the normal agents do not come to consensus. % among their positions. 
This is an interesting situation since the asynchronous DP-MSR cannot prevent
the normal agents from being deceived by the malicious agent. Fig.~\ref{fig:asyncposnotrobust} indicates that in fact
agents~2 and 3 stay around 
$\hat{x}_i=9$ while agents~1 and 5 remain around $\hat{x}_i=3.7$.
So the agents are divided into two groups and settled at different
positions because of the malicious behavior of agent~4, appearing
at two different positions. 
%when the corresponding agents make updates. 
%
%The oscillations of the normal agents in asynchronous updating is visible from 
%the responses in their velocities shown in Fig.~\ref{fig:asyncvelrobust}.
%For the last simulation, 
Finally, we modified the graph 
to be complete,	
%by adding enough edges to obtain a 3-robust graph. To this aim, we conducted the same simulation on the complete graph, 
%with 5 agents 
which is the only $3$-robust graph with 5 nodes. The responses in Fig.~\ref{fig:asyncposrobust} verify the sufficient condition of Theorem~\ref{Theorem: DelaySuf} for the {\color{black}partially} asynchronous setting.
%Notice that the convergence of the agents is slower than the synchronous case
%(about four times because of $\overline{k}=4$).

\begin{figure}[t]
	\centering
	\vspace*{-2mm}
	\includegraphics[width=78mm,height=40mm]{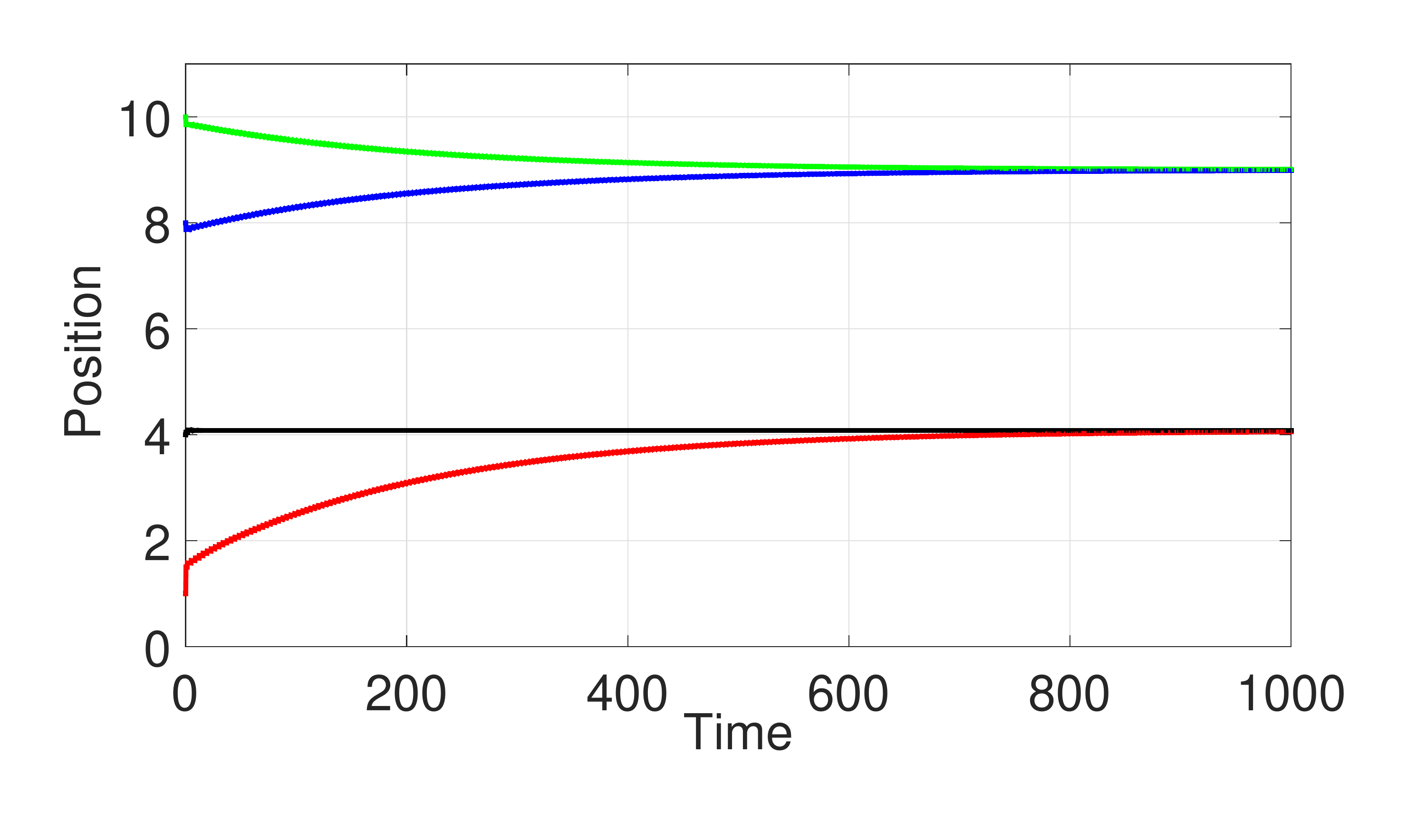}
	\vspace*{-7mm}
	\caption{Time responses for 
		the asynchronous DP-MSR algorithm over a (2,2)-robust graph.}
	\label{fig:asyncposnotrobust}
	\vspace*{1mm}
	\includegraphics[width=78mm,height=40mm]{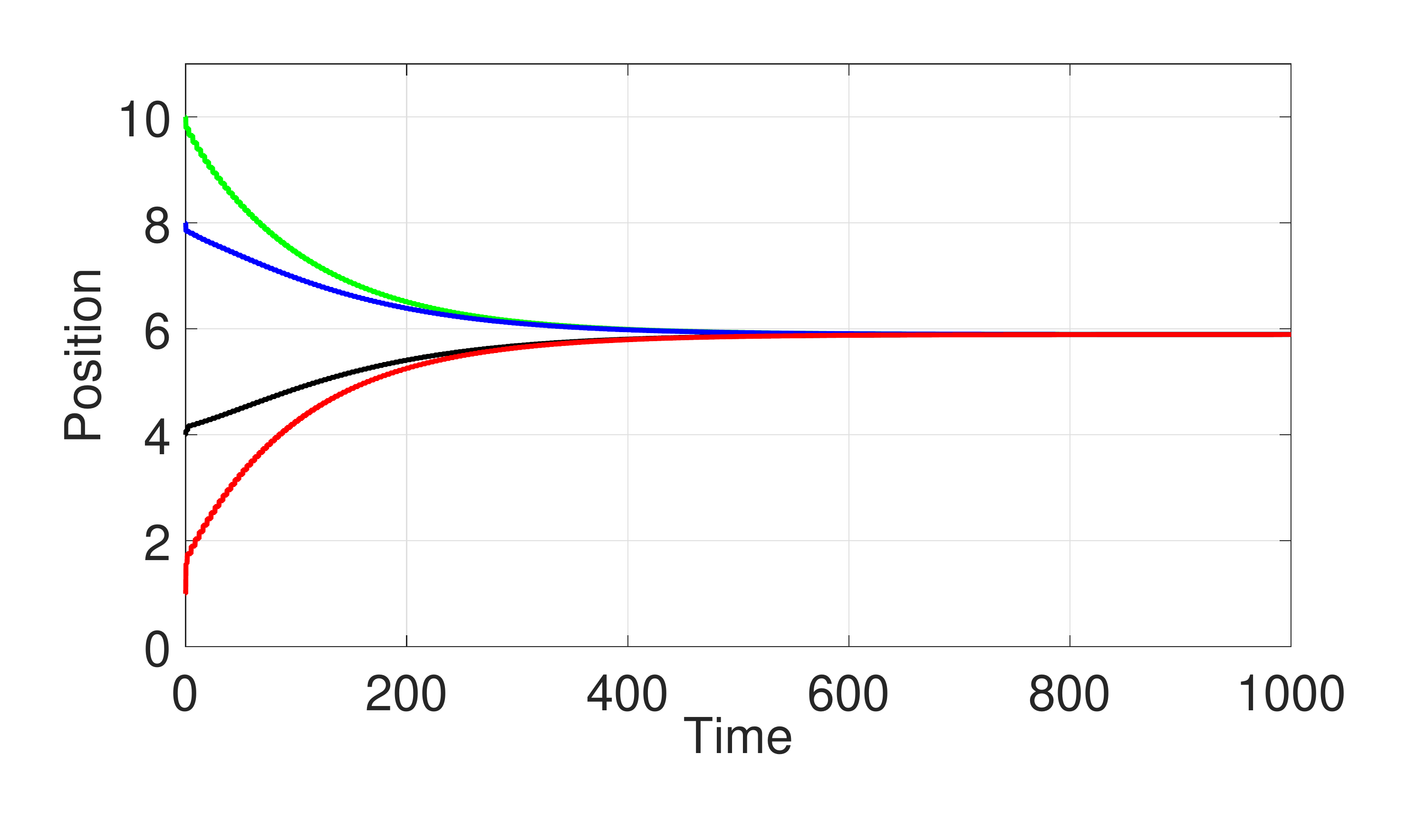}
	\vspace*{-7mm}
	\caption{Time responses for
		the asynchronous DP-MSR algorithm over a 3-robust graph.}
	\label{fig:asyncposrobust}
\end{figure}

\section{Conclusion}\label{sect: conclusion}

We have studied resilient consensus problems for
a network of agents with second-order dynamics, where
the maximum number of faulty agents in the network 
is known. % by the agents.
We have proposed MSR-type algorithms for the normal agents 
to reach consensus
%by ignoring at most $2f$ agents from the furthest neighbors. 
under both synchronous updates and {\color{black}partially} asynchronous 
updates with bounded delays.
Topological conditions in terms of robust graphs have been developed. 
Various comparisons have been made with related results 
in the literature of computer science.
%As we have discussed there, 
%Our updating scheme is more natural and 
%suitable for real-time control of autonomous mobile 
%robots/vehicles
%than those considered in the computer science literature. 
%The sufficient condition expectedly is more restrictive 
%compared to the synchronous case. 
%
In future research, we will consider using randomization in updates
%for such algorithms 
to relax the robustness conditions (see \cite{dibajiishiitempoACC2016}).

\bibliographystyle{ifac}
\bibliography{main3}  

%
%\begin{thebibliography}{10}
%\bibliographystyle{elsevier}        % Include this if you use bibtex 
%
%\printbibliography
%\end{thebibliography}

\end{document}